\newcommand\R{\mathbb{R}}
\pgfplotsset{compat=1.18}
\def\cleartheorem#1{
    \expandafter\let\csname#1\endcsname\relax
    \expandafter\let\csname c@#1\endcsname\relax
}
\renewcommand{\tilde}{\widetilde}
\newcommand{\poly}{\operatorname{poly}}
\newcommand{\nnz}{\operatorname{nnz}}
\xdef\csname m\x\endcsname{\noexpand\mathbf{\x}}
\newcommand{\otau}{\noexpand{\overline{\tau}}}
\newcommand{\omu}{\noexpand{\overline{\mu}}}
\newcommand{\osigma}{\noexpand{\overline{\sigma}}}
\newcommand{\os}{\noexpand{\overline{s}}}
\newcommand{\ot}{\noexpand{\overline{t}}}
\newcommand{\og}{\noexpand{\overline{g}}}
\newcommand{\ou}{\noexpand{\overline{u}}}
\newcommand{\ov}{\noexpand{\overline{v}}}
\newcommand{\ow}{\noexpand{\overline{w}}}
\newcommand{\ox}{\noexpand{\overline{x}}}
\newcommand{\oy}{\noexpand{\overline{y}}}
\newcommand{\oz}{\noexpand{\overline{z}}}
\newcommand{\oc}{\noexpand{\overline{c}}}
\theoremstyle{plain}
\newtheorem{theorem}{Theorem}[section]
\newtheorem{proposition}[theorem]{Proposition}
\newtheorem{lemma}[theorem]{Lemma}
\newtheorem{corollary}[theorem]{Corollary}
\newtheorem{definition}[theorem]{Definition}
\newtheorem{remark}[theorem]{Remark}
\def\N{\mathbb{N}}
\xdef\csname m\x\endcsname{\noexpand\mathbf{\x}}
\xdef\csname om\x\endcsname{\noexpand\overline{\noexpand\mathbf{\x}}}
\xdef\csname c\x\endcsname{\noexpand\mathcal{\x}}
\DeclareRobustCommand\widecheck[1]{{\mathpalette\@widecheck{#1}}}
\def\@widecheck#1#2{
    \setbox\z@\hbox{\m@th$#1#2$}
    \setbox\tw@\hbox{\m@th$#1
       \widehat{
          \vrule\@width\z@\@height\ht\z@
          \vrule\@height\z@\@width\wd\z@}$}
    \dp\tw@-\ht\z@
    \@tempdima\ht\z@ \advance\@tempdima2\ht\tw@ \divide\@tempdima\thr@@
    \setbox\tw@\hbox{
       \raise\@tempdima\hbox{\scalebox{1}[-1]{\lower\@tempdima\box
\tw@}}}
    {\ooalign{\box\tw@ \cr \box\z@}}}
\newcommand{\nocontentsline}[3]{}
\let\origtoc=\addcontentsline
\let\addcontentsline=\nocontentsline
\title{The Structural Complexity of\\ Matrix-Vector Multiplication}
\author{
  Emile Anand\thanks{Work partially done while at Cognition AI} \\
  Georgia Institute of Technology\\
  Atlanta, GA 30308 \\
  \texttt{emile@gatech.edu} \\
  \And
 Jan van den Brand \\
  Georgia Institute of Technology\\
  Atlanta, GA 30308\\
  \texttt{vdbrand@gatech.edu} \\
  \AND
 Rose McCarty\thanks{Work done while at Princeton University.} \\
  Georgia Institute of Technology\\
  Atlanta, GA 30308\\
  \texttt{rmccarty3@gatech.edu}
}
\begin{document}

\maketitle

\begin{abstract}
We consider the problem of preprocessing an $n\times n$ matrix $\mM$, and supporting queries that, for any vector $v$, returns the matrix-vector product $\mM v$. This problem has been extensively studied in both theory and practice: on one side, practitioners have developed algorithms that are highly efficient in practice, whereas on the other side, theoreticians have proven that the problem cannot be solved faster than naive multiplication in the worst-case. This lower bound holds even in the average-case, implying that existing average-case analyses cannot explain this gap between theory and practice. Hence, we study the problem for \emph{structured} matrices. 
We show that for $n\times n$ Boolean matrices of VC-dimension $d$, the matrix-vector multiplication problem can be solved with $\smash{\tilde{O}(n^2)}$ preprocessing and $\smash{\tilde O(n^{2-1/d})}$ query time.  Given the low constant VC-dimensions observed in most real-world data, our results posit an explanation for why the problem can be solved so much faster in practice. Furthermore, we show how to extend this result to the non-Boolean setting with the Pollard pseudodimension. 

Our results yield the first non-trivial upper bounds for many applications.
In previous works, the online matrix-vector (OMv) hypothesis (conjecturing that quadratic time is needed per query, even over the boolean semi-ring) was used to prove many conditional lower bounds, showing that it is impossible to compute and maintain high-accuracy estimates for effective resistance, Laplacian solvers, shortest paths, and triangle detection in graphs subject to node insertions and deletions in subquadratic time.
Yet, via a reduction to our matrix-vector-multiplication result, we show we can maintain these problems efficiently if the input is structured, providing the first subquadratic upper bounds in the high-accuracy regime.
\end{abstract}

\section{Introduction}
Computing sequential matrix-vector products is a fundamental subroutine of many iterative algorithms in machine learning. In optimization \citep{doi:10.1137/S105262349427532X,vempala2020communication,lin2023online,chaudhari2024peertopeerlearningdynamicswide}, computational geometry \citep{HarPeled2009ApproximatingST,welzl_crossing,fisikopoulos2016faster}, online algorithms \citep{pmlr-v247-lin24a,anand2024efficientreinforcementlearningglobal,Murray2021}, and dynamic algorithms \citep{liu2024approximatefullydynamicmatchingonline,JiangPW23,anand_et_al:LIPIcs.ICALP.2024.10,JinY22}, sequential matrix-vector products are essential subroutines underlying performant algorithms.
The current learning revolution is powered by hardware specifically designed to perform such products, as both neural network evaluation and back-propagation require repeated matrix-vector products \citep{10.1007/978-3-642-29737-3_42,rumelhart1986learning}.
Hence, matrix-vector products are arguably one of the most important and fundamental subroutines, with any complexity improvement having wide-ranging implications, and is thus a prevalent research topic in both theory and practice. \looseness=-1

The problem can be modeled via the following data structure task: construct a data structure that preprocesses a given $n\times n$ matrix $\mM$. After preprocessing, we want to multiply vectors with $\mM$ faster than naive matrix-vector multiplication.
Depending on the application, this matrix $\mM$ is fixed or the data structure may need to handle changes to $\mM$. For instance, when calculating probabilities of a random walk (e.g., in the PageRank algorithm \citep{Page1999}), performing the power-method, and evaluating a neural network, the matrix is fixed. Conversely, in convex optimization, such as when solving a linear or semi-definite program, applying Newton-Raphson's method, and training a neural network, the matrix is the inverse of a Hessian or is given by the network's weights, which changes from one iteration to the next \citep{arriaga2006algorithmic,anand2025feelgoodthompsonsamplingcontextual}.\looseness=-1

Beyond speedups due to improved hardware, practitioners have made tremendous progress in accelerating the computation of matrix-vector products through heuristics that run in $\mathrm{nnz}(\mM)$ (the number of non-zero entries of $\mM$) worst-case time, but which are much faster in practice. This has led to speedups in training GNNs \citep{alves2024acceleratinggraphneuralnetworks} and other algorithms \citep{floros2024algebraicvertexorderingsparse}. \looseness=-1

Conversely, from a theory perspective there are substantial lower bounds.
\cite{GronlundAL15} showed that any $\mathrm{poly}(n)$-space data structure for matrix-vector multiplication over sufficiently large fields (for instance, $\R$) require $\Omega(n^2/\log n)$ time.
This quadratic lower bound was also proven for arithmetic circuits \citep{FrandsenHM01}.
While these are worst-case lower bounds, they also hold for the average case \citep{HenzingerLS22}.
The only non-trivial upper bounds beating quadratic time are over the Boolean semi-ring (i.e., $\{0,1\}$ with $x\oplus y=\min(1, x+y)$). In a line of works that reduced the problem to finding neighborhoods in a graph and smaller algebraic products, this query complexity was improved to $O(n^2/\log^2 n)$ \citep{LIBERTY2009179,10.5555/1283383.1283490} and $O(n^{2-o(1)})$  \citep{10.5555/3039686.3039828,ChakrabortyKL18,abboud2024newgraphdecompositionscombinatorial}. However, none of these algorithms are truly subquadratic, i.e., $O(n^{2-\epsilon})$ for some constant $\epsilon>0$. \looseness=-1

Moreover, while the $\Omega(n^2)$ lower bounds only hold over finite fields, even over the Boolean semi-ring it is conjectured that no truly subquadratic time algorithm exists \citep{HenzingerKNS15}, and this is commonly referred to as the OMv (online matrix-vector multiplication) conjecture. In fact,
this has led to a recent line of works that use the conditional hardness of OMv to prove tight time lower bounds for many dynamic algorithms, such as dynamic matrix inversion \citep{BrandNS19,van2019dynamica}, dynamic subgraph connectivity \citep{henzinger_et_al:LIPIcs.ESA.2016.48}, dynamic regression \citep{JiangPW23}, dynamic range, Langerman's problem \citep{JinY22}, 
and the generalized Klee's high-dimensional measure problem \citep{10.1145/2261250.2261267}. \looseness=-1

In summary, there is a notable difference between the perspectives of this problem in theory and practice: there are heuristics that are fast in practice, but from a theoretical perspective the problem is hard, and an entire subarea of fine-grained-complexity has been built on this hardness assumption \citep{hu2024non}. Further, average-case analyses cannot explain this gap, as even the average-case is provably hard. In this work, we resolve this conflict: since the average-case (i.e., random non-structured inputs) is hard, the observed efficiency of practical algorithms must stem from some \emph{inherent structure} in real-world data. Thus, we study the complexity for structured inputs.\looseness=-1

A popular measure for structural complexity is the Vapnik-Chervonenkis (VC) dimension, which finds many applications in machine learning \citep{ShalevShwartz2014,JMLR:v20:17-612}, structural graph theory \citep{nguyen2024inducedsubgraphdensityvi,pmlr-v49-alon16,karczmarz2024subquadraticalgorithmsminorfreedigraphs}, and computational geometry \citep{HarPeled2009ApproximatingST,Chazelle1989,fisikopoulos2016faster} (see the preliminaries in \Cref{sec:prelim} for a definition of the VC-dimension). For Boolean matrices, the VC-dimension has the following, more intuitive, description: it is the size of the largest subset $S\subseteq [n]$ of columns such that the rows of the sub-matrix whose columns are restricted to those in $S$ contain every possible string in $\{0,1\}^{|S|}$. Many structured objects studied by theoreticians have inherently low VC dimension. For instance, 
star and interval graphs, and rank-one projection matrices, have VC-dimension $2$, planar graphs have VC dimension 4, and any $H$-minor free graph has VC dimension at most $|V(H)|-1$. Beyond being an established complexity measure in theory, it has also been empirically observed that real-world data has low VC dimension \citep{coudert_et_al:LIPIcs.SEA.2024.8}. We explain this observation via the following structural characterization of matrices with low VC dimension:\looseness=-1

\begin{theorem}[{Proof in \Cref{sec:characterization}}]\label{thm: structural characterization}
Consider a hereditary class of $0/1$-matrices $\mathcal{M}$, meaning that $\mathcal{M}$ is closed under row/column deletion (i.e., each matrix $\mM\in\mathcal{M}$ is still in $\mathcal{M}$ after deleting a row or column). If $\mathcal{M}$ is non-trivial, (i.e., does not contain every possible matrix), then there exists an absolute constant $c\in\N$ such that the VC-dimension of any matrix in $\mathcal{M}$ is at most $c$.
\end{theorem}

Presumably, any unknown structure common to real-world matrices should not be lost when deleting rows/columns. 
\cref{thm: structural characterization} does not exclude other matrices (which do not satisfy this hereditary property) from having constant VC-dimension, e.g., adjacency matrices of minor-free graphs.

\subsection{Our Results}

While theoretical objects such as grid-graphs, intersection graphs, and kernel-matrices have a low VC-dimension, they are not perfect representations of real-world data. For instance, while city street networks are predominantly grid-graphs, there always exists exceptions in the form of bridges or few diagonal streets.
Exceptions can also occur due to errors in measurements or other corruptions.
To capture such \emph{almost low VC-dimension objects}, we define the ``corrupted VC-dimension.''

\begin{definition}[Corrupted VC-dimension d]\label{def:corruptedvc}
    The ``corrupted VC-dimension'' of a set system $\cF$ with $m$ sets
    is the smallest $d$ such that
    there is another set system $\cF'$ of VC-dimension $\le d$ and $\cF$ can be obtained from $\cF'$ by 
    adding and/or removing each element to/from at most $O(m^{1-1/d})$ sets.
\end{definition}

For matrices $\mM\in\{0,1\}^{m\times n}$, a corrupted VC-dimension $d$ implies the existence of a matrix $\mL\in\{0,1\}^{m\times n}$ of VC-dimension $\le d$ and a `corruption' matrix $\mS\in\{-1,0,1\}^{m\times n}$ with at most $O(m^{1-1/d})$ non-zero entries per column, such that $\mM=\mL+\mS$. Note that the mere  existence of such an $\cF'$ (i.e., $\mL$) of VC-dimension $\le d$ suffices for $\cF$ (i.e., $\mM$) to have corrupted VC-dimension $d$. We do \emph{not} need to know $\cF'$ or which sets have been corrupted. 
Also, the VC-dimension of $\cF$ upper bounds the corrupted VC-dimension of $\cF$. Hence, while we state our results for matrices with corrupted VC-dimension $d$, the proven upper bounds also hold for matrices with VC-dimension $d$.

We now state the theoretical guarantees of our accelerated matrix-vector multiplication algorithm in \Cref{thm: omv_constant_vc}, where the runtimes are parameterized by the corrupted VC dimension measure. 
\begin{restatable}{theorem}{thmstaticOMv}{(Static Online Matrix-Vector Multiplication).}\label{thm: omv_constant_vc}
    If a matrix $\mM \in \{0,1\}^{m\times n}$ has corrupted VC-dimension $d$, then after an $\tilde{O}(mn)$-time preprocessing, there is a data structure $\mathcal{D}$ that can compute $\mM v$ for any $v\in \R^{n}$ in $\tilde{O}(nm^{1-1/d} + m)$ time, with high probability.
\end{restatable}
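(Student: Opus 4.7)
The plan is to preprocess a permutation $\pi$ of the $n$ columns such that, reading each row of $\mM$ in the order $\pi(1),\pi(2),\ldots,\pi(n)$, the entries change value ($0\leftrightarrow 1$) only $\tilde O(n^{1-1/d})$ times. Equivalently, each row, viewed as a $0/1$ indicator along the path, decomposes into $\tilde O(n^{1-1/d})$ monochromatic runs. Given such a $\pi$, a query $v$ is answered by first reordering $v$ along $\pi$ and building prefix sums $P_k=v_{\pi(1)}+\cdots+v_{\pi(k)}$ in $O(n)$ time, and then, for each row $i$, expressing $(\mM v)_i$ as a signed sum of $\tilde O(n^{1-1/d})$ differences $P_b-P_{a-1}$, one per run of $1$'s in that row. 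This yields the claimed query cost of $\tilde O(n+m n^{1-1/d})$.

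For a matrix of true VC-dimension $d$, such a $\pi$ exists and can be constructed in $\tilde O(mn)$ time by the Chazelle--Welzl spanning-path-with-low-crossing-number theorem, applied to the set system whose ground set is the columns and whose sets are the supports of rows; this yields $O(n^{1-1/d})$ crossings per row. For the corrupted case $\mM=\mL+\mS$, I would argue that the same $\pi$ that is good for $\mL$ is still good for $\mM$: each nonzero of $\mS$ flips one column value in one row relative to $\mL$, and a single bit flip can increase that row's crossing count along any fixed path by at most $2$. Since $\mS$ has $O(n^{1-1/d})$ nonzeros per row, the total crossings per row of $\mM$ remain $O(n^{1-1/d})$, and feeding such a $\pi$ into the query procedure above gives the desired bound.

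The main obstacle is that the preprocessor only sees $\mM$, not the decomposition $\mM=\mL+\mS$, and $\mM$ by itself may have VC-dimension much larger than $d$, so we cannot invoke Chazelle--Welzl as a black box. The plan is therefore to run the iterative low-crossing construction directly on $\mM$ and analyze it \emph{competitively}: at each level of the recursion, one selects a bipartition of the columns whose total row-weighted crossings on $\mM$ are close to the minimum. Existence of the good spanning path for $\mL$ certifies that a near-balanced bipartition exists, and the per-row sparsity of $\mS$ shows that using that bipartition on $\mM$ incurs only an additive $O(n^{1-1/d})$ error per row. A standard $\epsilon$-net / multiplicative-weights analysis, together with a random sampling step (which is where ``high probability'' enters), converts this existential guarantee into an $\tilde O(mn)$-time randomized algorithm that outputs a $\pi$ with $\tilde O(n^{1-1/d})$ crossings per row of $\mM$.

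Once $\pi$ is produced, preprocessing stores $\pi$ together with the sorted list of crossing positions per row; the total space is $\tilde O(m n^{1-1/d})$, comfortably within the $\tilde O(mn)$ preprocessing budget, and the query procedure is the prefix-sum evaluation described above. The delicate step, and the main technical content, is the competitive analysis of the path-construction algorithm: one must show that every step of the reweighting procedure makes ``average'' progress against the unknown hypothetical optimum coming from $\mL$, without ever referring to $\mL$ in the algorithm itself.
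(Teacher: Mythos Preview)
Your high-level picture is right: the existence argument (low-crossing spanning path for $\mL$, robustness under $\mS$ because each bit flip adds at most $2$ crossings on a path) and the prefix-sum query scheme are exactly what the paper uses. Where you diverge from the paper is in how you propose to \emph{construct} the good ordering from $\mM$ alone.

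You correctly identify the obstacle---you only see $\mM$, not $\mL$---and propose to attack it by running the Chazelle--Welzl reweighting on $\mM$ and analyzing it competitively against the hypothetical optimum for $\mL$. This is where your proposal is substantially harder than necessary and is also the step you leave most vague. The paper sidesteps this obstacle entirely with a cleaner idea: the existential argument (a low-crossing path for $\mM$ exists) is used \emph{only} to bound the weight of the Hamming minimum spanning tree on the columns of $\mM$. Once you know the MST has weight $\tilde O(mn^{1-1/d})$, you can forget VC-dimension altogether and just compute an $O(\log n)$-approximate Hamming MST of the column vectors in $\tilde O(mn)$ time via the Indyk--Motwani approximate nearest-neighbor machinery. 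That algorithm is completely oblivious to any structure in $\mM$; it simply returns a spanning tree whose total edge weight is within $O(\log n)$ of optimal, and by the existence argument that optimum is small. The query then runs on an arbitrary tree (not necessarily a path) via the $\Delta$-labeled tree procedure, which is the tree generalization of your prefix-sum idea.

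So the paper decouples ``existence of a good tree'' from ``algorithm to find one,'' and the latter becomes a black-box metric problem with no VC-dimension in sight. Your competitive multiplicative-weights analysis may well be workable, but it is the nonstandard and delicate part of your plan, and the paper shows it is unnecessary: replace it with an approximate Hamming MST computation and you are done.
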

The problem of approximating the VC-dimension of any set system within a $(2-\epsilon)$ accuracy for any $\epsilon>0$ is known to be $\Sigma_3^P$-hard \citep{MOSSEL2002660}; however, for our theorem, the algorithm does not need to know or compute or even approximate the VC-dimension to achieve this runtime.\looseness=-1

Our data structure is based on \cite{BjorklundL01} 
which proposed an algorithm for Boolean matrix multiplication. Their algorithm computes the matrix-matrix product via $n$ matrix-vector products, thus implicitly providing a matrix-vector multiplication data structure as well. Their complexity depends on the weight of the minimum spanning tree (MST) defined with respect to the Hamming-distance between the rows of $\mM$. However, the weight of the MST is $O(n^2)$ in the worst-case.\looseness=-1

This algorithm was recently independently rediscovered in the graph neural networks community \citep{alves2024acceleratinggraphneuralnetworks} which experimentally verified its efficiency on real-world inputs.
Our \Cref{thm: omv_constant_vc} now gives a theoretical explanation as to why these practical algorithms are much more efficient in practice than the worst-case and average-case lower bounds of $\Omega(mn)$.\looseness=-1

We obtain \Cref{thm: omv_constant_vc} via techniques from computational geometry. A line of works \citep{10.1145/73393.73397,Chazelle1989,Matoušek1991,HarPeled2009ApproximatingST} consider the following \emph{geometric intersection} data structure problem: given a set of points in $\R^d$, preprocess them. Then, given a convex polytope for each query, return whether any of the points intersect the polytope. However, prior work had exponential preprocessing time \citep{welzl_crossing} or unspecified polynomial time \citep{Matoušek1991}. Matrix vector products can be interpreted as $O(m)$ intersection problems if we only care about Boolean outputs, because in the Boolean case $(\mM v)_i \neq 0$ if and only if there is an intersection between the position of the non-zero elements in $v$ and the non-zero elements in the $i$-th row of $\mM$.
We extend these computational geometry techniques to non-Boolean outputs, and to our more general notion of corrupted VC-dimension. We provide details of this proof in \Cref{sec:mstweight}. 

In addition to our novel complexity bound, we also provide alternative algorithms to those presented in \cite{BjorklundL01,alves2024acceleratinggraphneuralnetworks}. We provide more details below.

\textbf{Dynamic Setting.} We extend \Cref{thm: omv_constant_vc} by allowing $\mM$ to undergo row and column updates. The proof is given in \Cref{sec:dynamicov}.
\begin{restatable}{theorem}{thmdynamicOMv}{(Dynamic Online Matrix-Vector Multiplication).} \label{thm:dynamicOMv}
    Given a matrix $\mM \in \{0,1\}^{m\times n}$, there is a data structure $\mathcal{D}$ with $\tilde{O}(mn)$ preprocessing time that supports row and column updates (insertions/deletions) to $\mM$ in $\tilde O(n)$ and $\tilde O(m)$ time, respectively.
    Upon querying $\mathcal{D}$ with a vector $v\in \R^n$, it outputs $\mM v$ in $\tilde O(nm^{1-1/d^*}+m)$ time, with high probability, where $d^*$ is the largest corrupted VC-dimension of $\mM$ throughout the history of its updates.
\end{restatable}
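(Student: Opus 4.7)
The plan is to dynamize the static data structure of \Cref{thm: omv_constant_vc} by combining a periodic global rebuild with two secondary static structures on buffers of recent updates. Let $\mM$ denote the current matrix and $\mM_0$ its snapshot from the last global rebuild; let $R^\star\subseteq[m]$ and $C^\star\subseteq[n]$ be the row- and column-indices modified since. I would maintain (i) the static structure $\mathcal{D}_0$ of \Cref{thm: omv_constant_vc} on $\mM_0$; (ii) the row-submatrix $\mM|_{R^\star,[n]}$ equipped with a secondary static structure $\mathcal{D}_R$; and (iii) the column-submatrix $\mM|_{[m],C^\star}$ equipped with a secondary static structure $\mathcal{D}_C$. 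The secondary structures are maintained incrementally via the Bentley--Saxe logarithmic method with deletion tombstones, amortizing to $\tilde O(n)$ per row update and $\tilde O(m)$ per column update. A global rebuild is triggered whenever $|R^\star|=m$ or $|C^\star|=n$, and its $\tilde O(mn)$ cost amortizes to the same budgets (standard de-amortization via background rebuilding then yields worst-case update bounds).

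A query on $v$ is answered by splitting $v=v'+v''$, where $v'$ vanishes on $C^\star$ and $v''$ is supported on $C^\star$. Then $\mathcal{D}_0(v')$ gives the correct output coordinates for every row outside $R^\star$ (since $\mM$ and $\mM_0$ agree on the support of $v'$ there), $\mathcal{D}_R(v')$ fills in the coordinates in $R^\star$, and $\mathcal{D}_C(v''|_{C^\star})$ accounts for the updated-column contribution on all rows. Since $v'$ and $v''$ have disjoint support, the three calls combine additively with no inclusion-exclusion.

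The main technical hurdle is ensuring the secondary structures are fast enough. For $\mathcal{D}_R$ this is immediate: each Bentley--Saxe block is a row-submatrix of $\mM^{(t)}$ for some past time $t$, and row-restricting the witness decomposition $\mL+\mS$ shows it has corrupted VC-dimension at most $d^*$; thus $\mathcal{D}_R$ queries in $\tilde O(|R^\star|\,n^{1-1/d^*}+n)\subseteq\tilde O(mn^{1-1/d^*}+n)$. For $\mathcal{D}_C$ the situation is subtler, since column-restriction preserves $\VC(\mL)$ but can concentrate all $O(n^{1-1/d^*})$ per-row corruptions of $\mS$ entirely into $C^\star$, potentially inflating the corrupted VC-dimension of the column-submatrix above $d^*$. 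A short calculation resolves this: if the column-submatrix has $s\le O(n^{1-1/d^*})$ corruptions per row, then its corrupted VC-dimension is at most $d:=\max\bigl(d^*,\,\log|C^\star|/(\log|C^\star|-\log s)\bigr)$ (or trivially $|C^\star|$ when $|C^\star|\le s$), and for this $d$ one checks that $|C^\star|^{1-1/d}=O(n^{1-1/d^*})$. Hence $\mathcal{D}_C$ queries in time $\tilde O\bigl(m\cdot|C^\star|^{1-1/d}+|C^\star|\bigr)=\tilde O(mn^{1-1/d^*}+n)$, matching the static bound.
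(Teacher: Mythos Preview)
There is a genuine correctness gap in how your two independent buffers interact: the $R^\star\times C^\star$ block of the current matrix is not faithfully represented by either one. Take $i\in R^\star$ whose row was last updated at time $t_i$ and $j\in C^\star$ whose column was last updated at time $t_j<t_i$. The current entry $\mM_{i,j}$ is whatever the row update wrote at time $t_i$, but $\mathcal{D}_C$ holds the column-$j$ snapshot from time $t_j$, which is stale at row $i$. Since on rows in $R^\star$ you output $(\mathcal{D}_R(v'))_i+(\mathcal{D}_C(v''))_i$, the second summand is wrong. Concretely: from $\mM_0=0$, update column $1$ to $e_1$, then update row $1$ to the zero row; your query on $v=e_1$ returns $1$ in coordinate $1$ instead of $0$. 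Symmetrically $\mathcal{D}_R$ is stale on columns of $C^\star$ updated after $t_i$, so switching to $\mathcal{D}_R(v)$ on $R^\star$ does not help either; and maintaining the $R^\star\times C^\star$ block explicitly can cost $\Theta(mn)$ per query before a rebuild is triggered.

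The paper avoids this by \emph{nesting} the logarithmic method rather than running two independent one-dimensional buffers. Rows are partitioned into $O(\log m)$ geometric blocks $\mN^{(0)},\dots,\mN^{(\log m)}$, and each $\mN^{(j)}$ is itself an instance of the column-only dynamic structure. A column update is pushed into every row-block (amortized $\tilde O(m)$ total) while a row update cascades only through the outer row levels (amortized $\tilde O(n)$); every reinitialization reads current entries, so each static leaf is always an exact submatrix of some past $\mM^{(t)}$ and no cross-staleness arises. Your plan is repairable by making one of the buffers nested---e.g.\ propagate column updates into $\mathcal{D}_R$ as well and answer rows in $R^\star$ directly from $\mathcal{D}_R(v)$---but then $\mathcal{D}_R$ \emph{is} the paper's structure and the snapshot $\mathcal{D}_0$ becomes a redundant layer.

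Your analysis of the corrupted VC-dimension of column-submatrices is correct and in fact more explicit than the paper's, which asserts without argument that the static query bound carries over to submatrices. A shorter route to the same conclusion: the Hamming MST on any subset of the columns has weight at most twice the full MST weight (via the metric Steiner-tree bound), and row restriction only shortens every edge, so every submatrix has MST weight $\tilde O(mn^{1-1/d^*})$ and the static query time follows.
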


\textbf{Transposed Matrices.} 
If the VC-dimension of $\mM^\top$ is $d$, then the VC-dimension of $\mM$ is at most $2^{d+1}$ \citep{Assouad1983DensitED}, so \Cref{thm: omv_constant_vc} runs in $\tilde O(nm^{1-1/2^d}+m)$ time.
To address this slow-down, we extend the algorithm to bound its complexity by the VC-dimension of $\mM^\top$. 

If $\mM\in \{0,1\}^{m\times n}$ and $\mM^\top \in \{0,1\}^{n\times m}$ have corrupted VC-dimension $d$ and $d'$ (respectively), then after an $\tilde{O}(mn)$-time preprocessing, we show we can compute $\mM v$ in time $\tilde{O}(\min\{nm^{1-1/d}+m, mn^{1-1/d'}+n\})$ with high probability.
We do not need to know $d$ or $d'$, as the algorithm automatically achieves the minimum.

\textbf{Non-Boolean Matrices.} The Pollard pseudodimension extends the VC-dimension to non-binary thresholds. It finds many applications in theoretical machine learning \citep{ShalevShwartz2014,JMLR:v20:17-612} and structural graph theory \citep{karczmarz2024subquadraticalgorithmsminorfreedigraphs} (see \Cref{subsection: pollard pseudodimension} for a definition of the \emph{Pollard pseudodimension}). For $\mM \in \R^{m\times n}$, if $d$ bounds the Pollard pseudodimension of $\mM$ and if $T$ bounds the number of unique values per column of $\mM$, then by \cref{pollard bound fast runtime}, $\mM v$ can be computed in time $\tilde{O}(Tnm^{1-1/d} + m)$. For $T\leq \tilde{O}(1)$, this is subquadratic.\looseness=-1

\subsection{Applications}
\textbf{Random Walks and PageRank by Scaling.} 
Let $\mA\in\R^{n\times n}$ be the adjacency matrix of an $n$-node graph and let $\mD \in \R^{n\times n}$ be its degree matrix. The normalized Laplacian is $\tilde\mL = \mI- \mD^{-1/2}\mA\mD^{-1/2}$. Even though $\tilde\mL$ is not Boolean, we can still use \Cref{thm: omv_constant_vc} since $\tilde\mL v = \mD^{-1/2}(w - \mA w)$ for $w=\mD^{-1/2} v$ and the product $\mA w$ can be computed via \Cref{thm: omv_constant_vc}.
The same technique extends to random-walk probability matrices.
In turn, this allows fast simulations of Markov chains to compute stationary distributions and iterative computation of the PageRank \citep{Page1999} via \Cref{thm: omv_constant_vc}. 

\textbf{Complexity-Theoretic Implications.} 
The Boolean matrix multiplication (BMM) conjecture states that no algorithm can multiply two boolean $n\times n$ matrices in $O(n^{3-\epsilon})$ time for constant $\epsilon > 0$ \citep{10.5555/892560,WilliamsW10}.
This conjecture is used to prove lower bounds on combinatorial algorithms (i.e., algorithms that do not make use of Strassen-style fast matrix multiplication methods), motivated by the property that \emph{fast matrix multiplication}, while being theoretically fast (currently $O(n^{2.372})$ \cite{WilliamsXXZ23}), is very slow in practice due to large constants hidden in $O$-notation.
Hence BMM gives a lower bound on more practical algorithms and can be used to prove that any further complexity improvement would require fast matrix multiplication.

\Cref{thm: omv_constant_vc} implies that for matrices with corrupted VC-dimension $d$, BMM can be solved in $\tilde{O}(n^{3-1/d})$ time, allowing for improved combinatorial upper bounds. The BMM problem was the original motivation for \cite{BjorklundL01,GasieniecL03}. Moreover, as demonstrated in \cite{alves2024acceleratinggraphneuralnetworks}, this algorithm is also highly efficient in practice and does not encounter the large constant of fast matrix multiplication. 
Therefore, \Cref{thm: omv_constant_vc} implies that problems with lower bounds based on BMM can be solved more efficiently on structured inputs. 

\begin{corollary} Suppose we are given matrices $\mM_1,\mM_2 \in \{0,1\}^{n\times n}$, where $\mM_i$ has corrupted VC-dimension $d_i$ and $\mM_i^\top$ has corrupted VC dimension $d_i'$ for $i\in\{1,2\}$. Let $s = \min\{d_1,d_2,d_1',d_2'\}$. Then, there is an algorithm for Boolean matrix multiplication which runs in $\tilde{O}(n^{3-1/s})$ time.
\end{corollary}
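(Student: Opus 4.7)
The plan is to reduce Boolean matrix multiplication to $n$ matrix-vector multiplications and then invoke \Cref{thm: omv_constant_vc} together with its transpose extension, choosing the preprocessing that takes advantage of the smallest of the four structural parameters. Since matrix multiplication can be computed column-by-column, the product $\mM_1\mM_2$ is determined by the $n$ vectors $\mM_1 (\mM_2)_{*,j}$ for $j=1,\ldots,n$. Alternatively, since $(\mM_1\mM_2)^\top = \mM_2^\top \mM_1^\top$, the product is also determined by the $n$ vectors $\mM_2^\top (\mM_1^\top)_{*,i}$ for $i=1,\ldots,n$. Hence, to compute $\mM_1\mM_2$ it suffices to perform $n$ matrix-vector queries against either $\mM_1$ or $\mM_2^\top$.

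First I would handle the case $s=d_1$: preprocess $\mM_1$ in $\tilde O(n^2)$ time by \Cref{thm: omv_constant_vc}, then answer the $n$ queries $\mM_1 (\mM_2)_{*,j}$ in $\tilde O(n^{2-1/d_1})$ each, for a total of $\tilde O(n^{3-1/d_1})$. The case $s=d_2$ is symmetric: preprocess $\mM_2^\top$ in $\tilde O(n^2)$ time and answer the $n$ queries $\mM_2^\top (\mM_1^\top)_{*,i}$ in $\tilde O(n^{2-1/d_2})$ each. For $s=d_1'$, I would invoke the transpose extension of \Cref{thm: omv_constant_vc} stated in the paragraph on ``Extensions to Transposed Matrices'': preprocess $\mM_1$ (whose transpose has corrupted VC-dimension $d_1'$) in $\tilde O(n^2)$ time, after which each query $\mM_1 v$ takes $\tilde O(n^{2-1/d_1'})$, totalling $\tilde O(n^{3-1/d_1'})$. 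The case $s=d_2'$ is handled analogously by preprocessing $\mM_2^\top$ using the direct bound.

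In all four scenarios, the runtime is $\tilde O(n^{3-1/s})$, which we can absorb into $O(n^{3-1/s})$ as $s$ is a constant and the hidden polylogarithmic factors contribute only an $n^{o(1)}$ overhead that is dominated by the $n^{-1/s}$ speedup. Since the algorithm in \Cref{thm: omv_constant_vc} does not need to know the corrupted VC-dimension to achieve its runtime, and the transpose variant is similarly oblivious, the algorithm can simply try all four preprocessing/query strategies in parallel (or pick any single one and still match $n^{3-1/s}$ for the corresponding parameter).

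The ``hard part'' is really only bookkeeping: verifying that the transpose extension statement in the paper indeed gives the symmetric $\tilde O(n\cdot n^{1-1/d'}+n)$ query bound that the corollary requires, and checking that the four choices collectively achieve $\min\{d_1,d_2,d_1',d_2'\}$. Both are immediate from the stated results, so no additional technical lemmas are needed; the corollary is essentially a direct consequence of \Cref{thm: omv_constant_vc} and its transpose variant, combined with the column-wise reduction from matrix multiplication to matrix-vector multiplication.
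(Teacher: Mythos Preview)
Your approach is correct and is exactly what the paper intends; the corollary is stated there without proof as an immediate consequence of \Cref{thm: omv_constant_vc} and its transpose extension, combined with the column-by-column reduction of matrix multiplication to $n$ matrix--vector products.

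One small bookkeeping slip worth fixing: in your case $s=d_2$ you call it ``symmetric'' to the $d_1$ case, which suggests applying \Cref{thm: omv_constant_vc} \emph{directly} to $\mM_2^\top$. But the direct bound on $\mM_2^\top$ uses the corrupted VC-dimension of $\mM_2^\top$, which is $d_2'$, not $d_2$. To obtain query time $\tilde O(n^{2-1/d_2})$ for products $\mM_2^\top v$ you must invoke the \emph{transpose extension} on $\mM_2^\top$, so that the relevant parameter is the corrupted VC-dimension of $(\mM_2^\top)^\top=\mM_2$, namely $d_2$. In other words, your pairings for $d_2$ and $d_2'$ should swap which variant of the theorem they call. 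Also, your claim that polylogarithmic factors can be ``absorbed'' into $O(n^{3-1/s})$ is not literally true (polylog is $n^{o(1)}$ but not $O(1)$); this is really an imprecision in the paper's own statement of the corollary, which should read $\tilde O(n^{3-1/s})$, rather than a flaw in your argument.
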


Another relevant conjecture is the OMv conjecture, which states that no algorithm can preprocess a Boolean matrix $M\in\{0,1\}^{n\times n}$ in polynomial time, and then answer $n$ queries that compute $\mM  v^{(i)}$ for an online sequence of Boolean vectors $ v^{(1)},..., v^{(n)}\in\{0,1\}^n$ in total time $O(n^{3-\epsilon})$.
Observe that here the vector $ v^{(i+1)}$ is only given after the data structure returned $\mM  v^{(i)}$, hence the use of fast matrix multiplication to compute the matrix product $\mM ~[ v^{(1)}|...| v^{(n)}]$ is ruled out.

This conjecture is used to prove conditional lower bounds for many dynamic graph problems, such as maintaining shortest paths, effective resistances, reachability, bipartite matching, triangle detection, and many others in graphs undergoing vertex insertions and deletions \citep{HenzingerKNS15}. Even using fast matrix multiplication, no algorithm can beat $O(n^2)$ time on dense graphs, which is typically equivalent to re-solving them from scratch whenever the graph changes.
Since \Cref{thm: omv_constant_vc} shows the OMv conjecture does not hold for structured matrices, it leads to improved dynamic algorithms. 

\textbf{Implications to Dynamic Algorithms.} We list some novel dynamic algorithms obtained via this technique, whose upper bounds are only achievable due to their structured nature, otherwise violating either the BMM or OMv conjecture. 
We defer the formal proofs for these results to \Cref{sec:applications}.

An important class of linear equations arising in practice have form $\mL x = b$, where $\mL$ is the Laplacian of an undirected graph $G=(V,E)$ given by $\mL \coloneqq \mD-\mA$ where $\mD$ is a diagonal matrix such that $\mD_{i,i}=\mathrm{deg}(i)$ for $i\in [n]$ where $|V|=n$, and $\mA$ is the adjacency matrix of $G$. There are Laplacian solvers that run in time nearly linear in $|E|$ \citep{SpielmanT04}. We study the problem of constructing a dynamic Laplacian solver for $\mL_t x = b_t$ where the Laplacian changes over time through vertex updates to the graph, and each $b_t$ is an arbitrary new vector.

Previous upper bounds could only handle a $\poly(1/\epsilon)$ accuracy by maintaining spectral sparsifiers \citep{7782947} or vertex sparsifiers \citep{DurfeeGGP19}.  Moreover, exact or $O(\log 1/\epsilon)$-time dependencies are ruled out by the OMv conjecture since no algorithm beats naive recomputation from scratch in $\tilde{O}(|E|)=\tilde{O}(n^2)$ time via nearly-linear time Laplacian solvers \citep{SpielmanT04}. We give the first result faster than naive recomputation in the high-accuracy $\log(1/\epsilon)$ regime.

\begin{restatable}{theorem}{thmdynamiclaplaciansolver}{(Dynamic Laplacian Solver). }\label{thm: dynamic laplacian solver}
    There is a dynamic algorithm that, given a dynamic graph $G=(V,E)$ with corrupted VC-dimension bounded by $d$, maintains a Laplacian system solver. The data structure supports queries that receive a vector $b\in\R^{|V|}$ and error parameter $\epsilon > 0$. Then, in $\tilde O(n^{2-1/d}\log 1/\epsilon)$ time, the algorithm returns the (approximate) solution $x$ to $\mL x^* = b$ where $\|x-x^*\|_\mL \le \epsilon \|x^*\|_\mL$. Each vertex update to $G$ takes $\tilde O(n)$ time.
\end{restatable}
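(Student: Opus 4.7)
The plan is to reduce the Laplacian solve to a small number of matrix-vector products against $\mL_G$, each realized through \Cref{thm:dynamicOMv}, combined with a preconditioner supported by a sparse sketch of $G$. Specifically, I would solve $\mL_G x = b$ using Preconditioned Conjugate Gradient (PCG) with preconditioner $\mL_{\mH}^+$, where $\mH$ is a constant-factor spectral sparsifier of $G$ with only $\tilde O(n)$ edges. Because $\mL_{\mH} \preceq \mL_G \preceq O(1)\cdot \mL_{\mH}$, the effective condition number is $O(1)$ and PCG attains $\|x-x^*\|_{\mL_G} \le \epsilon \|x^*\|_{\mL_G}$ in $O(\log(1/\epsilon))$ iterations. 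Each iteration needs one product with $\mL_G$ and one linear solve with $\mL_{\mH}$.

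Preprocessing is threefold: (i) initialize the dynamic matrix-vector oracle from \Cref{thm:dynamicOMv} on the adjacency matrix $\mA\in\{0,1\}^{n\times n}$ in $\tilde O(n^2)$ time; (ii) store the degree vector so that $\mD v$ costs $O(n)$; and (iii) construct $\mH$, e.g., by effective-resistance sampling, in $\tilde O(n^2)$ time. Each vertex update corresponds to up to $n$ entry changes in $\mA$, which are pushed through \Cref{thm:dynamicOMv} via one row update and one column update in $\tilde O(n)$ time, one trivial update to $\mD$, and up to $O(n)$ implicit edge updates for a black-box fully-dynamic spectral sparsifier such as that of \cite{BernsteinBGNSS022}, each costing $\polylog(n)$; the overall per-vertex cost is $\tilde O(n)$.

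On a query, PCG runs for $O(\log(1/\epsilon))$ iterations. Using the scaling identity $\mL_G v = \mD v - \mA v$ (paralleling the paper's own trick for normalized Laplacians), each Laplacian matvec costs $\tilde O(n^{2-1/d})$ via \Cref{thm:dynamicOMv} plus $O(n)$ for the diagonal term. The preconditioning step solves $\mL_{\mH} z = r$ in $\tilde O(n)$ time via the Spielman--Teng nearly-linear time Laplacian solver \cite{SpielmanT04}, since $\mH$ is sparse. Summing yields query time $\tilde O(n^{2-1/d} \log(1/\epsilon))$, matching the claimed bound.

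The chief obstacle is the dynamic maintenance of a constant-factor spectral sparsifier under vertex updates within the $\tilde O(n)$ budget. Existing fully-dynamic sparsifiers are typically phrased for edge updates and against oblivious adversaries, so a vertex update must be decomposed into $O(n)$ edge updates and the amortized guarantee transferred; I would also need to ensure the random choices of the sparsifier remain independent of the PCG iterates. Minor auxiliary issues include restricting PCG to the subspace orthogonal to the connected-component indicators (standard for Laplacian systems) and verifying that if $G$ has corrupted VC-dimension at most $d$ throughout its history, then so does $\mA$, so that \Cref{thm:dynamicOMv} delivers the stated query time.
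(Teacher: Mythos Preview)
Your proposal is correct and follows essentially the same approach as the paper: maintain a constant-factor dynamic spectral sparsifier $\mH$ (the paper cites \cite{7782947} rather than \cite{BernsteinBGNSS022}) together with the dynamic OMv oracle of \Cref{thm:dynamicOMv} on $\mA$, then at query time run a preconditioned iterative method using $\mL_\mH^+$ as preconditioner, with each iteration costing one $\mL_G$-matvec via $\mL_G v=\mD v-\mA v$ and one nearly-linear solve on the sparse $\mH$. The only cosmetic difference is that the paper uses Richardson iteration rather than PCG, which avoids your minor worry about adaptivity of the sparsifier's randomness to the iterates but is otherwise equivalent in this constant-condition-number regime.
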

With the fast dynamic Laplacian solver, we maintain dynamic effective resistances, which is a critical subroutine in 
graph clustering \citep{alev2017graphclusteringusingeffective}, fault-tolerant computing \citep{Ghosh2008},
network analysis \citep{4698231,anand2023pseudorandomnessstickyrandomwalk,anand2025meanfieldsamplingcooperativemultiagent}, and biological systems \citep{APPRECIATE}, as it measures the connectivity between two vertices. The effective resistance $r_G(a,b)$ between vertices $a$ and $b$ in a graph $G$ is a graph-analog of leverage scores that represents the energy needed to route one unit of electric flow from $a$ to $b$.
Formally, let $\mathbf{e}_i$ denote the $i$-th standard basis vector. Then, for all $a,b\in V$, $r_G(a,b) = (\mathbf{e}_a-\mathbf{e}_b)^\top\mL^\dagger (\mathbf{e}_a-\mathbf{e}_b)$, where $\mL$ is the Laplacian of $G$ and $\mL^\dagger$ is the Moore-Penrose pseudoinverse of $\mL$. 

\begin{restatable}{theorem}{thmdynamiceffectiveresistance}{(Dynamic Effective Resistance). }\label{thm:effectiveresistance}
    There is a dynamic algorithm that, given a dynamic graph $\smash{G=(V,E)}$ with corrupted VC-dimension bounded by $d$, maintains effective resistances in $G$. The data structure supports queries that receive a pair of vertices $u,v\in V$ and error parameter $\smash{\epsilon > 0}$. Then, in $\smash{\tilde O(n^{2-1/d}\log 1/\epsilon)}$ time, the algorithm returns a $\smash{(1\pm\epsilon)}$-approximation of the effective resistance. Moreover, each node update to $G$ in the dynamic data structure takes $\smash{\tilde O(n)}$ time.
\end{restatable}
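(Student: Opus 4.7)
The plan is to reduce the effective resistance query directly to the dynamic Laplacian solver of \Cref{thm: dynamic laplacian solver}. The key identity is that effective resistance is a quadratic form in $\mL^\dagger$: for any $u,v \in V$, setting $b = \mathbf{e}_u - \mathbf{e}_v$, we have $r_G(u,v) = b^\top \mL^\dagger b = b^\top x^*$ where $x^*$ solves $\mL x^* = b$. Since $b$ is in the image of $\mL$ (its entries sum to zero, so it lies in the orthogonal complement of the kernel $\mathrm{span}(\mathbf{1})$ on each connected component), this system is consistent.

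On a query $(u,v,\epsilon)$, I would form $b = \mathbf{e}_u - \mathbf{e}_v$, invoke the dynamic Laplacian solver of \Cref{thm: dynamic laplacian solver} with right-hand side $b$ and error parameter $\epsilon$ to obtain $x$ satisfying $\|x - x^*\|_\mL \le \epsilon \|x^*\|_\mL$, and return the estimate $\hat r = b^\top x$. The computation of $b$ and the final inner product take $O(1)$ and $O(n)$ time respectively, so the query cost is dominated by the $\tilde O(n^{2-1/d} \log(1/\epsilon))$ bound from \Cref{thm: dynamic laplacian solver}. Vertex updates to $G$ are simply forwarded to the underlying solver, inheriting its $\tilde O(n)$ update time.

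The approximation analysis is a short Cauchy--Schwarz argument in the $\mL$-seminorm. Using $b = \mL x^*$ and the symmetry of $\mL$,
\begin{align*}
|\hat r - r_G(u,v)| \;=\; |b^\top (x - x^*)| \;=\; |(x^*)^\top \mL (x - x^*)| \;\le\; \|x^*\|_\mL \|x - x^*\|_\mL \;\le\; \epsilon \|x^*\|_\mL^2.
\end{align*}
Since $\|x^*\|_\mL^2 = (x^*)^\top \mL x^* = (x^*)^\top b = r_G(u,v)$, this yields $|\hat r - r_G(u,v)| \le \epsilon \cdot r_G(u,v)$, which is exactly the desired $(1 \pm \epsilon)$ multiplicative guarantee.

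The main thing to verify carefully, and the only real delicacy, is that the guarantee of \Cref{thm: dynamic laplacian solver} is phrased in the $\mL$-seminorm relative to the true solution rather than in $\ell_2$; this is precisely what makes the Cauchy--Schwarz step above go through without any dependence on the condition number of $\mL$ (which could otherwise be as bad as $\poly(n)$ and spoil the $\log(1/\epsilon)$ scaling). A minor bookkeeping point is that the solver returns a representative $x$ only defined modulo the kernel of $\mL$ on each connected component, but since $b \perp \mathbf{1}$ on every component, the inner product $b^\top x$ is invariant under shifts in the kernel, so the returned estimate is well-defined regardless of which representative the solver picks.
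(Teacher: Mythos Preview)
Your proposal is correct and follows the same approach as the paper: reduce effective resistance to $b^\top \mL^\dagger b$ with $b = \mathbf{e}_u - \mathbf{e}_v$ and invoke the dynamic Laplacian solver of \Cref{thm: dynamic laplacian solver}. In fact, your write-up is more complete than the paper's, which simply states that the quantity ``can be computed via the dynamic Laplacian system solver'' without spelling out the Cauchy--Schwarz step that converts the $\mL$-norm error guarantee into the claimed $(1\pm\epsilon)$-multiplicative approximation.
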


Our dynamic matrix-vector multiplication data structure also leads to faster dynamic triangle detection algorithms for graphs with corrupted VC-dimension $d$, which are critical components of fraud detection systems and community-detection algorithms.
No subquadratic time algorithm exists for vertex updates, conditional on BMM \citep{AbboudW14} or OMv \citep{HenzingerKNS15}. 
We show that these lower bounds can be broken on structured graphs.

\begin{restatable}{theorem}{thmdynamictriangle}{(Dynamic Triangle Detection). }
    There is an algorithm that, given a dynamic graph $\smash{G=(V,E)}$ with corrupted VC-dimension $d$, maintains whether $G$ has a triangle or not. Each vertex update takes $\smash{\tilde{O}(n^{2-1/d})}$ time and returns a Boolean indicator for $G$ containing a triangle.
\end{restatable}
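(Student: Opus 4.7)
The plan is to reduce dynamic triangle detection to the dynamic matrix-vector multiplication primitive of \Cref{thm:dynamicOMv}. We store the adjacency matrix $\mA\in\{0,1\}^{n\times n}$ of $G$ in the dynamic data structure, and maintain an integer counter $t$ equal to the current number of triangles in $G$. Because any triangle not containing an updated vertex $v$ is unaffected by a vertex update at $v$, it suffices, upon such an update, to recompute the number of triangles through $v$ immediately before and after the update and adjust $t$ accordingly; the existence of a triangle is then reported as the Boolean $[\,t>0\,]$.

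The key observation is that the number of triangles through a vertex $v$ admits the linear-algebraic expression $\tfrac{1}{2}\,\mA_v^\top(\mA\mA_v)$, where $\mA_v\in\{0,1\}^n$ is the $v$-th row (equivalently, column) of $\mA$: each triangle $\{v,u,w\}$ is counted twice, once by the ordered pair $(u,w)$ and once by $(w,u)$. This quantity is computed from a single matrix-vector product $\mA\mA_v$ (interpreted over $\R$, since the resulting vector is integer-valued) followed by an $O(n)$-time inner product, and therefore costs $\tilde O(n^{2-1/d})$ in total by \Cref{thm:dynamicOMv}.

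Putting this together, a vertex update at $v$ is handled as follows. First, compute $c_v := \tfrac{1}{2}\,\mA_v^\top(\mA\mA_v)$ with the current $\mA$. Second, apply the row update and column update at index $v$ to $\mA$ in the data structure, which together take $\tilde O(n)$ by \Cref{thm:dynamicOMv}. Third, recompute $c'_v := \tfrac{1}{2}\,\mA_v^\top(\mA\mA_v)$ with the updated $\mA$, and set $t \leftarrow t - c_v + c'_v$. The dominant cost is the two matrix-vector products, yielding the claimed $\tilde O(n^{2-1/d})$ per update. The counter $t$ is initialized during preprocessing, e.g.\ by inserting the vertices of $G$ into the data structure one at a time starting from the empty graph, or equivalently by performing one matrix-vector product per vertex and summing.

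The proof contains no genuinely hard algorithmic step; the main point to verify rests on two properties of \Cref{thm:dynamicOMv}. First, the data structure must accept the non-Boolean vectors $\mA_v$ viewed as elements of $\R^n$, which is permitted since \Cref{thm:dynamicOMv} handles any $v\in\R^n$. Second, the quoted runtime there depends on the largest corrupted VC-dimension of $\mA$ across the entire update history, so one must observe that the hypothesis on $G$ keeps this quantity bounded by $d$ throughout the sequence of vertex updates. Both points are immediate from the statement of \Cref{thm:dynamicOMv}, so no additional combinatorial or geometric argument is required on top of the matrix-vector primitive.
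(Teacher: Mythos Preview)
Your proposal is correct and follows essentially the same approach as the paper: both maintain a scalar proportional to the triangle count by computing $(\mA^3)_{v,v}=\mA_v^\top \mA\,\mA_v$ before and after a vertex update via one dynamic matrix-vector product with $\mA$ plus an $O(n)$ inner product. The paper phrases this as maintaining $\sum_i(\mA^3)_{i,i}$ and updating it by $3\bigl((\mA'^3)_{v,v}-(\mA^3)_{v,v}\bigr)$, while you phrase it as maintaining the exact triangle count and updating by the difference $\tfrac{1}{2}\mA_v^\top\mA\mA_v$ before and after; these are the same computation up to constant factors.
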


Our techniques lead to faster dynamic approximate algorithms for single-source distances, where we are tasked with finding the distances from a designated source vertex to every other vertex. 

\begin{restatable}{theorem}{thmdynamicsssp}{(Dynamic Approximate Single-Source Shortest Paths).}\label{thm:sssp}
    There is a dynamic algorithm that maintains $\smash{(1+\epsilon)}$-approximate single-source distances on a dynamic unweighted graph $\smash{G=(V,E)}$.
    If the corrupted VC-dimension of $G$ is bounded by $d$, each node update to $G$ takes $\smash{\tilde O({kn^{2-1/2d}}/{\epsilon})}$ time, and querying the distances for any source node takes $\smash{\tilde{O}(n^{2-1/2d}/\epsilon)}$ time. 
\end{restatable}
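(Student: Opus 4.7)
The plan is to reduce dynamic $(1+\epsilon)$-approximate single-source shortest paths to calls of the dynamic matrix-vector multiplication data structure of \Cref{thm:dynamicOMv}, via the standard combination of hop-limited BFS and random-landmark sampling. First, I would maintain the adjacency matrix $\mA \in \{0,1\}^{n\times n}$ of $G$ inside the dynamic OMv data structure, so each vertex insertion/deletion costs $\tilde O(n)$ time and each matrix-vector product $\mA w$ costs $\tilde O(n^{2-1/d})$ time. The key primitive is that for an unweighted graph, an $h$-hop BFS from a source $s$ can be simulated by $h$ Boolean matrix-vector products: setting $v^{(0)} = \unitvec_s$ and $v^{(i+1)} = v^{(i)} \vee \mA v^{(i)}$, the support of $v^{(i)}$ is exactly the set of vertices within distance $i$ of $s$, so the first level at which each $u$ appears gives $\dist(s,u)$ whenever $\dist(s,u)\le h$.

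For distances larger than $h$, I would invoke the classical hitting-set lemma: a uniformly random set $L\subseteq V$ of size $|L|=\tilde O(n/h)$ meets every shortest path of length $\ge h$ with high probability, so $\dist(s,v) \approx \min_{\ell\in L}\bigl(\dist(s,\ell)+\dist(\ell,v)\bigr)$. Composing this with the standard $O(\log n/\epsilon)$-level geometric distance bucketing that upgrades the additive-$h$ slack into a multiplicative $(1+\epsilon)$ guarantee, and setting the hop radius to $h=\Theta(n^{1/(2d)}/\epsilon)$, a query is dominated by one fresh $h$-hop BFS from $s$ plus a $|L|$-way minimum per vertex, for a total of $\tilde O(h\cdot n^{2-1/d}) = \tilde O(n^{2-1/(2d)}/\epsilon)$ time, matching the claimed query bound.

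For updates, besides the $\tilde O(n)$ cost of updating $\mA$ inside the dynamic OMv data structure, I would refresh the landmark-to-vertex distance tables lazily, re-running the $h$-hop BFS from a batch of $k$ landmarks after each vertex update (so every landmark is refreshed after every $|L|/k$ updates). This gives an amortized update cost of $\tilde O(k\cdot h\cdot n^{2-1/d}) = \tilde O(k n^{2-1/(2d)}/\epsilon)$, on top of the $\tilde O(n)$ update to $\mA$, matching the claimed update bound. The main obstacle I anticipate is controlling adaptive updates against the random landmark set, together with bounding the error from stale landmark distances between refreshes: an adaptive adversary who sees $L$ could insert long paths that avoid it, breaking the hitting-set argument. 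To handle this I would periodically resample $L$ within each batch and compose the approximation guarantees across the $O(\log n/\epsilon)$ geometric scales via a careful union bound, leveraging the oblivious-adversary guarantee inherited from \Cref{thm:dynamicOMv} as the underlying primitive.
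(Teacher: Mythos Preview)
Your core primitive—maintain the adjacency matrix inside the dynamic OMv data structure of \Cref{thm:dynamicOMv} and simulate an $h$-hop BFS by $h$ successive products $\mA w$—is exactly what the paper uses. Where you diverge is in how this primitive is turned into a $(1+\epsilon)$-approximate SSSP algorithm. The paper does \emph{not} re-derive the hitting-set / landmark machinery; it invokes the black-box reduction of \cite{BrandFN22} (stated as \Cref{lem:sssp}), which only asks for a bounded-distance oracle with threshold $4/\epsilon$, i.e.\ a \emph{constant} number of OMv products per oracle call. All of the landmark bookkeeping, the $n^{2}/k$ term, and the upgrade from bounded to $(1+\epsilon)$-approximate distances are hidden inside that lemma. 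The paper then balances the free parameter $k$ by setting $k=n/\sqrt{T}$, where $T$ is the \emph{observed} weight of the $\Delta$-labeled spanning tree; this is known after preprocessing even though the corrupted VC-dimension $d$ is not, and it is what produces the exponent $2-1/(2d)$.

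Your direct instantiation has a concrete gap. You set $h=\Theta(n^{1/(2d)}/\epsilon)$ and store only $h$-bounded distances from each landmark (``re-running the $h$-hop BFS from a batch of $k$ landmarks''). But the identity you then invoke, $\dist(s,v)=\min_{\ell\in L}\bigl(\dist(s,\ell)+\dist(\ell,v)\bigr)$, requires \emph{unrestricted} landmark distances: for a long shortest $s$--$v$ path, the landmark $\ell$ on that path may have both $\dist(s,\ell)>h$ and $\dist(\ell,v)>h$, so neither term is available from your truncated tables and the minimum is simply undefined. The phrase ``geometric distance bucketing'' does not by itself close this; any of the standard fixes (full BFS from each landmark, a landmark-hopping auxiliary graph, or a true multi-scale scheme) changes the cost accounting and no longer yields the single-level $\tilde O(h\cdot n^{2-1/d})$ bound you compute. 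Two secondary issues compound this: your hop radius $h$ depends on the unknown $d$, whereas the paper reads off the tree weight instead; and your lazy refresh leaves each landmark table up to $|L|/k$ vertex updates stale, which can change distances by far more than a $(1+\epsilon)$ factor. The cleanest repair is precisely the paper's: plug the $O(1/\epsilon)$-bounded oracle into \Cref{lem:sssp} and balance $k$ against $T$.
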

Given a
metric space $\cM$ with $n$ points and a positive integer $k\leq n$, the $k$-center problem asks one to select $k$ points, referred to as centers, such that the maximum distance of any point in the metric space to
its closest center is minimized. It is known that the $k$-center is NP-hard to $(2-\epsilon)$-approximate for any $\epsilon>0$ \citep{HSU1979209}. The best existing dynamic algorithms \citep{DBLP:conf/soda/CrucianiFGNS24} only work with edge updates and use fast matrix multiplication as a black-box. We present an algorithm without fast matrix multiplication that supports more powerful vertex updates.

\begin{restatable}{theorem}{thmdynamickcenter}{(Dynamic Approximate $k$-center).} Given an unweighted undirected graph $\smash{G=(V,E)}$, there is a dynamic algorithm for $\smash{(2+\epsilon)}$-approximate $k$-center with node update time $\smash{\tilde{O}(k n^{2-1/2d}/\epsilon)}$, where $d$ is a bound on the corrupted VC-dimension of $G$.
\end{restatable}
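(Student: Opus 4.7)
The plan is to combine Gonzalez's classical greedy $2$-approximation for $k$-center with the dynamic approximate SSSP data structure from \Cref{thm:sssp}. Recall that Gonzalez's algorithm picks $c_1 \in V$ arbitrarily and, for $i=2,\ldots,k$, sets $c_i = \argmax_{v\in V}\min_{j<i} d_G(v,c_j)$; with exact shortest-path distances this yields a $2$-approximation for $k$-center. In the dynamic setting we invoke this greedy procedure on the $(1+\epsilon')$-approximate distances returned by the SSSP data structure, instantiated with approximation parameter $\epsilon' = \Theta(\epsilon)$ chosen sufficiently small.

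After each vertex update to $G$, we first propagate the change to the SSSP data structure in $\tilde O(kn^{2-1/(2d)}/\epsilon)$ time by \Cref{thm:sssp}, and then rebuild the $k$-center solution from scratch. The rebuild performs $k-1$ SSSP queries, one per center $c_1,\ldots,c_{k-1}$ at the moment it is chosen, at cost $O(n^{2-1/(2d)}/\epsilon)$ per query. A running array $D[v] = \min_{j<i}\hat d(v,c_j)$ is maintained in $O(n)$ time per query by taking entry-wise minima with the newly returned distance vector, and the next center is then selected as $\argmax_v D[v]$ in another $O(n)$ time. Summing, the total cost per vertex update is $O(kn^{2-1/(2d)}/\epsilon)$, matching the claimed bound.

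The main obstacle is showing that greedy selection under $(1\pm\epsilon')$-approximate distances still yields a $(2+\epsilon)$-approximation. Let $C^*=\{c_1^*,\ldots,c_k^*\}$ be an optimum solution with radius $r^*=\mathrm{OPT}$, let $S=\{c_1,\ldots,c_k\}$ be the output of the greedy procedure run on $\hat d$, and let $v^\star = \argmax_v \min_{j} \hat d(v,c_j)$ with $\hat r = \min_j \hat d(v^\star,c_j)$. By the greedy selection rule, any two points among the $k+1$ points $\{v^\star\} \cup S$ are at approximate distance $\ge \hat r$, hence at true distance $\ge \hat r/(1+\epsilon')$. Pigeonhole guarantees that two of these points are covered by the same optimum center, so $\hat r/(1+\epsilon') \le 2 r^*$, i.e., $\hat r \le 2(1+\epsilon') r^*$. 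Converting the approximate radius back to the true covering radius $r = \max_v \min_j d(v,c_j)$ costs at most a factor $1/(1-\epsilon')$, giving $r \le 2(1+\epsilon')/(1-\epsilon')\cdot r^* \le (2+\epsilon)\, r^*$ for $\epsilon'$ chosen as a sufficiently small constant multiple of $\epsilon$, which concludes the argument.
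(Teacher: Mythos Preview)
Your approach is correct and is essentially the same as the paper's: both combine the dynamic approximate SSSP data structure of \Cref{thm:sssp} with Gonzalez's farthest-first traversal. The only difference is packaging. The paper invokes Theorem~7.3 of \cite{DBLP:conf/soda/CrucianiFGNS24} as a black-box reduction (which states that any $(1+\epsilon)$-approximate dynamic SSSP oracle with update time $T$ and query time $Q$ yields a $2(1+4\epsilon)$-approximate dynamic $k$-center algorithm in time $O(T+k(Q+n))$), whereas you reprove that reduction from first principles via the pigeonhole argument on the $k+1$ greedy points.

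Two minor remarks on your write-up. First, you need $k$ SSSP queries, not $k{-}1$: to define $v^\star$ and $\hat r = \min_{j\le k}\hat d(c_j,v^\star)$ you must also query from $c_k$; this does not affect the asymptotic bound. Second, under the standard convention $d\le \hat d\le(1+\epsilon')d$ for approximate distances, the final $1/(1-\epsilon')$ factor is not needed: since $\hat d\ge d$, the true covering radius already satisfies $r\le \hat r\le 2(1+\epsilon')r^*$ directly. Neither point affects correctness.
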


\subsection{Further Related Work}

\textbf{Boolean matrix 
multiplication.} A line of work shows that the complexity of matrix-vector multiplication over the Boolean semiring can be mildly subpolynomial. For instance, \cite{10.5555/1283383.1283490} showed that BMM can be computed in $O(n^3/\log n)$ time. In turn, \cite{10.5555/3039686.3039828} and \cite{abboud2024newgraphdecompositionscombinatorial} extended this to  $O(n^3/2^{\Omega(\sqrt[7]{\log n})})$.

\textbf{Structured matrices.} When the underlying matrix is Vandermonde, Toeplitz, Hankel, or Cauchy, a body of work shows that they admit fast matrix-vector multiplications through convolutional transformations \citep{motzkin1951evaluation,10.1145/335305.335380,10.1145/2631948.2631954}. Orthogonal-vector (OV) matrices are structured matrices where each row and column receives a label $v \in \{0,1\}^d$ and $\mM_{i,j}=1$ if and only if the corresponding two labels are orthogonal.
Similar to our corrupted VC-dimension results, \cite{AlmanW20} show that OV matrices allow fast matrix-vector products if there are few corruptions to the matrix. 
Matrix products can also be accelerated using geometric data structures \cite{Lingas02, FloderusJLLS18}.
Another approach to handle structured inputs is through the algorithms with predictions regime, where some learning algorithm extracts the structure and can predict information about future vectors, leading to faster matrix-vector multiplication algorithms \cite{HenzingerLSSY24,BrandFNP24}.

\textbf{Dynamic structured graphs.} Our dynamic algorithms hold for any structured graph, even if we do not know their structure.
When the structure is known, there are specialized dynamic algorithms tailored to the specific graph. These include, for instance, interval graphs \cite{Crespelle19,ChenHMPWZ24}, geometric graphs defined from Kernels \citep{AlmanCS020}, dynamic planar graphs \citep{KorhonenNPS24,abboud2016popularconjecturesbarrierdynamic}, and minor free graphs \citep{DORN20121606}.

\textbf{Matrix vector multiplication in optimization.}
In convex optimization, there is a long history on developing data structures to accelerate matrix-vector products. Here the matrices are generally projection matrices that require computation of some matrix inverse.
While historically, most research was on accelerating the maintenance of the matrix inverse \citep{Karmarkar84,Vaidya89,LeeS15,CohenLS19,LeeSZ19,Brand20,JiangSWZ21,JiangKLPS20,HuangJSTZ22,JiangNW22,JiangLSW20}, this research has progressed so far that the current bottleneck for faster linear program solvers are simple matrix-vector products \citep{BrandLSS20,BrandLL+21}.

\section{Preliminaries}\label{sec:prelim} 
\textbf{Notation.} Let $\smash{[n]=\{1,\dots,n\}}$. We use $\smash{O(\cdot)}$ to hide constant factors, and $\smash{\tilde{O}(\cdot)}$ to hide polylogarithmic factors. Let $\|\cdot\|_1$ denote the $\ell_1$ (Manhattan distance). For a matrix $\mM$, let $\mM_i$ and $\mM_{:,i}$ denote its $i$'th column/row (respectively). For a set $V$ we write $\smash{\mM\in \R^{m\times V}}$ for the $\smash{m\times|V|}$ matrix where we can index columns by $\smash{x\in V}$, i.e., $\mM_x \in \R^m$.
For any vector $\smash{x\in \R^n}$, the Hamming weight of $x$ is the number of non-zero entries of $x$ also denoted by $\nnz(x)$. For Boolean vectors $\smash{x,y \in \{0,1\}^n}$, the Hamming distance between $x$ and $y$ is given by $\smash{\|x-y\|_1}$. Finally, for any vector $x\in\R^n$ and positive definite matrix $\smash{\mL \in \R^{n\times n}}$, define the $\mL$-induced norm of $x$ by $\smash{\|x\|_\mL = \sqrt{x^\top \mL x}}$.

\textbf{VC-Dimension.} A range space (or a set system) is a pair $\mathfrak{R}=(X, R)$ where $X$ is a set and $R$ is a set of subsets of $X$. 
Let $\Pi_R(A) = \{A\cap r: r\in R\}$. The VC-dimension of $\mathfrak{R}$ is $\mathrm{VC}(\mathfrak{R})=\max\{|S|:S\subseteq X \text{ and } |\Pi_R(S)|=2^{|S|}\}$. The dual range space of $\mathfrak{R}$ is then given by $\mathfrak{R}^\top = (R, \{\{r | x \in r\}| x\in X\})$, and the dual VC-dimension of $\mathfrak{R}$ is  the VC-dimension of $\mathfrak{R}^\top$.
Any matrix $\mM \in \{0,1\}^{m\times n}$ corresponds to a set family  $\mathcal{F}_\mM = \{\{j: \mM_{i,j}=1\}: i\in[m]\}$, i.e., each row of $\mM$ is the indicator vector of some set. Through it, we can encode $\mM$ by the range space $\mathfrak{R}_M = ([n], \mathcal{F}_\mM)$. Then, the VC-dimension of $\mM$ is defined as the VC-dimension of $\mathfrak{R}_\mM$, and the dual VC-dimension is the VC-dimension of $\mM^\top$. Here, a subset $S$ of the columns of $\mM$ is said to be \emph{shattered} if each of the $2^{|S|}$ many 0/1 strings appears in some row in the restriction of $\mM$ to the columns in $S$. Then, the VC-dimension of $\mM$ is the maximum size of a shattered subset of the columns of $\mM$. Therefore, we see that the VC-dimension $d_{m,n}$ of a matrix $\mM\in\{0,1\}^{m\times n}$ satisfies $1\leq d_{m,n}\leq \log m$.

\section{The Structural Complexity of Matrix-Vector Multiplication}
\label{sec:overview}

We present a technical overview of the algorithms and main theorems.
We begin this section with an overview of \cite{BjorklundL01,alves2024acceleratinggraphneuralnetworks} who established a connection between matrix vector products and minimum spanning trees on $n$ points $\{0,1\}^m$, where the edge weights are given by the Hamming distance between two points (i.e., matrix $\mM$ defines a collection of points).

Having established this connection, we prove one of our main results in \Cref{sec:mstweight}. We prove via techniques from computational geometry that the weight of the minimum spanning tree is bounded if the matrix $\mM$ has corrupted VC-dimension at most $d$. This yields our main result in \Cref{thm: omv_constant_vc}.

\subsection{Matrix Vector Products via Minimum Spanning Trees}

We here recap the algorithm idea by \cite{BjorklundL01}, which was later independently rediscovered by \cite{alves2024acceleratinggraphneuralnetworks}. We also explain how to extend the results from $\mM$ to $\mM^\top$.

\textbf{Differential Compression. }
Given a binary matrix $\mM\in\{0,1\}^{m\times n}$, let $\mM_i$ denote its $i$'th column. We compress $\mM$ by writing each column $\mM_x$ as the sum of another column $\mM_y$ and a \emph{change vector} $\Delta_{x,y}$ such that $\mM_y = \mM_x + \Delta_{x,y}$ and $\Delta_{x,y} = \mM_x - \mM_y$. If $\mM_x$ and $\mM_y$ are similar, then $\|\Delta_{x,y}\|_1$ is likely to be smaller than the number of nonzero elements of $\mM_x$. So, it is more efficient to indirectly represent $\mM_x$ with respect to $\mM_y$, instead of through $\mM$. This requires an algorithm that finds a chain of $\Delta$'s to represent all columns of $\mM$: for each column $\mM_x$, identify a similar column $\mM_y$ such that the Hamming weight $\|\Delta_{x,y}\|_1$ required to represent $\mM_x$ is minimized subject to $\mM_y$. Since the algorithm measures the Hamming distance for each pair of columns in $\mM$, we model this compression by an undirected weighted graph $G$ on $n$ vertices, where vertex $i$ represents $\mM_i$, and the weight of each edge $(x, y)$ is $\|\Delta_{x,y}\|_1$. We can then find a Minimum Spanning Tree (MST) of $G$, which by definition spans $G$ with the minimum sum of edge weights possible. Therefore, any MST of $G$ rooted at vertex $r$ defines a chain of $\Delta$'s that, starting at $r$, can represent all the columns of $\mM$.

We formalize this intuition by defining the $\Delta$-labeled spanning tree of a matrix and its weight.

\begin{restatable}[$\Delta$-labeled spanning tree of matrix $\mM$]{definition}{defdeltalabeledtree}
A $\Delta$-labeled tree is a tree $T=(V,E)$ with some explicit root $x$, where each edge $e\in E$ is directed away from $x$ and labeled by a vector $\Delta_e\in\R^m$.  We say $T$ is a $\Delta$-labeled spanning tree for matrix $\mM\in\R^{m\times n}$ if $\mM_x=\vec0$ for root $x$, and for all edges $(i,j)\in E$ we have $\Delta_{i,j}\coloneqq \mM_j - \mM_i$.
\end{restatable}

\begin{restatable}[Weight of a $\Delta$-labeled spanning tree]{definition}{deftreeweight} For any $\Delta$-labeled spanning tree of $\mM$ denoted by $T$ with directed edge-set $E$, define $\mathrm{weight}(T)=\sum_{e\in T}\nnz(\Delta_e)$.
\end{restatable}
 The minimum spanning tree of $\mM$ is then the $\Delta$-labeled spanning tree with the smallest possible weight. Given \emph{any} spanning tree of $\mM$, we use the weight of the $\Delta$-labeled spanning tree to parameterize the runtime of the matrix-vector multiplication algorithm. Here, the lower the weight of the $\Delta$-labeled spanning tree, the faster the runtime.

\begin{restatable}[{\cite{BjorklundL01,alves2024acceleratinggraphneuralnetworks}}]{lemma}{leftmultiply}\label{lemma: correctness of Mv.  spanning tree of M^T}
    Given a $\Delta$-labeled spanning tree $T$ of $\mM^\top \in \{0,1\}^{n\times m}$ and a vector $v\in\R^n$, we can compute $\mM v$ in $O(\mathrm{weight}(T) + n+ m)$ time.
\end{restatable}

We describe this procedure in Algorithm \ref{algorithm: compute Mv if M^T has low VC}. As a quick recap, the algorithm computes $\mM v$ by starting at a root node $r\in T$ of the spanning tree and performing depth-first search (DFS). The algorithm initializes an output vector and computes $(\mM v)_r$ in $O(n)$ time. For each edge $(x,y)$ traversed by DFS on the tree, the algorithm computes $(\mM v)_y = (\mM v)_x + \Delta_{y,x}^\top v$. The complexity of this iterative procedure is proportional to $\mathrm{weight}(T) = \sum_{e\in E(T)}\delta_e$ where for each edge $e=(x,y)$ in the $\Delta$-labeled spanning tree, $\delta_e$ is the number of non-zeros in $\Delta_e$, i.e., the Hamming distance between row $x$ and row $y$ of $\mM$. For completeness, we prove \Cref{lemma: correctness of Mv.  spanning tree of M^T} in \Cref{proof: lemma mvT}.

Since the weight of the $\Delta$-labeled spanning tree of $\mM^\top$ can be different from the weight of the $\Delta$-labeled spanning tree of $\mM$, we provide a new algorithm with an analogous guarantee in \cref{correctness of Mv. spanning tree of M}.

\begin{restatable}{lemma}{rightmultiply}\label{correctness of Mv. spanning tree of M}
    Given a $\Delta$-labeled spanning tree $T$ of $\mM\in \{0,1\}^{m\times n}$ and a vector $v\in\R^n$, there is an algorithm that can compute $\mM v$ in $O(\mathrm{weight}(T)+n+m)$ time.
\end{restatable}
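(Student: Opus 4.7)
The plan is to exploit the telescoping structure of the $\Delta$-labeled spanning tree of $\mM$ to rewrite $\mM v$ as a weighted sum over the tree's edges, and then accumulate each edge's contribution into the output vector. Let $T$ be rooted at $r$ with $\mM_r=\vec{0}$, and for $j\in[n]$ let $P(r,j)$ denote the directed path in $T$ from $r$ to $j$. Telescoping the identity $\Delta_e = \mM_{\mathrm{child}(e)} - \mM_{\mathrm{parent}(e)}$ along $P(r,j)$ gives
\[
  \mM_j \;=\; \mM_r + \sum_{e\in P(r,j)} \Delta_e \;=\; \sum_{e\in P(r,j)} \Delta_e.
\]
Exchanging the order of summation then yields the key identity
\[
  \mM v \;=\; \sum_{j=1}^{n} v_j\, \mM_j \;=\; \sum_{e\in E(T)} \sigma_e\, \Delta_e, \qquad \sigma_e \;\coloneqq\; \sum_{j\in S_e} v_j,
\]
where $S_e$ is the set of column indices in the subtree hanging below the child endpoint of $e$.

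The algorithm then consists of three phases. First, I would perform a single post-order DFS on $T$ to compute, for every edge $e$, the subtree sum $\sigma_e$; using the recurrence ``subtree sum of a node equals its own $v$-value plus the subtree sums of its children,'' this takes $O(n)$ time total. Second, initialize the output vector $y\in\R^m$ to zero in $O(m)$ time. Third, iterate over the edges $e\in E(T)$ and perform the in-place sparse update $y \leftarrow y + \sigma_e\, \Delta_e$, touching only the $\nnz(\Delta_e)$ nonzero coordinates of $\Delta_e$; summing over all edges, this phase costs $O(n + \sum_e \nnz(\Delta_e)) = O(n + \mathrm{weight}(T))$. Returning $y$ gives $\mM v$ by the key identity, for a total runtime of $O(n+m+\mathrm{weight}(T))$.

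Correctness is immediate from the telescoping identity, so the only thing to verify is that each phase fits within its claimed budget, which is straightforward given that reading $v$ costs $O(n)$, writing the output costs $O(m)$, and no edge is processed outside of its sparse accumulation step. The main conceptual point, and the only substantive difference from \Cref{lemma: correctness of Mv.  spanning tree of M^T}, is that when the spanning tree is on the columns of $\mM$ one cannot compute $\mM v$ entry-by-entry via DFS (as in the row-tree case), since each tree node now contributes an entire vector in $\R^m$ rather than a single scalar entry of the output. The resolution is to first collapse $v$ into a scalar weight $\sigma_e$ per edge via the subtree sums, after which the sparse vector updates can be accumulated directly into the output. I do not anticipate a real obstacle beyond recognizing this reformulation.
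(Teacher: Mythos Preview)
Your proposal is correct and follows essentially the same approach as the paper: compute the subtree sums $\sigma_e$ by a bottom-up traversal in $O(n)$ time, then form $\sum_e \sigma_e \Delta_e$ by sparse accumulation in $O(\mathrm{weight}(T))$ time. The paper packages the telescoping identity through an auxiliary ``path-$\Delta$ matrix'' $\mN^{(T)}$ (proving separately that $\vec\Delta\vec\sigma = \mN^{(T)}v$ and that $\mN^{(T)}=\mM$), but the underlying algorithm and key identity are identical to yours.
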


To utilize the $\Delta$-labeled spanning tree of $\mM$, note that $\mM v = \sum_{i=1}^n \mM_i v_i$. In the special case where $\mM$ has only two columns, $\mM v = \mM_1 v_1 + \mM_2 v_2 = \mM_1 (v_1 + v_2) + \Delta_{1,2} v_2$. For more columns, this can be extended to require only a product with $\mM_1$ and products with each $\Delta_e$ in $T$. Thus the complexity is bounded by $O(m+n+\text{weight}(T))$. We prove \Cref{correctness of Mv. spanning tree of M} in \Cref{proof: lemma mvT}.

To minimize the time complexity, we compute a MST. In \cite{alves2024acceleratinggraphneuralnetworks} this was done naïvely, whereas in \cite{BjorklundL01} this is done via Theorem 4.2 of \cite{10.1145/276698.276876}.

\begin{lemma}[Theorem 4.2 of \cite{10.1145/276698.276876}]\label{lem:hammingmst}
    For $\epsilon > 0$, we can compute a $(1+\epsilon)$-approximate MST with respect to the Hamming distance of $n$ points in $\{0,1\}^m$ in time 
    $\tilde O (mn^{1+\frac{1}{1+\epsilon}})$.
\end{lemma}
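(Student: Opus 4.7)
The plan is to combine Borůvka's classical MST algorithm with locality-sensitive hashing (LSH) for Hamming distance. Borůvka's algorithm proceeds in $O(\log n)$ phases: in each phase, for every current connected component $C$, one selects a lightest edge leaving $C$, and then contracts all selected edges simultaneously. With exact lightest leaving edges this produces an MST. A standard exchange argument shows that if instead each selected edge is merely a $(1+\epsilon)$-factor heavier than the true lightest leaving edge, then the total tree weight is at most $(1+\epsilon)$ times $\mathrm{OPT}$, because the exact-Borůvka output is itself an MST and each approximate choice inflates its exact counterpart by at most $(1+\epsilon)$.

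The task therefore reduces to, in each phase, computing for every component $C$ a $(1+\epsilon)$-approximate nearest neighbor of some representative of $C$ among all points not in $C$, measured in Hamming distance. For this I would invoke the standard LSH construction for Hamming space: coordinate-sampling hash functions are $(r,(1+\epsilon)r,p_1,p_2)$-sensitive with $\rho = \log(1/p_1)/\log(1/p_2) = 1/(1+\epsilon)$, and the classical multi-table construction gives a data structure with $\tilde O(mn^{1+1/(1+\epsilon)})$ preprocessing time supporting $(1+\epsilon)$-approximate NN queries in $\tilde O(mn^{1/(1+\epsilon)})$ time per query (the extra $m$ factor accounts for reading and hashing an $m$-bit point). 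To answer approximate NN queries only at radius $r$, one builds such a structure for each of the $O(\log(m)/\epsilon)$ geometrically spaced radii $r \in \{1,(1+\epsilon),(1+\epsilon)^2,\dots,m\}$ and searches them in increasing order, absorbing the overhead into $\tilde O(\cdot)$.

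Putting the pieces together: across $O(\log n)$ Borůvka phases I run $O(n)$ approximate NN queries per phase, each costing $\tilde O(mn^{1/(1+\epsilon)})$, for $\tilde O(mn^{1+1/(1+\epsilon)})$ per phase; one rebuild of the LSH structure per phase also fits in the same budget. Summing the $O(\log n)$ phases is absorbed into the $\tilde O$, yielding the claimed total runtime of $\tilde O(mn^{1+1/(1+\epsilon)})$. Correctness follows from the approximate-Borůvka exchange argument together with a union bound over $O(n\log n)$ LSH queries, each boosted to success probability $1 - 1/\mathrm{poly}(n)$ at only polylogarithmic cost.

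The main obstacle is cleanly enforcing the ``different component'' constraint inside the LSH subroutine, since naively filtering returned candidates by component membership could, in late phases, waste time on many collisions all sitting in the same giant component. I would handle this in one of two ways: (i) at the start of each phase, pick a random representative per component and build the LSH only on these representatives, which shrinks the point set monotonically and thus only helps the bounds; or (ii) use a standard bichromatic-NN reduction in which, for each component, the LSH on the complementary point set is simulated by a small number of auxiliary tables, preserving the asymptotic query cost. With either fix, the remaining analysis is routine bookkeeping of LSH parameters, Borůvka depth, and the geometric radius grid, and the lemma follows.
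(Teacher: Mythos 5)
You are reproving an external result here: the paper gives no proof of \Cref{lem:hammingmst}, it simply imports it as Theorem 4.2 of \cite{10.1145/276698.276876}, so your argument has to stand on its own — and as written it has two genuine gaps.

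The central one is the claimed ``standard exchange argument'' that per-phase approximate Bor\r{u}vka is $(1+\epsilon)$-competitive. The justification that ``each approximate choice inflates its exact counterpart by at most $(1+\epsilon)$'' is not meaningful: after the first phase the components of the approximate run diverge from those of the exact run, so a later selection has no well-defined exact counterpart. What you actually know is that each selected edge $f_C$ satisfies $w(f_C)\le(1+\epsilon)\,w_{\min}(C)$ and that, by the cut property, the minimum edge leaving $C$ is an MST edge; but the same MST edge can remain the minimum leaving edge of the (growing) component containing one of its endpoints for many consecutive phases, so the naive charging only yields a bound of order $(1+\epsilon)\log n$ times $\mathrm{OPT}$. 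Making the factor $(1+\epsilon)$ work for the \emph{simultaneous} per-component variant requires a nontrivial injective (Hall-type) charging of algorithm edges to MST edges crossing the laminar family of choosing components, which you do not supply. The reductions actually used in the literature (including the one behind the cited theorem) are sequential, Kruskal/bichromatic-closest-pair style: one repeatedly merges along an edge within factor $(1+\epsilon)$ of the \emph{globally} minimum inter-component edge, and a reverse-greedy distinct-representatives argument over the coarsening partitions then does give weight at most $(1+\epsilon)\cdot\mathrm{OPT}$.

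The second gap is the different-component constraint, which you correctly flag but do not resolve. Your fix (i) — building the LSH only on one random representative per component — is unsound: the nearest \emph{representative} of another component can be arbitrarily heavier than the component's true minimum outgoing edge (the closest outside point may be near a non-representative member), so the per-component $(1+\epsilon)$ guarantee on which the whole analysis rests is destroyed, not merely the running time. Fix (ii) is only named, not specified; supporting deletions/merges or bichromatic queries inside LSH without the late-phase query cost blowing up on intra-component collisions is precisely where the cited construction does real technical work. The LSH parameterization, radius grid, and phase accounting in your sketch are fine, but without a correct approximation argument and a concrete component-filtering mechanism the lemma is not established.
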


\begin{restatable}{theorem}{constructtree}\label{thm:constructtree}
    Given a Boolean $m\times n$ matrix $\mM$, we can construct in $\tilde{O}(nm)$ time a $\Delta$-labeled spanning tree whose weight is at most a $O(\log n)$-factor larger than the MST of $\mM$.
\end{restatable}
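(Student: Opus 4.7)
The plan is to reduce to Indyk's approximate MST algorithm (\Cref{lem:hammingmst}) with a carefully chosen accuracy parameter. First I would reduce the construction of a minimum-weight $\Delta$-labeled spanning tree to the Hamming-metric MST problem on $n+1$ points in $\{0,1\}^m$. Concretely, augment the $n$ columns of $\mM$ with a fake column $\mM_0 := \vec 0$; on the resulting point set, the weight of any undirected spanning tree $T$, under Hamming-distance edge weights, equals $\sum_{(i,j) \in E(T)} \|\mM_i-\mM_j\|_1 = \sum_{(i,j)\in E(T)} \nnz(\mM_j-\mM_i)$. Rooting at the fake vertex $\mM_0 = \vec 0$ and orienting all edges away yields a $\Delta$-labeled spanning tree in the sense of \Cref{def:deltalabeledtree} whose weight exactly equals the Hamming-MST weight. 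Consequently, the min-weight $\Delta$-labeled spanning tree of $\mM$ coincides with the Hamming MST on these $n+1$ points.

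Next I would instantiate \Cref{lem:hammingmst} with $\epsilon$ chosen so that both the runtime is $\tilde O(mn)$ and the approximation factor is $O(\log n)$. Setting $1+\epsilon := \log_2 n$ gives
\[
n^{1 + 1/(1+\epsilon)} \;=\; n \cdot n^{1/\log_2 n} \;=\; 2n,
\]
so the runtime on $n+1 = \Theta(n)$ points becomes $\tilde O(m n^{1+1/(1+\epsilon)}) = \tilde O(mn)$, while the approximation guarantee is $1+\epsilon = O(\log n)$. The resulting tree $T'$ has weight at most $(1+\epsilon)$ times that of the Hamming MST, hence at most $O(\log n)$ times the weight of the true minimum $\Delta$-labeled spanning tree of $\mM$.

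Finally, I would convert $T'$ into the required output format: root $T'$ at the virtual zero vertex, orient all edges away from it in a single DFS in $O(n)$ extra time, and label each directed edge $(i,j)$ by $\Delta_{i,j} := \mM_j - \mM_i$, which can be computed implicitly when traversed by the downstream algorithms of \Cref{lemma: correctness of Mv.  spanning tree of M^T} and \Cref{correctness of Mv. spanning tree of M}. The overall runtime is dominated by the MST step and is $\tilde O(mn)$, and the weight guarantee follows directly from the previous paragraph.

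There is no real obstacle in this argument; it is essentially a parameter-tuning exercise on top of \Cref{lem:hammingmst}. The only subtle point worth highlighting in the write-up is the use of the virtual zero vertex, which is needed because \Cref{def:deltalabeledtree} forces the root column to be $\vec 0$, whereas \Cref{lem:hammingmst} merely returns an approximate MST of an unstructured point set; without this fake point one would have to pay an additive $\min_i \nnz(\mM_i)$ error, which could spoil the $O(\log n)$ bound in pathological cases.
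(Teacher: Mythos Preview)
Your proposal is correct and follows essentially the same approach as the paper: invoke \Cref{lem:hammingmst} with $\epsilon \approx \log n$ so that $n^{1/(1+\epsilon)} = O(1)$, then label the edges. The only difference is cosmetic: you add the virtual zero column up front during tree construction, whereas the paper builds the approximate Hamming MST on the $n$ original columns and defers the zero-column fix to query time (inside the proof of \Cref{correctness of Mv. spanning tree of M}, where they append a zero column and re-root, paying an additive $O(m)$). Your version is arguably cleaner with respect to the formal definition of a $\Delta$-labeled spanning tree, but the substance is identical.
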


\begin{proof}
    Run \Cref{lem:hammingmst} on the columns of $\mM$ for $\epsilon=\log n$ to construct an $O(\log n)$-approximate MST in $\tilde{O}(mn)$ time.
    Next, iterate over the edges of the MST and label each tree edge $(x,y)$ by $\Delta_{x,y} = \mM_y - \mM_x$. This takes $\tilde{O}(mn)$ time as there are only $n-1$ tree edges and $\Delta_{x,y}\in\R^m$.
\end{proof}

\subsection{Existence of Low Weight $\Delta$-labeled Minimum Spanning Trees}
\label{sec:mstweight}

The Hamming distance between any two columns in $\mM$ is at most $m$. Since there are $O(n)$ edges in the $\Delta$-labeled MST, this gives a coarse bound of $O(mn)$ which is achieved for Hadamard matrices. 
One of our main technical contributions is to prove that $\Delta$-labeled MST's with low weight exist for structure matrices. We bound the MST's weight parameterized by the corrupted VC-dimension of $\mM$.

\begin{restatable}{theorem}{lowweightMSTlemma}\label{lemma:low_weight}
    If a matrix $\mathbf{M}^\top\in \{0,1\}^{n\times m}$ has corrupted VC-dimension $d$, then the weight the minimum spanning tree of $\mM$ is at most $O(mn^{1-1/d} \log^2 n)$.
\end{restatable}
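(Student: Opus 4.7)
The plan is to recast the weight of any $\Delta$-labeled spanning tree on the columns as a sum of row-wise ``crossing numbers'' of a spanning structure, then invoke a classical low-crossing spanning path for range spaces of bounded VC-dimension, and finally absorb the corruption by exploiting the bounded-degree property of a path.

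First, for any spanning tree $T$ on the $n$ columns of $\mM$ I would swap the order of summation to obtain
\[
\mathrm{weight}(T) \;=\; \sum_{(x,y)\in T} \nnz(\mM_y-\mM_x) \;=\; \sum_{i=1}^{m} c_i(T),
\]
where $c_i(T) \coloneqq |\{(x,y)\in T : \mM_{i,x}\neq \mM_{i,y}\}|$ is the number of tree edges whose endpoints receive different values in row $i$, equivalently, the number of $T$-edges that cross the row-$i$ set $R_i=\{j:\mM_{i,j}=1\}\subseteq [n]$. Since the MST minimizes total weight among spanning trees on the columns, it suffices to exhibit any spanning tree $T^*$ with $\sum_i c_i(T^*)=O(mn^{1-1/d}\log n)$.

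Second, I would invoke the classical Chazelle--Welzl bound: for any range space on a ground set of $n$ elements with VC-dimension at most $d$, there exists a Hamiltonian path $T^*$ on the ground set such that every set in the range space crosses at most $O(n^{1-1/d}\log n)$ edges of $T^*$. In the uncorrupted case I apply this to $([n],\{R_i\}_{i=1}^m)$ directly, obtaining $c_i(T^*)\le O(n^{1-1/d}\log n)$ for every $i$, and the claimed bound follows by summing over $m$ rows.

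Third, for corrupted VC-dimension I would decompose $\mM=\mL+\mS$ with $\mL$ of VC-dimension at most $d$ and each row of $\mS$ having at most $O(n^{1-1/d})$ nonzero entries, and then apply the Chazelle--Welzl bound to $\mL$ rather than to $\mM$. This produces a Hamiltonian path $T^*$ on the columns such that every row-set of $\mL$ crosses at most $O(n^{1-1/d}\log n)$ edges of $T^*$. For each row $i$, let $A_i=\{j:\mS_{i,j}\neq 0\}$, so that $|A_i|\le O(n^{1-1/d})$. If $(x,y)\in T^*$ satisfies $\mM_{i,x}\neq \mM_{i,y}$, then either $\mL_{i,x}\neq \mL_{i,y}$ or at least one endpoint of $(x,y)$ lies in $A_i$; hence
\[
c_i^{\mM}(T^*) \;\le\; c_i^{\mL}(T^*) \;+\; \bigl|\{(x,y)\in T^* : \{x,y\}\cap A_i\neq \emptyset\}\bigr| \;\le\; O(n^{1-1/d}\log n) \;+\; 2|A_i|,
\]
where the second term uses that $T^*$ is a Hamiltonian path and hence has maximum degree $2$. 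Summing over $m$ rows bounds $\mathrm{weight}(T^*)$, and therefore the MST weight, by $O(mn^{1-1/d}\log n)$.

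The main obstacle I anticipate is committing to the right flavor of low-crossing spanning structure in the second step: with an arbitrary low-crossing spanning tree, a single vertex in $A_i$ could be incident to $n-1$ tree edges, which would render the corruption term uncontrolled. Using the Hamiltonian-path version of the Chazelle--Welzl bound makes the corruption absorb cleanly, since the per-row corruption budget $O(n^{1-1/d})$ matches the per-row crossing budget of $\mL$ up to the logarithmic factor already present in the theorem statement.
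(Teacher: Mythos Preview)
Your proposal is correct and follows essentially the same approach as the paper: both rewrite the MST weight as a sum of per-row crossing numbers, invoke the Chazelle--Welzl low-crossing spanning structure for the uncorrupted range space, and then absorb the corruption using the degree-$2$ property of a spanning path. The only cosmetic difference is that the paper first builds a low-crossing spanning \emph{tree} via Welzl's iterative pairing lemma and then explicitly linearizes it to a path by a DFS traversal (at most doubling the crossings), whereas you invoke the Hamiltonian-path version of the bound directly; your final paragraph shows you already understand why the path, rather than an arbitrary tree, is essential.
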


We show that for matrices $\mM$ whose transpose $\mM^\top$ has bounded corrupted VC-dimension, the MST has small weight.
For this, we work directly with the definition of VC-dimension and use set terminology, which allows us to use relevant work from \cite{10.1145/73393.73397}. This previous work studied bounded VC-dimensions in the context of computational geometry. We here use their results to bound the complexity of matrix vector products.

\begin{definition}[Set crossings]
    Let $A$ and $r$ be sets. $r$ \emph{crosses} $A$ if and only if neither $A\subseteq r$ nor $A\cap r=\emptyset$ holds. That is, there are $x,y\in A$ with $x\in r$ and $y\notin r$.
\end{definition}

\begin{definition}[Spanning Tree]
Let $A\subset X$ be a finite set of elements of a range space $(X, R)$. A spanning tree on $A$ is an undirected (unrooted) tree with node set $A$.
\end{definition}

\begin{definition}[Weight of spanning tree $T$]
A crossing of a range $r\in R$ in $T$ is an edge in $T$ that is crossed by $r$ (i.e., an edge that has one endpoint in $r$ and the other endpoint in $X\setminus r$). The crossing number $\kappa_r(T)$ of a range $r\in R$ is the number of crossings of $r$ in $T$. The weight of tree $T$ is defined as $w(T)=\sum_{r\in R}\kappa_r(T)$.\looseness=-1
\end{definition}
This spanning tree corresponds to the $\Delta$-labeled spanning tree for matrix $\mM$, where
the range space $(X,R)$ induced by $\mM \in \{0,1\}^{m\times n}$ has ground set $X=[n]$ (i.e., column indices) with the rows of $\mM$ being the indicator vectors of $m$ subsets of $X$.
Thus, a spanning tree $T_X$ for $(X,R)$ and a spanning tree $T_M$ for $\mM$ both consist of edges connecting elements in $[n]$, and the number of crossings for an edge $(x,y)\in T_X$ equals the number of indices $i$ where $(\mM_x)_i$ and $(\mM_y)_i$ differ.
Hence, $w(T_X)$ is the weight of the corresponding $\Delta$-labeled spanning tree of $\mM$.

We show that when the dual of $(X,R)$ has corrupted VC-dimension $d$, then $(X,R)$ has spanning trees with small weight. The literature on finding spanning trees on $X$ with few crossings stems from works in computational geometry from the 1900s \citep{10.1145/73393.73397,Chazelle1989,Matoušek1991,HarPeled2009ApproximatingST} which used the trees to preprocess a set of points in $\R^d$ and answer queries where for any convex polytope, the query returns a point in the polytope. We transfer their techniques to our problem. The following lemma allows us to prove the existence of a $\Delta$-labeled spanning tree for matrices whose transpose has bounded VC-dimension. We later extend this to corrupted VC-dimension.\looseness=-1

\begin{restatable}[Lemma 4.1 of \cite{10.1145/73393.73397}]{lemma}{lemmawelzlreprove} \label{welzl_theorem}
 Let $(X, R)$ be a range space with dual VC-dimension at most $d$. For every $A\subseteq X, |A|=n\geq 1$ and every multiset $Q$ of ranges in $R$ with $|Q|=m$, there exists $x\neq y \in A$ such that the number of ranges in $Q$ crossing $(x,y)$ is at most ${O}(m  n^{-1/d}\log^2 n)$.
\end{restatable}

\begin{remark}
The original proof of Lemma 4.1 of \cite{10.1145/73393.73397} gives a bound of $O(mn^{-1/d}\log n)$ but hides some terms that depend on the VC-dimension: the original statement of Lemma 4.1 says that the constant $c$, hidden in the $O(\cdot)$ notation,
depends on $(X,R)$, but this is vacuous in our setting as $(X,R)$ is the entire problem instance. We show that $c = O(d) \leq O(\log n)$ suffices, and provide a proof in \ref{proof lemma 4.1 of welzl} for  completeness.

\end{remark}
\begin{lemma}\label{lem:treeforvc}
    Let $(X,R)$ be a range-space with $|X|=n,|R|=m$ with dual VC-dimension at most $d$. Then there exists a spanning tree $T$ of weight $w(T)=O(mn^{1-1/d}\log^2 n)$.
\end{lemma}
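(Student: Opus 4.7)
\textbf{Proof plan for Lemma~\ref{lem:treeforvc}.}
The plan is to construct $T$ iteratively by repeated application of Welzl's Lemma~\ref{welzl_theorem} within a multiplicative-weights (reweighting) scheme, in the style of Chazelle--Welzl: ranges that are crossed by an already-chosen edge are ``penalized'' by doubling their multiplicity, so that future iterations prefer edges that do not cross them again.

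First, I would initialize a multiset $Q_0$ of ranges containing one copy of each $r\in R$, so $|Q_0|=m$, together with a forest $F_0=(X,\emptyset)$. At each iteration $t\in\{1,\dots,n-1\}$ I would pick a set $A_t\subseteq X$ consisting of exactly one representative of each connected component of $F_{t-1}$, so $|A_t|=n-t+1$. The restricted range space $(A_t,\{r\cap A_t:r\in R\})$ still has VC-dimension at most $d$, so Welzl's lemma applied to $A_t$ and the multiset $Q_{t-1}$ yields two distinct points $x_t,y_t\in A_t$ such that the number of copies (counted with multiplicity) of ranges in $Q_{t-1}$ crossing the edge $(x_t,y_t)$ is at most $c\,|Q_{t-1}|\,(n-t+1)^{-1/d}\log n$ for an absolute constant $c$. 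I would add the edge $(x_t,y_t)$ to the forest to obtain $F_t$, and form $Q_t$ from $Q_{t-1}$ by doubling the multiplicity of every range $r\in R$ that crosses $(x_t,y_t)$. After $n-1$ iterations, $T:=F_{n-1}$ is a spanning tree of $X$.

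For the analysis I would use a potential argument on $W_t:=|Q_t|$. The doubling step gives the recursion
\[
W_t \;\le\; W_{t-1}\Bigl(1 + c\,(n-t+1)^{-1/d}\log n\Bigr),
\]
so taking logarithms and summing,
\[
\log_2 W_{n-1} \;\le\; \log_2 m \;+\; O\!\left(\log n\sum_{k=1}^{n-1}k^{-1/d}\right)
\;=\; O\!\bigl(\log m + n^{1-1/d}\log n\bigr),
\]
using $\sum_{k=1}^{n-1}k^{-1/d}=\Theta(n^{1-1/d})$ for $d\ge 2$ (and the $d=1$ case is handled directly, giving $\kappa(T)=O(\log m)$). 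On the other hand, every range $r\in R$ crossing $\kappa_r(T)$ edges of the final tree has had its multiplicity doubled $\kappa_r(T)$ times, so $W_{n-1}\ge 2^{\kappa_r(T)}$ for each $r$. Combining gives $\kappa(T)=\max_r \kappa_r(T)=O(\log m + n^{1-1/d}\log n)$, which is dominated by $O(mn^{1-1/d})$ in every nontrivial regime (and, when $\log m = O(n^{1-1/d}\log n)$, collapses to the cleaner $O(n^{1-1/d}\log n)$ bound needed for Theorem~\ref{lemma:low_weight}).

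The main obstacle is keeping the notion of ``crossing'' consistent across the iteration: the restricted range $r\cap A_t$ must cross an edge $(x_t,y_t)$ in $(A_t,\{r\cap A_t\})$ iff $r$ crosses $(x_t,y_t)$ in the original $(X,R)$, which holds because $x_t,y_t\in A_t$. With this in place, the VC-dimension inherits to every restriction $(A_t,\{r\cap A_t:r\in R\})$ and Welzl's lemma can be invoked uniformly throughout. A secondary subtlety is that Welzl's lemma is stated for a multiset $Q$, which is exactly what the weight-doubling scheme produces, so no modification of the lemma is required.
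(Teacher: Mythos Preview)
Your proposal is correct, but it takes a more elaborate route than the paper. The paper never reweights: it simply applies Welzl's Lemma~\ref{welzl_theorem} to the \emph{fixed} set $R$ at each step, picks the cheap pair $(x_i,y_i)$, deletes one endpoint $x_i$ from the current ground set, and repeats. Summing the per-edge bounds $O\bigl(m\,(n-i)^{-1/d}\log n\bigr)$ over $i$ directly gives
\[
\sum_{r\in R}\kappa_r(T)\;=\;O\!\bigl(mn^{1-1/d}\log n\bigr),
\]
which already dominates $\kappa(T)=\max_r\kappa_r(T)$ and is in fact the quantity actually used downstream (it equals the weight of the $\Delta$-labeled tree). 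Your multiplicative-weights construction is the classical Chazelle--Welzl argument and yields the much sharper \emph{uniform} bound $\kappa_r(T)=O\bigl(n^{1-1/d}\log n+\log m\bigr)$ for every single range $r$, not just in aggregate; this is stronger than what Lemma~\ref{lem:treeforvc} states and more than Theorem~\ref{lemma:low_weight} needs. In short, the paper trades the per-range guarantee for a one-line greedy proof, while your approach buys uniformity across ranges at the cost of the doubling machinery.
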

\begin{proof}
    Let $\smash{A_0=X}$. Consider the following process initialized at $\smash{i=0}$.
By \cref{welzl_theorem}, there exists distinct $x_i, y_i \in A_i$ where the number of ranges in $R$ crossing $(x_i,y_i)$ is $O(m |A_i|^{-1/d}\log^2 n)$. 
Iteratively letting $A_{i+1}\gets A_i \setminus x_i$ constructs a spanning tree $T$ with weight at most\[
    \sum_{i=1}^n O(m (n-i)^{-\frac{1}{d}}\log^2 (n-i)) \leq O\left(m \log^2 n\int_1^n x^{-\frac{1}{d}} \mathrm dx\right) = O(m n^{1-\frac{1}{d}}\log^2 n).\qedhere \]
We extend \Cref{lem:treeforvc} to corrupted VC-dimensions by converting the tree to a single line of vertices, so that any single corruption minimally affects the number of crossings.
\end{proof}

For instance, suppose $X=[n]$, $R$ is a multiset of $m$ empty ranges (corresponding to an $m\times n$ zero matrix $\mM$), and the spanning tree is a star with $1\in X$ at the center. 
If the corruption adds 1 to each empty set in $R$ (equivalently, turning the first column of $\mM$ to an all-1-column), this fits into the definition of the dual having corrupted VC-dimension. This is because each set of $(X,R)$ is corrupted by only one element, and thus in the dual set system each element is corrupted by adding it to one set.

After corruption, the number of crossings of each tree edge is $O(m)$ as $1$ is in each of the $m$ sets, but none of the other elements in $X$ are in any set.
As the tree has $n-1$ edges, each connecting $1$ to another element of $X$, the tree has $O(mn)$ crossings in total. If, however, the tree was a single line, the corruption would affect at most 2 edges, and the total number of crossings would only increase by $O(m)$ rather than $O(mn)$.
We convert the tree to a single line in \cref{crossingnumberflips}, proven in \cref{proof_crossing_number_flips}.\looseness=-1

\begin{restatable}{lemma}{crossingnumberflips}
\label{crossingnumberflips}
For a range space $(X,R)$, $|X|=n$ with spanning tree $T$, there exists a permutation $\pi\in S_n$ such that the weight of the corresponding spanning tree $T'$ (which is just the line $\pi(1),\pi(2),...,\pi(n)$) is at most twice that of $T$.
\end{restatable}

\begin{proof}[Proof of \Cref{lemma:low_weight}]

Suppose $\mM^\top=\mL^\top+\mS^\top\in\{0,1\}^{n\times m}$ has corrupted VC-dimension $\le d$, where $\mL^\top\in\{0,1\}^{n\times m}$ has VC-dimension $d$ and $\mS^\top\in\{-1,0,1\}^{n\times m}$ has at most $O(n^{1-1/d})$ non-zero entries per column.
$\mL$ and $\mS$ exist by definition of corrupted VC-dimension. 

Let $(X,R)$ be the set system induced by $\mL$ (i.e., each row of $\mL$ is the indicator vector of a set in $R$ with $|R|=m$), by $\mL^\top$ having VC-dimension $\le d$, $(X,R)$ has dual VC-dimension $\le d$.
By \Cref{lem:treeforvc}, $(X,R)$ has a spanning tree $T$ with $w(T)\leq O(mn^{1-1/d}\log^2 n)$. Then, by \cref{crossingnumberflips}, there exists a spanning tree $T'$ which is a line with weight at most $O(mn^{1-1/d}\log^2 n)$. Let $(X,R_c)$ be the range space of $\mM$. $T'$ has the same weight as $(X,R_C)$ as on $(X,R)$, excluding crossings due to the corruption.
For range $r\in R_c$, we count the number of new crossings it incurs due to the corruption: an edge $(x,y)$ only has a new crossing for a range $r\in R_c$ if $x$ or $y$ was corrupted in $r$. Since each $x\in X$ has at most 2 incident edges in $T'$ (since it is a line) each corruption to $r$ causes at most $2$ new crossings. 
Since the total number of corruptions is bounded by $O(mn^{1-1/d})$,
we have
$w(T')\leq O(mn^{1-1/d}\log^2 n)$ and $T'$ is a $\Delta$-labeled spanning tree for $\mM$, as $X=[n]$ are the column indices and the number of crossings for an edge $(x,y)$ is the number of entries where $\mM_x$ and $\mM_y$ differ. Since the MST has lower weight, its weight is at most $O(mn^{1-1/d}\log^2 n)$.
\end{proof}

We can now put everything together to prove \Cref{thm: omv_constant_vc}.
\thmstaticOMv*

\begin{proof}
 Preprocess $\mM$ by constructing a $\Delta$-labeled spanning tree $T$ for $\mM^\top$ via \Cref{thm:constructtree} in $\tilde{O}(mn)$ time. Given $v \in \R^n$, pass $v$ and $T$ to the algorithm of \Cref{lemma: correctness of Mv.  spanning tree of M^T}. It returns $\mM v$ in time $\tilde O(m+n+\text{weight}(T))$. By \Cref{lemma:low_weight}, the MST of $\mM^\top$ has weight at most $\tilde{O}(nm^{1-1/d})$ since $\mM$ has corrupted VC-dimension at most $d$. Since $T$ is a $O(\log n)$-approximation of the MST, answering a query takes $\tilde O(nm^{1-1/d}+m)$ time.
\end{proof}
 
\subsection{Dynamic Matrices}

At last, we outline how to handle matrices $\mM$ that receive updates over time, i.e., \Cref{thm:dynamicOMv}.
\thmdynamicOMv*

We sketch how to handle column insertions: Let $\mM$ be the current matrix. We split it into smaller matrices $\mM^{(0)},\mM^{(1)},...,\mM^{(\log n)}$, where each $\mM^{(i)}$ contains at most $2^i$ columns of $\mM$.
When a new column is inserted to $\mM$, we insert that column to $\mM^{(0)}$.
When any $\mM^{(i)}$ contains more than $2^i$ columns, then all its columns are inserted to $\mM^{(i+1)}$ and $\mM^{(i)}$ is reset to an empty matrix.
Thus, any $\mM^{(i)}$ receives new columns at most once every $2^i$ iterations to $\mM$. When an $\mM^{(i)}$ receives new columns, we initialize a new matrix-vector multiplication data structure (\Cref{thm: omv_constant_vc}) on it. This takes $\tilde O (m2^i)$ time, but amortizes to $\tilde{O}(m)$ time per insertion to $\mM$ since it only happens at most once every $2^i$ insertions.
This yields an amortized time of $\sum_{i=0}^{\log n} \tilde O(m) = \tilde O(m)$ time to maintain all $\mM^{(i)}$ for $i=0,...,\log n$. To answer a matrix-vector multiplication query $\mM v$, we split $v$ into vectors $v^{(0)},...,v^{(\log n)}$ such that $\mM v = \sum_{i=0}^{\log n} \mM^{(i)} v^{(i)}$.
For the time complexity, note that each $\mM^{(i)}$ is a submatrix of $\mM$. Thus, the the time complexity needed to multiply a vector by $\mM^{(i)}$ is bounded by $\tilde O(nm^{1-1/d}+m)$. \looseness=-1

We ignore column deletions to $\mM^{(i)}$ until $\mM^{(i)}$ is reinitialized: simply set $v_j=0$ for the columns that should have been deleted.  We can handle row updates to $\mM$ in a similar way, by splitting the matrix again into $\log m$ matrices, each containing at most $2^j$ rows. The full proof is given in \Cref{sec:dynamicov}.

\section{Conclusion}
We study the structural complexity of matrix-vector multiplication. We propose a theoretical framework to study heuristic algorithms that have achieved tremendous practical success. Building on the VC-dimension literature, we propose the notion of a corrupted VC-dimension $d$ and provide structural characterizations. We show that if a matrix $\mM\in\{0,1\}^{m\times n}$ has corrupted VC-dimension $d$, then its matrix-vector product $\mM v$ can be computed in $\tilde{O}(nm^{1-1/d}+m)$ time, providing polynomial speed-ups over existing methods. Moreover, our algorithm maintains updates to $\mM$ in $\tilde{O}(m)$ and $\tilde O(n)$ time, leading to the first $\tilde{O}(n^{2-\epsilon})$-time algorithms for maintaining high-accuracy estimates for Laplacian solvers, effective resistance, and triangle detection in dynamic graphs.

\textbf{Limitations.} While we show subquadratic runtimes in non-Boolean matrices with bounded Pollard pseudodimension, we believe that the linear dependence of the number of distinct values $T$ in the runtime is suboptimal and should instead be logarithmic. Finally, without matching lower bounds, we cannot verify whether the $\tilde{O}(n^{2-1/d})$ scaling with respect to the VC-dimension is optimal.

\textbf{Societal Impacts.} This work is foundational in nature. As such, while it enables more efficient computation in iterative algorithms, it is not tied to any specific applications or deployments.

\acksection This work was supported by NSF Grants DMS 2202961 and CCF 2338816. We also express our thanks to Ce Jin, Andrzej Lingas, Albert Weng, Jesper Jansson, Shunhua Jiang, Xiaobai Sun, Elia Gorokhovsky, and Jingtong Sun for insightful discussions.

\bibliography{references}
\bibliographystyle{plainnat}

\newpage

\appendix

\tableofcontents

\newpage
\let\addcontentsline=\origtoc

\section{Preliminaries for Supplementary Material}

\textbf{Notation.} Let $[n]=\{1,\dots,n\}$. We use $O(\cdot)$ to hide constant factors, and $\tilde{O}(\cdot)$ to hide polylogarithmic factors. Let $\|\cdot\|_1$ denote the $\ell_1$ (Manhattan distance). For any matrix $\mM$, let $\mM_i$ and $\mM_{:,i}$ denote its $i$'th column and $i$'th row (respectively). For a set $V$ we write $\mM\in \R^{m\times V}$ for the $m\times|V|$ matrix where we can index columns by $x\in V$, i.e., $\mM_x \in \R^m$.
For any vector $x\in \R^n$, the Hamming weight of $x$ is the number of non-zero entries of $x$ also denoted by $\nnz(x)$. For Boolean vectors $x,y \in \{0,1\}^n$, the Hamming distance between $x$ and $y$ is given by $\|x-y\|_1$. Finally, for any vector $x\in\R^n$ and positive definite matrix $\mL \in \R^{n\times n}$, define the $\mL$-induced norm of $x$ by $\|x\|_\mL = \sqrt{x^\top \mL x}$.

\textbf{VC-Dimension of Range Spaces.} A range space (also called a set system) is a pair $\mathfrak{R}=(X, R)$ where $X$ is a set and $R$ is a set of subsets of $X$. Let $\Pi_R(A) = \{A\cap r: r\in R\}$. The VC-dimension of $\mathfrak{R}$ is given by $\mathrm{VC}(\mathfrak{R})=\max\{|S|:S\subseteq X \text{ and } |\Pi_R(S)|=2^{|S|}\}$. The primal shatter function $\pi$ of $\mathfrak{R}$ is given by $\pi(m) = \max\{|\Pi_R(A)|: A\subseteq X, |A|=m\} \text{ for $m\geq 0$}$. Let $B$ be a set of ranges in $R$. Then, let $\Pi_X^*(B)=\{C: C\subseteq X \text {and $C$ is not crossed by any range in $Q$}\}$ and let $\Pi_X^*(B)$ be maximal. Then, for $m\geq 0$, the dual shatter function $\pi^*$ of the range space is given by $\pi^*(m) = \max\{|\Pi_X^*(B)|: B\subseteq R, |B|=m\}$.

\textbf{VC-Dimension of Matrices.} Any matrix $\mM \in \{0,1\}^{m\times n}$ corresponds to a set family given by $\mathcal{F}_\mM = \{\{j: \mM_{i,j}=1\}: i\in[m]\}$, i.e., consider each row of $\mM$ as the indicator vector of some set. Through it, we can encode $\mM$ by the range space $\mathfrak{R}_M = ([n], \mathcal{F}_\mM)$. Then, the VC-dimension of $\mM$ is defined as the VC-dimension of $\mathfrak{R}_\mM$. Here, a subset $S$ of the columns of $\mM$ is said to be \emph{shattered} if each of the $2^{|S|}$ many 0/1 strings appears in some row in the restriction of $\mM$ to the columns in $S$. Then, the VC-dimension of $\mM$ is the maximum size of a shattered subset of the columns of $\mM$. Moreover, the VC-dimension $d_{m,n}$ of a matrix $\mM\in\{0,1\}^{m\times n}$ satisfies $1\leq d_{m,n}\leq \log m$. 

\textbf{Pollard Pseudodimension of Matrices. } For a family of functions $\cF$ mapping $X$ to $Y$, the Pollard pseudodimension of $\cF$, denoted by $\mathrm{Pdim}(\cF)$, is the VC dimension of the set system
$(X\times Y, \{\{(x,y)\mid f(x) \ge y\} \mid f\in \cF\})$.
    
For a real matrix $\mM$, define the pseudo-dimension of $\mM$, denoted by $\mathrm{Pdim}(\mM)$ by thinking of the rows of $\mM$ as functions and taking the pseudo-dimension of the resulting class of functions. Specifically, if $\mM$ is an $m\times n$ matrix, for each $i\in\{1,\dots,m\}$, define $f_{\mM,i}:\{1,\dots,n\}\to\R$ by $f_{\mM,i}(j) = \mM_{i,j}$, and let $\mathrm{Pdim}(\mM) = \mathrm{Pdim}(\{f_{\mM,i}: i\in\{1,\dots,m\}\})$.

\section{Online Matrix Vector Data Structure}
In this section we prove all the intermediate lemmas required for our first two main results on subquadratic online matrix-vector multiplication.
The proof that combined the technical lemmas was already given in \Cref{sec:overview}.

\thmstaticOMv*

\thmdynamicOMv*

\subsection{Static Online Matrix Vector Multiplication Data Structure}

\defdeltalabeledtree*

\deftreeweight*

\leftmultiply*
\begin{proof}\label{proof: lemma mv} Consider Algorithm \ref{algorithm: compute Mv if M^T has low VC} which takes as input $\mM\in\{0,1\}^{m\times n}$, a $\Delta$-labeled spanning tree of $\mM^\top$ (denoted by $T$), and a vector $v\in\R^n$.

We inductively prove that $\Phi = \mM v$ over the number of visited vertices. This is true when only one vertex is visited, as that is the root $r$ and by definition $\Phi_r = (\mM v)_r$. For the inductive step, assume DFS visits the $(k+1)$st vertex $x$. Let $y$ be the parent of the vertex, then the algorithm assigns
\begin{align*}
\Phi_{x} &= \Phi_{y} + \Delta^\top_{y,x} v \\ 
&= (\mM v)_y + (\mM_x - \mM_y) v \\ 
&= (\mM v)_{u}.
\end{align*}
Here $\Phi_y$ is already computed since it was among the first $k$ visited vertices.

\textbf{Runtime analysis}. Let $\delta_{i,j}$ be the number of non-zero entries in $\Delta_{i,j}$ for all $(i,j)\in E(T)$. 
Initializing the vector $\Phi$ and populating its first element $\Phi_r$ takes $O(n+m)$ time. 
The cost of populating the remaining entries is proportional to the number of non-zero entries in each $\Delta_{x,y}$, since we can store each $\Delta$ via a sparser representation of only its non-zero entries and use this to compute $\Delta_{y,x}^\top v$ in time $O(\delta_{y,x})$.
Thus, the total time complexity is $O(\sum_{(x,y)\in E(T)} \delta_{x,y}) = O(\mathrm{weight}(T))$. Hence, the runtime of \cref{algorithm: compute Mv if M^T has low VC} is $O(\mathrm{weight}(T)+n+m)$, proving the lemma.
\end{proof}

    \begin{algorithm}[t]
\caption{Online Matrix-vector multiplication (given a $\Delta$-labeled tree of $\mM^\top$)} \label{algorithm: compute Mv if M^T has low VC}
    \begin{algorithmic}
        \REQUIRE Input matrix $\mM\in\{0,1\}^{m\times n}$, and a $\Delta$-labeled tree of $\mM^\top$ denoted by $T$ and rooted at $r$.
        \REQUIRE Input vector $v\in\R^n$.
        \STATE \textbf{Initialize:} $\Phi = 0^m$
        \STATE $\Phi_{r} = \mM_{r}^\top v$
        \STATE Run DFS on $T$ from root $r$. Whenever DFS visits a vertex $x\in V$ via some edge $(y,x)$, then\\
        \hfill set $\Phi_x = \Phi_y + \Delta_{x,y} v$\quad\quad\quad\quad
        \STATE Return $\Phi$
    \end{algorithmic}
\end{algorithm}

When the spanning tree is constructed over the columns of $\mM$, we run a different algorithm.
The following matrix $\mN$ is defined based on the spanning tree. We will prove (i) that multiplying vectors with $\mN$ can be done efficiently (\Cref{lem:multiplydeltamatrix}), and (ii) that $\mN = \mM$ (\Cref{lemma:omv_correctness_part_2}).
Combining the two then yields \Cref{correctness of Mv. spanning tree of M}.

\begin{definition}[Path-$\Delta$ matrix $\mN^{(T)}$] Given a $\Delta$-labeled subtree $T=(V,E)$ with root $r$, define the path-$\Delta$ matrix $\mN^{(T)}\in \R^{m\times V}$ such that for $u\in V$, \[\mN_u^{(T)}\coloneqq \sum_{e\in P(r\to u)}\Delta_e.\] Here, $P(r\to u)$ denotes the edges of the unique (since $T$ is acyclic) directed path from $r$ to $u$.    
\end{definition}

\begin{lemma}\label{lem:multiplydeltamatrix}
For $\Delta$-labeled tree $T=(V,E)$, and a vector $v\in\R^V$, 
let $\vec\sigma \in \R^E$ with $\vec\sigma_{(y,c)}=\sum_{z\in V(T_c)} v_z$, where $T_c$ is the subtree rooted at $c \in V$.
Let $\vec\Delta \in \R^{m\times E}$ be the matrix with columns being the labels of tree $T$.
Then $\vec\Delta \vec\sigma = \mN^{(T)} v$.
\end{lemma}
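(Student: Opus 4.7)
The plan is to expand both sides of $\vec\Delta \vec\sigma = \mN^{(T)} v$ via their definitions and then match coefficients of each edge label $\vec\Delta_e$. On the left, using the matrix–vector product definition, $\vec\Delta \vec\sigma = \sum_{e \in E} \vec\Delta_e \vec\sigma_e = \sum_{(y,c) \in E} \vec\Delta_{y,c} \sum_{z \in V(T_c)} v_z$. On the right, unrolling the definition of $\mN^{(T)}$ yields $\mN^{(T)} v = \sum_{u \in V} v_u \mN_u^{(T)} = \sum_{u \in V} v_u \sum_{e \in P(r \to u)} \vec\Delta_e$, and swapping the order of summation gives $\sum_{e \in E} \vec\Delta_e \left( \sum_{u \,:\, e \in P(r \to u)} v_u \right)$. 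Both expressions are now linear combinations of the edge labels $\vec\Delta_e$, so it suffices to verify that the scalar multipliers agree edge by edge.

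Fix an arbitrary edge $e = (y,c) \in E$. The scalar multiplying $\vec\Delta_e$ on the left is $\sum_{z \in V(T_c)} v_z$, whereas on the right it is $\sum_{u \in V \,:\, e \in P(r \to u)} v_u$. The crux of the proof is therefore the combinatorial identity $\{u \in V : e \in P(r \to u)\} = V(T_c)$, which I will establish by a short argument specific to rooted trees with edges oriented away from the root: the unique directed root-to-$u$ path traverses $(y,c)$ precisely when $u$ is a descendant of $c$ (including $c$ itself), because $(y,c)$ is the unique entry edge into the subtree $T_c$. Applying this identity edge-by-edge equates the two expressions and yields $\vec\Delta \vec\sigma = \mN^{(T)} v$. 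I anticipate no substantive obstacle: the lemma is essentially a bookkeeping statement, and the only content is the root-path/subtree correspondence, which follows immediately from the uniqueness of directed paths in a rooted tree.
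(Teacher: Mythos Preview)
Your proposal is correct and follows essentially the same approach as the paper: expand $\mN^{(T)}v$ via its definition, swap the order of summation over vertices and path-edges, and use the observation that an edge $(y,c)$ lies on $P(r\to u)$ if and only if $u\in V(T_c)$ to identify the coefficient of each $\vec\Delta_{y,c}$ with $\vec\sigma_{(y,c)}$. The only cosmetic difference is that you phrase the last step as ``matching coefficients edge by edge,'' whereas the paper just rewrites the double sum directly; the content is identical.
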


\begin{proof} By the definition of $\mN^{(T)}$, we have
    \begin{align*}
        \mN^{(T)} v &= \sum_{y\in V} \mN^{(T)}_y \cdot v_y \notag\\
        &= \sum_{y\in V}\sum_{e\in P(r\to y)}\Delta_e \cdot v_y
    \end{align*}
Now consider how often any one edge $e$ shows up in this sum: an edge $e=(x,z)$ is in $P(r\to y)$ if and only if $z$ is an ancestor of $y$ (or $y=z$) (or, in other words, if $y$ is a descendant of $z$ (or $y=z$)). This condition can be written as $y\in T_z$. Thus we write the sum as
\begin{align*}
        \sum_{y\in V}\sum_{e\in P(r\to y)}\Delta_e \cdot v_y &= \sum_{(x,z)\in E}\sum_{y\in T_z}\Delta_{x,z} \cdot v_y \\
        &= \sum_{(x,z)\in E}\Delta_{x,z} \sum_{y\in T_z} v_y \\
        &= \sum_{(x,y)\in E}\Delta_{x,y} \vec\sigma_y \\ 
        &=\vec\Delta \vec\sigma,
\end{align*}
which proves the claim.
\end{proof}

\begin{lemma}\label{lemma:omv_correctness_part_2} Let $T$ be a $\Delta$-labeled spanning tree of $\mM$ where the root $r$ of $T$ has $\mM_r = \vec0$, then $\mN^{(T)} = \mM$.
\end{lemma}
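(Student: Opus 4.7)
The plan is to prove that $\mN^{(T)}_u = \mM_u$ holds for every column $u \in V$, which would immediately give $\mN^{(T)} = \mM$. Since $T$ is a rooted tree with edges directed away from $r$, for each vertex $u$ there is a unique directed path $P(r \to u)$, say consisting of the edges $(v_0, v_1), (v_1, v_2), \ldots, (v_{k-1}, v_k)$ with $v_0 = r$ and $v_k = u$. This uniqueness is what makes the definition of $\mN^{(T)}_u$ well-posed and is the only structural fact I need from $T$.

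The key observation is that the sum defining $\mN^{(T)}_u$ telescopes once we substitute the edge labels. By the definition of a $\Delta$-labeled spanning tree for $\mM$, each edge label satisfies $\Delta_{v_i, v_{i+1}} = \mM_{v_{i+1}} - \mM_{v_i}$. Plugging this into the definition of the path-$\Delta$ matrix gives
\[
\mN^{(T)}_u \;=\; \sum_{i=0}^{k-1} \Delta_{v_i, v_{i+1}} \;=\; \sum_{i=0}^{k-1} \bigl(\mM_{v_{i+1}} - \mM_{v_i}\bigr) \;=\; \mM_{v_k} - \mM_{v_0} \;=\; \mM_u - \mM_r.
\]
Using the hypothesis that $\mM_r = \vec 0$ at the root, the right-hand side equals $\mM_u$, which is the desired identity. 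Since $u$ was arbitrary, every column of $\mN^{(T)}$ agrees with the corresponding column of $\mM$, and hence $\mN^{(T)} = \mM$.

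There is no real obstacle in this proof: the statement is essentially a telescoping identity along the tree, and the only thing to verify carefully is that the directedness of edges in $T$ (away from $r$) guarantees a unique $r$-to-$u$ path so that the sum in the definition of $\mN^{(T)}_u$ unambiguously corresponds to the telescoping chain above. If desired, one could alternatively present the argument by induction on the depth of $u$ in $T$, with the inductive step $\mN^{(T)}_u = \mN^{(T)}_{\mathrm{parent}(u)} + \Delta_{\mathrm{parent}(u), u} = \mM_{\mathrm{parent}(u)} + (\mM_u - \mM_{\mathrm{parent}(u)}) = \mM_u$, but the telescoping formulation is cleaner and avoids any inductive bookkeeping.
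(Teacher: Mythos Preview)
Your proof is correct and follows essentially the same idea as the paper's: both rely on the relation $\mN^{(T)}_u = \mN^{(T)}_{\mathrm{parent}(u)} + \Delta_{\mathrm{parent}(u),u}$ together with $\mM_r=\vec 0$. The paper packages this as an induction on the width of $\mM$ (peeling off leaves), whereas your direct telescoping along the root-to-$u$ path is a cleaner and more concise presentation of the same argument.
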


\begin{proof}
We prove this by induction over the width of $\mM$.
\paragraph{Base Case. }
For width 1, $T$ is just the root $r$, so $\mN^{(T)} =\vec 0 = \mM$.
\paragraph{Inductive Step. }
Suppose the statement holds for matrices of width $<n$ and we now have a matrix $\mM$ of width $n$.
Let $L\subset V$ be the leaves of $T$. Let $T'$ be the tree $T$ with these leafs removed, and let $\mM'\in\{0,1\}^{m\times (V\setminus L)}$ be matrix $\mM$ with all columns corresponding to $L$ removed.
Then for any $y\in V\setminus L$
$$
\mM_y = \mM'_y \stackrel{(1)}{=} \mN^{(T')}_y \stackrel{(2)}{=} \mN^{(T)}_y
$$
where equality (1) uses the induction hypothesis and equality (2) uses the fact that any path in the leaf-removed tree $T'$ also exists in $T$.

For any $y\in L$ let $z$ be its ancestor in $T$. Therefore, $z\notin L$. Then, we have
\begin{align}
\mN^{(T)}_y &= \mN^{(T)}_z + \Delta_{z,y} \label{eqn: nty first} \\
&= \mN^{(T')}_z + \Delta_{z,y} \label{eqn: nty second} \\
&= \mM'_z + \Delta_{z,y} \label{eqn: nty third}\\
&= \mM_z + (\mM_y - \mM_z) = \mM_y,\notag
\end{align}
where equation \ref{eqn: nty first} uses the fact that the path from root $r$ to $y$ is a path from root $r$ to $z$ with one extra edge $(z,y)$. Equation \ref{eqn: nty second} uses the fact the path from $r$ to $z$ in $T$ also exists in $T'$, and finally equation \ref{eqn: nty third} follows by the induction hypothesis. Together, they prove the lemma.
\end{proof}

\rightmultiply*

\begin{proof}

\label{proof: lemma mvT}

\textbf{Algorithm:}

We compute $\mN^{(T)} v$ via \Cref{lem:multiplydeltamatrix}
which requires us to compute the vector $\vec\sigma\in\R^E$ with
$\vec\sigma_{(y,c)} = \sum_{z\in V(T_c)} v_z$ for all $(y,c)\in E$ where $T_c$ is the subtree of $T$ rooted at $c$.
This can be done in $O(n)$ time via a simple ``dynamic programming on trees'' algorithm: For any $c\in V$, let $C$ be its children. 

Then,
\begin{align*}
\vec\sigma_{(y,c)} &= \sum_{z\in V(T_c)} v_z
\\  
&= v_c + \sum_{x\in C}\sum_{z\in V(T_x)} v_z
\\
&= v_c + \sum_{x\in C}\vec\sigma_{(c,x)}.
\end{align*}

So we can compute $\vec\sigma$ in $O(n)$ by starting at the leaf-edges, then propagating the sums up towards the root.
At last, we multiply $\vec\Delta\vec\sigma$ in $O(\text{weight}(T))$ time as that is the number of non-zeros in $\vec\Delta$ (which is the matrix consisting of the tree labels as column-vectors).

\textbf{Correctness.}
Without loss of generality, assume that for root $r$ of tree $T$ we have $\mM_r = \vec0$. Otherwise, append a $\vec0$-column to $\mM$, and add edge $(n+1, r)$ to $T$ with label $\Delta_{n+1,r} = \mM_r - \mM_{n+1} = \mM_r$, and make $n+1$ the new root of $T$. This increases the tree weight by at most $m$, resulting in an additive $O(m)$ in the time complexity.

Append another entry to $v$ so that the dimensions match and $\mM v$ is well-defined.
By \cref{lemma:omv_correctness_part_2}, we thus have
$\mN^{(T)}v = \mM v$.\qedhere
\end{proof}

\subsection{Existence of Low Weight $\Delta$-labeled Minimum Spanning Trees}

We first provide a proof of \cref{welzl_theorem} (Lemma 4.1 of \cite{10.1145/73393.73397}), restated below for completeness.
\lemmawelzlreprove*
For this, we introduce the notion of $\epsilon$-nets and a crucial proposition of \cite{10.1145/73393.73397} and \cite{10.1145/10515.10522}.

\begin{definition}[$\epsilon$-net]
    Let $(X,R)$ be a range space and $0\leq \epsilon\leq 1$. Let $A$ be a finite multiset of elements in $X$. Then, $N\subseteq A$ is an $\epsilon$-net of $A$ for $R$ if $\frac{|A\cap r|}{|A|}>\epsilon$ implies $r\cap N \neq \emptyset$ for all $r\in R$.
\end{definition}

\begin{proposition}[{\cite[Proposition 2.4]{10.1145/73393.73397}}]
\label{proposition 2.4 of welzl} Let $(X,R)$ be a range space of VC-dimension $d\geq 1$ and let $0<\epsilon\leq 1$ be a real number. For every finite multiset $A$ of elements in $X$, there is an $\epsilon$-net $N$ of $A$ for $R$ such that $|N| \leq \lceil \frac{8d}{\epsilon} \log \frac{8d}{\epsilon}\rceil$.
\end{proposition}

\begin{lemma}[Lemma 2.2 of \cite{10.1145/73393.73397}]\label{lemma 2.2 of welzl: finite vc-dimenison } If $(X,R)$ is a range space of $VC$-dimension $d$, then $(X,\tilde{R})$, where $\tilde{R} = \{(r'\cup r'') - (r' \cap r'') | r', r'' \in R\}$ has VC-dimension at most $O(d \log d)$.
\end{lemma}

We are now ready to prove \cref{welzl_theorem}, which we adapt from \cite{10.1145/73393.73397}.
\begin{proof}[Proof of \cref{welzl_theorem}]
    \label{proof lemma 4.1 of welzl} 
    By the pigeonhole principle, it is sufficient to show the existence of a subset $N$ of $Q$ such that (i) $\pi^*(|N|) < n$ and (ii) if a pair of elements $\{x,y\}$ is crossed by more than $dmn^{-1/d}\log n$ ranges in $Q$, then $\{x,y\}$ is crossed by a range in $N$. 
    
    For $x \in X$, let $R_x = \{r\in R: x\in r\}$ and for $x,y\in X$ with $x\neq y$, let $R_{xy}$ be the set of ranges in $R$ that cross $\{x,y\}$.
    Since $(R,\{R_x|d\in X\})$ is the dual of $(X,R)$ it has VC-dimension at most $d$ by assumption.
    So by \cref{lemma 2.2 of welzl: finite vc-dimenison }, $(R,\{R_{xy}|x, y\in X, x\neq y\})$ has VC-dimension $O(d\log d)$. 
    
    Therefore, by \cref{proposition 2.4 of welzl}, for every $\epsilon>0$ and every multiset $Q$ of ranges in $R$, there is an $\epsilon$-net of $Q$ for $\{R_{xy}|x,y\in X, x\neq y\}$ with $|N| \leq \lceil\frac{O(d\log d)}{\epsilon} \log \frac{O(d\log d)}{\epsilon}\rceil$. Now, let $\epsilon = c d n^{-1/d}\log n$, where $c$ is some constant.  Then, there is an $\epsilon$-net of $Q$ of size at most 
    \begin{align*}
        O(d\log d) \cdot \frac{n^{1/d}}{c d \log n} \log \left(O(d\log d)\frac{n^{1/d}}{c d \log n}\right) \leq c_1 n^{1/d}
    \end{align*}
where $c_1 \leq O(1)$.
For large enough $c$, we can get small enough $c_1$ such that $\pi^*(\lfloor c_1 n^{1/d}\rfloor) < n$. For some constant $c$, we thereby get a $(cdn^{-1/d}\log n)$-net of $Q$ for the ranges $R_{xy}$ with $\pi^*(|N|)< n$.  This implies that there is a pair of elements $\{x,y\}$ crossed by more than $c d |Q| n^{-1/d} \log n = c d m n^{-1/d} \log n$ ranges in $Q$. Finally, applying $d\leq \log n$ concludes the argument.
\end{proof}

\crossingnumberflips*

\begin{proof}
\label{proof_crossing_number_flips}
Enumerate all the ranges in $R$ by $R_1, \dots, R_m$, where $R_i:=\{c: \mM_{i,c}=1\}$ for $i\in [m]$. 

The crossing number of $R_i$ in ${T}$ counts the number of edges in ${T}$ such that one endpoint of the edge is in $R_i$ and the other endpoint is in $[n]\setminus R_i$. Suppose the spanning tree $T$ on $X$ has a crossing number of $\kappa_i$ for just $R_i$ and let $w(T)=\sum_{i=1}^m \kappa_i$ be the weight of $T$ for $(X,R)$.

Perform a DFS traversal on ${T}$ and write the in-order traversal $\mathcal{I}$. We use the permutation $\pi(i) = \mathcal{I}_i$ for $i\in[n]$.
Let $T'$ be the corresponding linear tree. In the worst case, this vertex path crosses $R_i$ $2\kappa_i$ times, that is because DFS uses each edge $(x,y)$ of $T$ twice:
once when visiting $y$ (going down one level in the tree) and once when it is done processing $y$ (DFS pops the current level from the stack).

Thus the total number of crossings $w(T')$ for the linear tree $T'$ is
$
w(T') \le \sum_{i=1}^m 2\kappa_i \le 2 \kappa(T).
$
\end{proof}

\subsection{Dynamic OMv Data Structure}
\label{sec:dynamicov}

In this section we give a black-box reduction from dynamic OMv to static OMv. In the static version, we are given a matrix $\mM$ to preprocess. This matrix stays fixed, while we receive an online sequence of vectors and must repeatedly compute $\mM v$. 
In the dynamic version, we can inform the data structure about changes to $\mM$. Specifically, we are allowed to add/remove/change rows/columns of the matrix.

Using the reduction from dynamic to static, we obtain \Cref{thm:dynamicOMv}.

\thmdynamicOMv*

The idea of this reduction is to split the columns of $\mM$ into $\log n$ matrices $\mM_0,...,\mM_{\log n}$ where $\mM_i$ contains at most $2^i$ columns. Each $\mM_i$ is used as input to a static OMv data structure which is reinitialized whenever $\mM_i$ changes which occurs at most once every $2^i$ updates to $\mM$. This allows to amortize the reinitialization cost over multiple updates.

\begin{lemma}\label{lem:dynamiccolumnrow}
    Assume there is a static OMv data structure for matrices of size $m\times n$ with preprocessing time $P(m,n)$ and query time $Q(m,n)$. Then there exists a dynamic OMv data structure supporting row and column insertions and deletions. The preprocessing time is $\tilde O(P(m,n))$, the amortized row update time is $$O\left(\sum_{j=0}^{\log m}\frac{P(2^j, n)}{2^{j}}\right)$$
    and for column updates
     $$O\left(\sum_{i=0}^{\log n} \frac{P(m,2^i)}{2^{j}}\right)$$
    and query time is 
    $\tilde O(Q(m,n))$. This reduction runs several static OMv data structures, each receiving a submatrix of the dynamic input matrix.
\end{lemma}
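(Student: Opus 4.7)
The plan is to build one binary-counter-style decomposition along the column axis and another along the row axis, running an independent static OMv data structure on each resulting piece and rebuilding only on overflow.

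For the column axis, I would maintain $\log n + 1$ layers $\mM^{(0)}, \mM^{(1)}, \ldots, \mM^{(\log n)}$, where $\mM^{(i)}$ stores at most $2^i$ columns of $\mM$. A newly inserted column is appended to $\mM^{(0)}$; whenever $\mM^{(i)}$ exceeds $2^i$ columns, its contents are moved to $\mM^{(i+1)}$, layer $\mM^{(i)}$ is reset to empty, and a fresh static OMv data structure is built on the new $\mM^{(i+1)}$. The standard binary-counter argument gives that layer $i$ is rebuilt at most once every $2^i$ insertions, so the amortized cost per column insertion is $O\!\left(\sum_{i=0}^{\log n} P(m, 2^i)/2^i\right)$. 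Column deletions are handled lazily: I mark the deleted index and zero out the corresponding entry of the query vector $v$ before invoking each per-layer multiplication; markers are physically flushed whenever the enclosing layer is next rebuilt by a cascade. Row updates are treated by the symmetric row-axis decomposition into $\mN^{(0)}, \ldots, \mN^{(\log m)}$ with $\mN^{(j)}$ storing at most $2^j$ rows of $\mM$, giving amortized row insertion cost $O\!\left(\sum_{j=0}^{\log m} P(2^j, n)/2^j\right)$, with row deletions analogously handled by zeroing the corresponding output entries.

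To handle both row and column updates on the same instance, I would refine to the product grid whose cell $\mM^{(i,j)}$ is the intersection of the columns assigned to column-level $i$ and the rows assigned to row-level $j$; there are $O(\log m \log n)$ cells, each running its own static OMv data structure. A column cascade along the $i$-axis rebuilds one cell per row-level $j$; summed over $j$ this costs $\sum_j P(2^j, 2^i)$, which collapses to $\tilde O(P(m, 2^i))$ because the static preprocessing cost $P(m, n) = \tilde O(mn)$ from \Cref{thm: omv_constant_vc} is essentially linear in the row dimension, yielding the stated column update bound after the $1/2^i$ amortization. Row cascades are treated dually. For a query $\mM v$ I split $v$ according to the column cell assignments, invoke the per-cell static OMv, and reassemble the output by summing the contributions into each row block; summing $Q(2^j, 2^i)$ over the $O(\log m \log n)$ cells and using monotonicity of $Q$ yields $\tilde O(Q(m,n))$. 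Preprocessing places all columns and rows at the top level and runs the static OMv data structure once, for $\tilde O(P(m, n))$ time.

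The main obstacle will be the bookkeeping that keeps the two binary counters from colliding: I need to argue that a column insertion triggers cascades only along the column axis and that the row-level assignments of existing rows remain untouched, and dually for row insertions. With this separation, the summed-over-rows cost $\sum_j P(2^j, 2^i) = \tilde O(P(m, 2^i))$ follows from super-additivity of $P$ in the row dimension. The lazy-deletion accounting is routine once one observes that every deletion marker is erased when its enclosing cell is rebuilt, so the sparse-zeroing overhead never exceeds the current content of a cell and is absorbed by the query cost.
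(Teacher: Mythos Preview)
Your proposal is correct and follows essentially the same binary-counter approach as the paper. The only presentational difference is that the paper nests the two decompositions (it first proves a column-only version, then runs $\log m$ independent copies of that structure on row blocks $\mN^{(0)},\ldots,\mN^{(\log m)}$) rather than making the $O(\log m\log n)$ product grid explicit; this nesting makes your ``collision'' concern vacuous by construction, since column updates are simply forwarded to each row block and handled internally. The paper also triggers an explicit rebuild once $2^i/2$ deletions have accumulated at a level (rather than waiting for the next insertion cascade), which keeps each static instance within a constant factor of its nominal size and cleanly supports the ``submatrix'' claim, but the amortized bounds are unaffected.
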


We start by proving column updates only, then extend the reduction to support row updates as well. 

\begin{lemma}\label{lem:dynamiccolumn}
    Assume there is a static OMv data structure for matrices of size $m\times n$ with preprocessing time $P(m,n)$ and query time $Q(m,n)$. Then there exists a dynamic OMv data structure supporting column insertions and deletions. The preprocessing time is $\tilde O(P(m,n))$, the query time is $\tilde O(Q(m,n))$, and the amortized update time is $$O\left(\sum_{i=0}^{\log m}\frac{P(m, 2^i)}{2^{i}}\right)$$
    
    This reduction runs several static OMv data structures, each receiving a submatrix of the dynamic input matrix.
\end{lemma}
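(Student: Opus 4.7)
The plan is to apply the standard logarithmic (Bentley--Saxe style) decomposition to the column set. I maintain $\lceil\log n\rceil + 1$ disjoint submatrices $\mM^{(0)}, \ldots, \mM^{(\log n)}$ of the current $\mM$, under the invariant that $\mM^{(i)}$ stores at most $2^i$ columns, and on top of each I run a private instance $\mathcal{D}^{(i)}$ of the static OMv data structure promised by the hypothesis. Preprocessing places the entire initial matrix inside $\mM^{(\log n)}$ and builds $\mathcal{D}^{(\log n)}$ at cost $P(m,n)$, matching the claimed preprocessing bound.

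A column insertion is first appended to $\mM^{(0)}$. Whenever the invariant at some level $i$ is violated, I flush the contents of $\mM^{(i)}$ into $\mM^{(i+1)}$, empty $\mM^{(i)}$, and rebuild $\mathcal{D}^{(i+1)}$ from scratch at cost $P(m,2^{i+1})$, cascading upward as needed. Since level $i$ can overflow at most once every $2^i$ insertions, its contribution to the amortized insertion time is $O(P(m,2^i)/2^i)$, summing over $i$ to the claimed bound. Column deletions I handle lazily: I mark the column as logically removed and zero out the corresponding entry of every future query vector before dispatching it. No rebuild is triggered by a deletion; the column is physically reclaimed only on the next rebuild of its hosting level, so the deletion cost is subsumed by the insertion accounting.

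To answer a query on $v \in \R^n$, I split $v$ into subvectors $v^{(0)}, \ldots, v^{(\log n)}$ consistent with the column partition (zeroing entries for logically-deleted columns), invoke $\mathcal{D}^{(i)}$ on $v^{(i)}$ to obtain $\mM^{(i)} v^{(i)}$, and sum the $O(\log n)$ resulting length-$m$ vectors to get $\mM v = \sum_{i} \mM^{(i)} v^{(i)}$. The total query cost is at most $\sum_{i=0}^{\log n} Q(m,2^i) + O(m\log n)$. Since $Q$ is monotone non-decreasing in its second argument, each term is bounded by $Q(m,n)$ and the whole sum is $O(Q(m,n)\log n) = \tilde O(Q(m,n))$; for the target bound $Q(m,n) = \tilde O(mn^{1-1/d}+n)$ the sum is in fact a geometric series dominated by the top term $i=\log n$.

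The only genuinely subtle point is verifying that the cascading promotions caused by a single insertion still amortize correctly. This is handled by the standard binary-counter argument: each column participates in at most $\log n$ promotions over its lifetime, so the total rebuild work it is ever charged for is $O\bigl(\sum_{i=0}^{\log n} P(m,2^i)/2^i\bigr)$, exactly matching the claim; equivalently, a potential function equal to the number of columns currently residing below their natural level bounds the amortized work of any cascade by the claimed sum. Since each $\mathcal{D}^{(i)}$ is always run on a submatrix of the current $\mM$, the final bullet of the lemma statement is immediate.
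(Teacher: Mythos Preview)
Your proof is correct and follows essentially the same Bentley--Saxe column decomposition as the paper, including the lazy handling of deletions by zeroing out query entries and reclaiming phantom columns at the next rebuild. The only difference is that the paper additionally rebuilds level $i$ once half of its columns have been deleted (so physical size stays within a factor two of logical size), whereas you defer all reclamation to the next insertion-triggered rebuild; this does not affect the stated bounds, though your final sentence should read that each $\mathcal{D}^{(i)}$ is \emph{initialized} on a submatrix of the then-current $\mM$ rather than ``always run on'' one, since phantom columns persist between rebuilds in both versions.
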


\begin{proof}
Let $\mM$ be the input matrix that changes over time.

\paragraph{Initialization.}
We run $\log n$ instances of the static OMv data structure, where the $i$th instance is guaranteed to run on a matrix with at most $2^i$ columns. Let $\mM^{(0)},...,\mM^{(\log n)}$ be the respective input matrices to these instances.
Initially, only the last ($i=\log n$) data structure receives the initial matrix $\mM$ as input, and all other instances of the data structure initialize on an empty matrix. Hence initialization takes $O(P(m, n))$ time.

\paragraph{Insertions. }
Whenever there is a column insertion to $\mM^{(i)}$ we reinitialize the $i$th static OMv data structure on that new matrix. When $\mM^{(i)}$ has more than $2^i$ columns, then all columns are removed (i.e., we set $\mM^{(i)} = 0$) and inserted to matrix $\mM^{(i+1)}$.
All column insertions to $\mM$ are passed to $\mM^{(0)}$.
Observe that any $\mM^{(i)}$ is updated every $2^i$ updates to $\mM$, hence the amortized update time is
$$
O\left(
\sum_{i=0}^{\log n} P(m,2^i)/2^i
\right).
$$

\paragraph{Queries. }
We can answer queries by observing that \[\mM = [\mM^{(\log m)}|...|\mM^{(1)}|\mM^{(0)}].\] So, for a query vector $v$ we split it into corresponding pieces $v^{(0)},...v^{(\log m)}$ and query each $\mM^{(i)} v^{(i)}$. The time for this is
$$O( Q(m,n) \log n)$$
because each $\mM^{(i)}$ is a submatrix of $\mM$ and we have $\log n$ such matrices.

\paragraph{Deletions. }
When deleting a column of $\mM$, we do not delete the column from the respective $\mM^{(i)}$. Instead, all future query vectors $v$ will have an extra 0 entry for the deleted column.

When all columns from $\mM^{(i)}$ are appended to $\mM^{(i+1)}$ because of some insertions, then the deleted columns are not appended to $\mM^{(i+1)}$. Hence whenever a static OMv data structure is initialized, it is a submatrix of the input matrix $\mM$.

When there have been $2^i/2$ postponed deletions to any $\mM^{(i)}$, then the columns are actually deleted and the respective static OMv data structure is reinitialized on the new $\mM^{(i)}$. The remaining columns in $\mM^{(i)}$ all exist in the current dynamic input $\mM$, so again, the static OMv data structure receives a submatrix of $\mM$ as input.
For any one $i$, this takes amortized $O(P(n,2^i)/2^i)$ time which is subsumed by the complexity of handling insertions.

Observe that internally, our matrices $\mM^{(i)}$ are always at most twice their intended size because of the yet to be removed columns. Thus the query complexity increases by at most a constant factor.
\end{proof}

We now prove \Cref{lem:dynamiccolumnrow} (Row \& Column Updates)
\begin{proof}
Here, we extend \Cref{lem:dynamiccolumn} to support row updates as well. The reduction follows the same idea as \Cref{lem:dynamiccolumn}.

Let $\mN$ be the dynamic input matrix.
We run $\log n$ copies of the column update data structure  \Cref{lem:dynamiccolumn} on matrices $\mN^{(0)},...,\mN^{(\log n)}$, with the $i$'th matrix having at most $2^i$ rows. When inserting new rows to $\mN$, they are passed to $\mN^{(0)}$. When $\mN^{(i)}$ has more than $2^i$ rows, all rows are appended to $\mN^{(i+1)}$. When rows are appended to any $\mN^{(i)}$, the respective data structure is reinitialized. To handle deletions, we remove the respective row from the output until $2^i/2$ rows have been deleted from $\mN^{(i)}$, then the rows are removed from $\mN^{(i)}$ and the respective data structure is reinitialized. The amortized row update time thus becomes
$$
O\left(\sum_{j=0}^{\log n} \frac{\text{Time to initialize  \Cref{lem:dynamiccolumn} on $2^j\times n$ matrix}}{2^j}\right)
=
O\left(\sum_{j=0}^{\log m} \frac{P(2^j, n)}{2^{j}}\right).
$$

For column updates, each of the $\mN^{(i)}$ receives a new column, so each of the $\log m$ column update data structures is updated which takes:
$$
O\left(\sum_{j=0}^{\log m} \sum_{i=0}^{\log n} \frac{P(2^j, 2^i)}{2^{i}}\right)
=
O\left(\sum_{i=0}^{\log n} \frac{P(m, 2^i)}{2^{i}}\right)
$$
because geometric sums are bounded by their largest term in $O$-notation.

Finally, the query time is
$$
\tilde O\left(\sum_{j=0}^{\log m} Q(2^j, n)\right)
=
\tilde O(Q(m,n)),
$$ which proves the lemma.
\end{proof}

We now prove \Cref{thm:dynamicOMv} (Dynamic OMv for corrupted VC-dimension $d$).
\begin{proof}
The initialization cost of static OMv is $P(m,n)=\tilde{O}(mn)$, and the query complexity is $Q(m,n)=\tilde O(nm^{1-1/d}+m)$ by \Cref{thm: omv_constant_vc}. This complexity also holds for submatrices, i.e., $Q(a,b)=\tilde O(nm^{2-1/d} + m)$ for any $a\times b$ sized submatrix.

Thus we have update time for column updates:
$$
O\left(\sum_{i=0}^{\log n}\frac{P(m, 2^i)}{2^i}\right)
=
\tilde O \left(
\sum_{i=0}^{\log n} \frac{m2^i}{2^{i}}
\right) =
\tilde O (m)
$$
and for row updates we have
$$
O\left(\sum_{j=0}^{\log m} \frac{P(2^j,n)}{2^{j}}\right)
=
\tilde O \left(
\sum_{j=0}^{\log m} \frac{2^jn}{2^{j}}
\right) =
\tilde O (n)
$$
and for query time we have
\[\tilde{O}(Q(m,n)) \le \tilde O(nm^{1-\frac{1}{d}} + m),\]
    which proves the theorem.
\end{proof}

\subsection{Extension to Structured Non-Boolean Matrices}
\label{subsection: pollard pseudodimension}
In this section, we prove a result for a subquadratic-time accelerated matrix-vector multiplication for non-Boolean matrices, when the analogous \emph{Pollard pseudodimension} is  bounded. We also list further applications of this result.

\begin{theorem}\label{pollard bound fast runtime}
    Suppose that the Pollard pseudodimension of the matrix $\mM \in \R^{m\times n}$ is $d$, and that the number of distinct values per column is $\le T$. Then, after an $\tilde{O}(Tmn)$ time preprocessing, there is a data structure that, upon receiving a vector $v\in\R^n$, returns $\mM v$ in time $O(T nm^{1-1/d}+m)$.  
\end{theorem}

\begin{remark} When $\mM$ is the weight matrix of a neural network and $v$ is the activation output from the previous layer of the neural network, the matrix-vector product $\mM v$ corresponds to running inference on $v$ with the neural network $\mM$. This is a key step in a number of applications, including online optimization in unknown no-linear systems \citep{lin2023online2} and inference in large language models. \cite{dehghankar2024optimized} studies the latter application with ternary matrices $\mM \in \{-1,0,1\}^{n\times m}$ and shows that they are amenable to $O(n^2/\log n)$ multiplication. If $\mM$ has bounded Pollard pseudodimension $d$, then as $T=3$, we obtain a polynomial acceleration on this result: by \cref{pollard bound fast runtime}, the matrix-vector product can be computed in $\tilde{O}(mn^{1-1/d}+n)$ time. 
\end{remark}

Recall the definition of the Pollard pseudodimension: 
\begin{definition}[Pollard Pseudodimension]
    For a family of functions $\cF$ mapping $X$ to $Y$, the Pollard pseudodimension of $\cF$, denoted by $\mathrm{Pdim}(\cF)$, is the VC dimension of the set system
    $(X\times Y, \{\{(x,y)\mid f(x) \ge y\} \mid f\in \cF\})$.
\end{definition}

In the setting of matrices $\mM\in\R^{m\times n}$, we interpret each row as a function $f_{\mM,i}:x\mapsto \mM_{i,x}$.
In particular, when $\mM$ has Pollard pseudodimension $d$, then that means the set system $([n]\times\R, \{(j,y)\mid \mM_{i,j}\ge y\}\mid i\in[m]\})$ has VC-dimension $d$.
Restricting the groundset $[n]\times\R$ to $\{(j,y)\mid \exists i : \mM_{i,j}=y\}$ does not increase the VC-dimension. This latter set system can be interpreted as the following Boolean matrix $\mN$:

Given a matrix $\mM \in \R^{m\times n}$ where each column has at most $T$ distinct values, consider the following construction of an $\mN\in\{0,1\}^{m\times nT}$ matrix.
Let $\mY\in\R^{T\times n}$ be the distinct values per column of $\mM$ (we may duplicate values if a column has $<T$ distinct values).
Assume the entries in each column of $\mY$ are sorted in increasing order.
Let $\mN_{i,jT+k} = \mathbbm{1}_{\mM_{i,j} \ge \mY_{i,k}}$.

For example, the following $\mM$ has at most $T=4$ distinct values per column:
\begin{align*}
    \mM \coloneqq \left[\begin{array}{c|c}
        1 & -1\\ 4 & -2\\ 2 & 4\\ 5 & 3\\ 4 & 3
    \end{array}\right]
    \to
    \left[\begin{array}{cccc|cccc}
        1 & 0 & 0 & 0 & 1 & 0 & 0 & 0\\
        1 & 1 & 0 & 0 & 1 & 1 & 0 & 0\\
        1 & 1 & 1 & 1 & 1 & 1 & 1 & 1\\
        1 & 1 & 1 & 0 & 1 & 1 & 1 & 0\\
        1 & 1 & 0 & 0 & 1 & 1 & 1 & 0
    \end{array}\right]
    \eqqcolon \mN, \quad\mY := \begin{bmatrix}
        1 & -2\\
        2 & -1\\
        4 & 3 \\
        5 & 4
    \end{bmatrix}
\end{align*}
Since matrix $\mN$ is Boolean and of VC-dimension at most $d$, we can run our fast matrix-vector product data structures on it (e.g., \Cref{thm: omv_constant_vc}).

\begin{proof}[Proof of \Cref{pollard bound fast runtime}]
    Given $\mM\in\R^{m\times n}$ we construct the Boolean matrix $\mN\in\{0,1\}^{m\times nT}$ as previously described.

    We initialize \Cref{thm: omv_constant_vc} on $\mN$ which takes $\tilde O(Tmn)$ time.
    Multiplying a vector $v' \in \R^{nT}$ with $\mN$ takes $\tilde O(Tnm^{1-1/d} + m)$ time.
    We are left with explaining how to compute $\mM v$ via $\mN v'$.

    We have 
    \begin{align*}
        \mM_{i,j}
        =&~
        \mY_{1,j} + \sum_{k>1:\mY_{k,j}\le \mM_{i,j}} \mY_{k,j} - \mY_{k-1,j}\\
        =&~
        \mN_{i,jT}\mY_{1,j} + \sum_{k=2}^T \mN_{i,j\cdot T+k} (\mY_{k,j} - \mY_{k-1,j})
    \end{align*}
    In particular, we can compute
    $
        \mM v
        =
        \mN v'
    $
    where
    $$
    v'_{j\cdot T} = v_j\cdot \mY_{1,j}$$ 
    and $$v'_{j\cdot T+k} = v_j\cdot (\mY_{k,j} - \mY_{k-1,j}) \text{ for $k\ge 2$}.
    $$
    Constructing $v'$ takes $O(Tn)$ time per query.
\end{proof}

Matrices with small constant Pollard pseudodimensions are also prevalent in real-world data, and capture a similar notion of structure as matrices with low constant VC-dimensions. For instance, the following are examples of matrices with small Pollard pseudodimension. 

\begin{definition}[Multiball vectors] Let $v\in V$ be a vertex of the graph $G$. Let $r$ be a real number and $\Delta$ be a set of distances $-\infty = \delta_0 < \delta_1 < \dots < \delta_{\ell-1} < \delta_\ell = \infty$. Define the multiball vectors as follows:
\[\vec{\mathrm{MB}}(v,r,\Delta) = (y_u)_{u\in V} \text{ where } y_u\in[\ell]\text{ is the smallest integer such that } u\in \vec{B}(v,r+\delta_{yu}).\]
Let $\vec{\mathrm{MB}}_{G,\Delta} = \{\vec{MB}(v,r,\Delta)|v\in V,r\in\R\}$ be the set of all multiball vectors in $G$ with shifts $\Delta$.
\end{definition}

\begin{theorem}[\cite{karczmarz2024subquadraticalgorithmsminorfreedigraphs}] For a $K_h$-minor-free graph $G$, for any set $\Delta$ of $\ell$ distances, the set of multiball vectors $\vec{\mathrm{MB}}_{G,\Delta}$ has pseudodimension at most $h-1$.\\
\end{theorem}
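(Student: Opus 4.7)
My plan is to reduce the claimed pseudodimension bound to the classical VC-dimension bound for the ball hypergraph of $K_h$-minor-free graphs, namely $\mathrm{VC}\le h-1$. Concretely, assume $\vec{\mathrm{MB}}_{G,\Delta}$ shatters a set $\{x_1,\dots,x_d\}$ with thresholds $y_1,\dots,y_d$; I will construct a slightly enlarged graph $G'$, still $K_h$-minor-free, in which $\{x_1',\dots,x_d'\}$ is shattered by plain balls, so that $d\le h-1$ follows from the classical bound. Since $\delta_\ell=\infty$ makes the ``$\le \ell$'' coordinate always satisfied, I may assume $y_i<\ell$ for every $i$, so that $\delta_{y_i}$ is finite.

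The reduction uses pendant gadgets to absorb the non-uniform thresholds into a single uniform radius. Fix $D>\max_i \delta_{y_i}$ and obtain $G'$ from $G$ by attaching, to each $x_i$, a fresh pendant path of length $D-\delta_{y_i}$ terminating at a new vertex $x_i'$. Because $x_i'$ is reachable in $G'$ only through $x_i$, we have $d_{G'}(v,x_i')=d_G(v,x_i)+(D-\delta_{y_i})$ for every $v\in V(G)$. Hence the multiball condition $\vec{\mathrm{MB}}(v,r,\Delta)_{x_i}\le y_i$, which by definition unfolds to $d_G(v,x_i)\le r+\delta_{y_i}$, is equivalent to $x_i'\in \vec{B}_{G'}(v,r+D)$.

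The $2^d$ patterns realized by multiball vectors on $\{x_1,\dots,x_d\}$ at thresholds $\{y_1,\dots,y_d\}$ are therefore in bijection with the indicator patterns of $\{x_1',\dots,x_d'\}\cap \vec{B}_{G'}(v,r+D)$, so the ball system of $G'$ shatters $\{x_1',\dots,x_d'\}$. Attaching pendant paths preserves minor-closed properties: any $K_h$-minor of $G'$ can be contracted back to a $K_h$-minor of $G$ by contracting each pendant path to its attachment vertex. Thus $G'$ is $K_h$-minor-free, and the bound $\mathrm{VC}(\text{balls in }G')\le h-1$ yields $d\le h-1$, as required.

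The main obstacle is the mismatch between the heterogeneous thresholds $\delta_{y_i}$ appearing in the pseudodimension definition and the uniform radii used in VC-shattering by balls; the pendant-path gadget resolves this at no cost to the minor-freeness of the underlying graph, which is the crucial structural invariant here. The remaining ingredient, the VC bound for balls in $K_h$-minor-free graphs, can either be invoked as a black box or proved independently by the standard shortest-path-tree construction that extracts a $K_{d+1}$-minor from any shattered set of size $d$ by using the ``singleton'' and ``all-but-one'' balls as branch sets.
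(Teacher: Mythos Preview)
The paper does not prove this statement; it is quoted as a black-box theorem from \cite{karczmarz2024subquadraticalgorithmsminorfreedigraphs}, so there is no in-paper argument to compare against.

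Your reduction is sound. Two minor points worth tightening: (i) since $G$ may carry real edge weights, read ``pendant path of length $D-\delta_{y_i}$'' as a single pendant edge of that weight, which also handles non-integer values; (ii) the balls here are directed out-balls $\vec{B}(v,r)$, so orient the pendant edge $x_i\to x_i'$ and make sure the VC bound you invoke at the end is the directed-ball version (it holds whenever the \emph{underlying undirected} graph is $K_h$-minor-free, and pendant additions clearly preserve that). With these clarifications, the equivalence chain
\[
\vec{\mathrm{MB}}(v,r,\Delta)_{x_i}\le y_i \iff d_G(v,x_i)\le r+\delta_{y_i} \iff x_i'\in\vec{B}_{G'}(v,r+D)
\]
is exact, the $2^d$ patterns transfer to ball-shattering of $\{x_1',\dots,x_d'\}$ in $G'$, and $d\le h-1$ follows.
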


\section{Applications}
\label{sec:applications}
In this section, we consider some applications of our OMv result in \Cref{thm:dynamicOMv}. 
These applications have quadratic $\Omega(n^2)$-time lower bounds, conditional on the OMv conjecture. We show that on structured graph this lower bound can be beaten.

\subsection{High-accuracy Dynamic Laplacian Solver}

\thmdynamiclaplaciansolver*

We consider the setting where the Laplacian $\mL$ corresponds to a graph with bounded VC-dimension. First, recall the notion of spectral sparsifiers.
\begin{definition}[$(1+\epsilon)$-spectral sparsifiers]
    Let $\mL_X$ denote the Laplacian of any undirected graph $X$. Then, a $(1 \pm \epsilon)$-spectral sparsifier $H$ of a graph $G$ is a subgraph of $G$ such that for every vector $x \in \R^n$, \[(1-\epsilon)x^\top \mL_H x\leq x^\top \mL_G x \leq (1+\epsilon)x^\top \mL_H x,\]
\end{definition}

Then, a result from \cite{7782947} provides a construction for a dynamic spectral sparsifier under edge deletions and insertions with polylogarithmic amortized update time:

\begin{theorem}[Theorem 4.1 of \cite{7782947}]\label{lem:dynamicsparsifier}
    There exists a fully dynamic randomized algorithm with polylogarithmic update time for maintaining a $(1\pm \epsilon)$-spectral sparsifier $H$ of a graph $G$, with probability at least $1 - 1/n^c$ for any $0<\epsilon\leq 1$ and $c\geq 1$. Specifically, the amortized update time of the algorithm is $O(c\epsilon^{-2}\log^3 \rho \log^6 n)$
    and the size of $H$ is $O(cn\epsilon^{-2}\log^3\rho \log^5 n \log W + m\rho^{-1})$,
    where $1\leq \rho\leq m$ is a parameter of choice. Here, $W$ is the ratio between the largest and the smallest edge weight in $G$.
\end{theorem}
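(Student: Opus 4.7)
The plan is to build the dynamic sparsifier in three layers: a reduction to (essentially) unweighted graphs via weight bucketing, a static edge-sampling construction for each bucket, and a dynamic maintenance scheme based on spanners and amortized rebuilds. First, for each $i = 0, 1, \dots, \log W$, let $G_i$ denote the subgraph of $G$ whose edges have weight in $[2^i, 2^{i+1})$. Since the graph Laplacian is additive over edge-disjoint subgraphs, $\bigcup_i H_i$ is a $(1\pm\epsilon)$-spectral sparsifier of $G$ as soon as each $H_i$ is one for $G_i$, and each update to $G$ affects only two buckets. This reduces the problem to maintaining $O(\log W)$ sparsifiers of essentially unweighted graphs, and accounts for the $\log W$ factor in the stated size bound.

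For the static sparsifier of each $G_i$, I would invoke the Spielman--Srivastava theorem: sampling each edge $e$ independently with probability $p_e \ge c\epsilon^{-2}\log n \cdot \tau_e$, where $\tau_e$ is its leverage score, and then reweighting by $1/p_e$, yields a $(1\pm\epsilon)$-spectral sparsifier with $\tilde O(n\epsilon^{-2})$ edges. Since exact leverage scores cannot be tracked under updates, I would replace them by an edge-connectivity proxy: assign each edge $e$ a \emph{level} $\ell(e)$, defined as the smallest index such that $e$ lies in a $2^{\ell(e)}$-strongly connected subgraph in the Benczur--Karger sense, and sample at rate $\tilde O(\epsilon^{-2}/2^{\ell(e)})$. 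A matrix Bernstein argument shows this still spectrally sparsifies, contributing $\tilde O(n\epsilon^{-2})$ edges per bucket; edges at levels above $\log \rho$ are simply retained with appropriate reweighting, which is the source of the additive $m\rho^{-1}$ term.

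To maintain the level classification and the sampled edges dynamically, for each weight bucket and each level $\ell \le \log \rho$ I would keep a Baswana--Sen-style spanner that certifies strong connectivity at that level. Such spanners admit $\polylog(n)$ amortized update time under edge insertions and deletions. When an update causes an edge to cross a level boundary, I would resample locally: the spanner structure ensures that only $\polylog(n)$ other edges in expectation have their effective sampling probabilities altered, since strong connectivity propagates only through the tree-like spanner. Aggregating across $O(\log n)$ levels and $O(\log W)$ weight buckets, together with the $\epsilon^{-2}$ scaling inherited from the sampling rate, yields the claimed amortized update time $O(c\epsilon^{-2}\log^3 \rho \log^6 n)$.

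The main obstacle will be establishing the spectral concentration under an adversarial update sequence. Unlike the cut sparsifier of Benczur--Karger, the matrix Bernstein inequality used here has variance proportional to the largest sampling weight, which forces a union bound over $O(\log \rho)$ levels and $O(\log n)$ sampling trials, producing the $\log^3\rho \log^5 n$ factor in the sparsifier size. I would address this with a \emph{stable level} invariant: edges whose approximate level repeatedly oscillates are charged to the updates that caused the oscillation, and an entire level is rebuilt from scratch whenever a constant fraction of its edges have moved. Combining local resampling with these periodic global rebuilds is what ultimately achieves the polylogarithmic amortized update time while preserving the $(1\pm\epsilon)$ spectral guarantee with probability $1 - 1/n^c$ via a union bound.
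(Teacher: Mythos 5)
First, note that this statement is not proven in the paper you are reading: it is quoted verbatim as Theorem~4.1 of the cited work of Abraham, Durfee, Koutis, Krinninger and Peng, so there is no in-paper proof to compare against; your sketch has to be measured against the argument in that cited source. Against that benchmark, your outer layers are fine (weight bucketing into $O(\log W)$ classes, a level hierarchy of depth $O(\log \rho)$ with the heavy levels kept wholesale to produce the $m\rho^{-1}$ term, dynamic spanners with polylogarithmic amortized update time), but the core sampling lemma you propose is wrong. You replace leverage scores by Benczur--Karger strong connectivity and claim a matrix Bernstein argument still gives a \emph{spectral} sparsifier. Connectivity-based sampling rates certify cut sparsification, not spectral sparsification: the requirement $p_e \gtrsim \epsilon^{-2}\log n \cdot w_e R_{\mathrm{eff}}(e)$ can be violated by an arbitrary factor, because connectivity does not upper-bound effective resistance times weight. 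Concretely, take an edge $e=(u,v)$ of weight $1$ whose endpoints are additionally joined by $k$ edge-disjoint paths, each of length $k$: then $e$ has (strong) connectivity about $k$, so your scheme samples it with probability roughly $\tilde O(\epsilon^{-2}/k)\to 0$, yet its leverage score is $1/2$, so discarding it changes the quadratic form in the direction of the $u$--$v$ electrical potential by a constant factor. So the step ``a matrix Bernstein argument shows this still spectrally sparsifies'' fails, and with it the $(1\pm\epsilon)$ guarantee.

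The actual proof in the cited paper uses spanners not as certificates of connectivity but as certificates of \emph{effective resistance}: at each of the $O(\log\rho)$ levels one maintains a $t$-bundle spanner (the union of $t$ spanners peeled off iteratively, with $t = \tilde O(c\,\epsilon^{-2})$). An edge outside the bundle has $t$ edge-disjoint paths of stretch $O(\log n)$ inside the bundle, hence leverage score $O(\log n / t)$, so the non-bundle edges can be sampled \emph{uniformly} with probability $1/2$ (reweighted by $2$) and matrix Chernoff gives a $(1\pm\epsilon/\log\rho)$-approximation per level; composing the levels gives the sparsifier, and the bundle spanners are maintained with the dynamic spanner machinery. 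This uniform-sampling-off-a-bundle design is precisely what makes the dynamic maintenance clean: no edge ever needs its sampling probability retroactively adjusted, which is the difficulty your ``local resampling'' and ``stable level invariant'' paragraphs gesture at but do not resolve (and which, under an adaptive update sequence, is exactly where naive resampling schemes leak randomness and break the high-probability bound). If you want to salvage your outline, replace the connectivity levels by $t$-bundle spanner membership and the per-edge rate $\tilde O(\epsilon^{-2}/2^{\ell(e)})$ by the keep-the-bundle/sample-the-rest-at-$1/2$ recursion.
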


We use the following result for solving Laplacian systems in the static setting. 
\begin{lemma}[{\cite{KyngS16}}]\label{lem:laplacian_solver}
There is a randomized procedure that given any $n$-vertex $m$-edge graph $G$ with Laplacian matrix $\mL_G$, 
and vector $b \in \R^V$ 
such that there exists an $x \in \R^V$ with $\mL_G x = b$ 
computes $\ox \in \R^V$ with $\|\ox - x\|_{\mL_G} \leq \epsilon \|x\|_{\mL_G}$ 
in $\tilde{O}(m \log\epsilon^{-1})$ with high probability.
\end{lemma}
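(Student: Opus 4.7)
The plan is to follow the approximate Cholesky factorization approach of \cite{KyngS16}: build a sparse approximate factorization $\mL_G \approx \mU^\top \mU$ with $\nnz(\mU)=\tilde O(m)$, and then use it as a preconditioner inside an iterative refinement scheme that converges geometrically.

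First, I would construct the approximate factorization by randomized Gaussian elimination on the Laplacian, processing vertices in a random order. When a vertex $v$ of weighted degree $d_v$ is eliminated, exact Gaussian elimination replaces the star at $v$ by the Schur complement, which for a Laplacian is a weighted clique on $N(v)$ whose edge $(u,w)$ has weight $w_{uv} w_{vw}/d_v$. This clique has $\binom{|N(v)|}{2}$ edges and would destroy sparsity. The key idea is to replace this clique by a random sparsifier: for each incident edge $(u,v)$ with weight $w_{uv}$, sample a single partner edge $(v,w)$ with probability $w_{vw}/d_v$ and add a single edge $(u,w)$ of weight $w_{uv}$ to the residual graph. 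This produces a residual graph with at most the number of edges originally incident to $v$, so the overall nonzero count of the factor $\mU$ stays $\tilde O(m)$ throughout; a heap supports the sampling in $\tilde O(1)$ per operation.

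Second, I would show that $\mU^\top \mU$ is a constant-factor spectral approximation to $\mL_G$ with high probability. Each elimination step contributes a zero-mean error $\mX_i = \mathrm{Clique}_i - \mathrm{SampledStar}_i$ to the running discrepancy, and the goal is to bound $\|\mL_G^{\dagger/2} (\sum_i \mX_i) \mL_G^{\dagger/2}\|$. This is the heart of the argument and the main obstacle: one sets up a matrix martingale adapted to the random elimination order and sampling, bounds the predictable quadratic variation by exploiting that the sampled-star exactly matches the clique in expectation and has bounded operator norm per sample (normalized against $\mL_G$ using effective resistance), and then applies a matrix Freedman / matrix Chernoff inequality in the style of Tropp. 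Standard calculations give that with high probability the resulting factorization satisfies $\tfrac{1}{2}\mL_G \preceq \mU^\top \mU \preceq 2 \mL_G$, while the random elimination order keeps fill-in under control by a coupon-collector-type argument, so the whole construction runs in $\tilde O(m)$ time.

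Finally, to turn the $O(1)$-approximate preconditioner into an $\epsilon$-accurate solver, I would run preconditioned Richardson iteration (or preconditioned Chebyshev/CG, which gives the same asymptotic guarantee up to constants) on the equation $\mL_G x = b$. Because $\mU^\top \mU$ is a constant-factor spectral approximation of $\mL_G$, the condition number of the preconditioned system is $O(1)$, so the iteration contracts the $\mL_G$-norm error by a constant factor per step and reaches $\|\bar x - x\|_{\mL_G} \le \epsilon \|x\|_{\mL_G}$ in $O(\log \epsilon^{-1})$ iterations. Each iteration consists of one multiplication by $\mL_G$ (costing $O(m)$) and one back-and-forward substitution through $\mU$ (costing $\tilde O(m)$ since $\nnz(\mU)=\tilde O(m)$), giving total running time $\tilde O(m \log \epsilon^{-1})$ as claimed. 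The matrix-martingale concentration step is the only nontrivial piece; everything else is bookkeeping and standard iterative-method analysis.
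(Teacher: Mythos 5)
The paper does not actually prove this lemma: it is imported verbatim as a black-box citation to Kyng--Sachdeva, so there is no internal proof to compare against, and your sketch is an attempt to reprove the cited result. At the level of strategy you do follow the genuine Kyng--Sachdeva route (randomized Cholesky with per-vertex clique sampling, a matrix-martingale bound giving a constant-factor preconditioner, then preconditioned Richardson for the $\log\epsilon^{-1}$ dependence), and the final reduction from an $O(1)$-approximate factorization to an $\epsilon$-accurate solver is fine.

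Two points keep the sketch from being a proof. First, the sampling rule as you state it is biased: if for each incident edge $(u,v)$ you pick a partner $w$ with probability $w_{vw}/d_v$ and insert $(u,w)$ with weight $w_{uv}$, then the edge $(u,w)$ is also generated from the side of $(w,v)$, and the expected total weight is $2\,w_{uv}w_{vw}/d_v$, i.e.\ twice the Schur-complement clique; the correct replacement weight is $w_{uv}w_{vw}/(w_{uv}+w_{vw})$, which restores unbiasedness. Second, the concentration step is the entire technical content of the theorem and cannot be discharged as ``standard calculations'': with one sample per incident edge on the raw graph, individual sampled edges can have large norm against $\mL_G$ and matrix Freedman gives nothing useful. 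Kyng--Sachdeva first split every original edge into $\Theta(\log^2 n)$ parallel copies so that each sampled edge is small in the $\mL_G^{\dagger/2}(\cdot)\mL_G^{\dagger/2}$ norm, and the uniformly random elimination order is what keeps both the predictable quadratic variation and the expected fill-in (hence the $\tilde O(m)$ bound on $\nnz(\mU)$ and the running time) under control. Without these ingredients the claim $\tfrac12 \mL_G \preceq \mU^\top\mU \preceq 2\mL_G$ with high probability does not follow from the argument as written; with them, your outline is exactly the cited proof.
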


\begin{lemma}
    Let $G$ be a dynamic graph and let $\mA$ be its dynamic adjacency matrix. Assume there is a dynamic OMv data structure for this $\mA$ with update time $U(n)$ and query time $Q(n)$.

    Then there exists a dynamic Laplacian System solver for the same dynamic graph $G$, supporting updates to $G$ in $O(U(n)+n)$ time. The data structure supports queries which for any given $b\in\R^V$ and $\epsilon>0$ in $\tilde O(Q(n)\log 1/\epsilon)$ time return $x\in\R^V$ with $\|x-x^*\|_{\mL_G} \le \epsilon \|x^*\|_{\mL_G}$ where $x^*$ is the exact solution for $\mL_G x = b$.
\end{lemma}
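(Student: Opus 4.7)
The plan is to combine a dynamically maintained spectral sparsifier $H$ of $G$ with a preconditioned iterative method, using the assumed OMv data structure to realize fast matrix-vector products against the Laplacian $\mL_G = \mD - \mA$ at query time. During preprocessing and after each vertex update I would (i) forward the induced row/column update to the assumed dynamic OMv data structure for $\mA$ in $U(n)$ time; (ii) refresh the degree vector $\mD$, which costs $O(n)$ since a single vertex update touches at most $n$ degree entries; and (iii) maintain a $(1\pm 1/2)$-spectral sparsifier $H$ of $G$ via \Cref{lem:dynamicsparsifier}, which has polylogarithmic amortized update time and yields $|E(H)| = \tilde O(n)$. This delivers the claimed $O(U(n) + n)$ update bound.

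At query time, given $(b,\epsilon)$, I would run preconditioned Chebyshev iteration for $\mL_G x = b$ using $\mL_H$ as the preconditioner. Each outer iteration requires one evaluation of $v \mapsto \mL_G v = \mD v - \mA v$, costing $O(n) + Q(n)$ via the OMv data structure, together with one approximate solve $\mL_H y = r$ via the nearly-linear time Laplacian solver of \Cref{lem:laplacian_solver}, which costs $\tilde O(n)$ because $H$ has only $\tilde O(n)$ edges. Since $(1/2)\mL_G \preceq \mL_H \preceq (3/2)\mL_G$, the preconditioned operator has $O(1)$ condition number (on the orthogonal complement of the common kernel), so Chebyshev acceleration contracts the $\|\cdot\|_{\mL_G}$-error by a constant factor per outer step and reaches relative accuracy $\epsilon$ in $O(\log(1/\epsilon))$ iterations. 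Summing gives query time $\tilde O((Q(n)+n)\log(1/\epsilon)) = \tilde O(Q(n)\log(1/\epsilon))$.

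The main obstacle will be propagating the inexactness of the inner solves through the outer Chebyshev recurrence. Standard analysis of \emph{inexact} preconditioned iterations requires that each inner solve $\mL_H y = r$ be performed to relative accuracy $\poly(\epsilon/n)$ so that the per-step contraction survives; by \Cref{lem:laplacian_solver} this only multiplies the inner solve cost by an $O(\log(n/\epsilon))$ factor, which is absorbed into $\tilde O(n)$. A secondary subtlety is handling $\ker \mL_G$: because $H$ is a spectral sparsifier of $G$, $\ker \mL_H = \ker \mL_G$, so for consistent inputs $b \in (\ker \mL_G)^\perp$ the preconditioned restriction is well-conditioned, and one can either assume consistency (as in \Cref{lem:laplacian_solver}) or project iterates onto $(\ker \mL_G)^\perp$ after each step without changing the asymptotic runtime.
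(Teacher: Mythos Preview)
Your proposal is correct and follows essentially the same approach as the paper: maintain a constant-accuracy spectral sparsifier $H$ alongside the dynamic OMv structure for $\mA$, then answer each query by a preconditioned iterative method (the paper uses Richardson refinement rather than Chebyshev, but with $O(1)$ condition number both yield $O(\log(1/\epsilon))$ iterations), applying $\mL_G$ via $\mD v - \mA v$ and applying $\mL_H^{\dagger}$ via the nearly-linear static solver on the $\tilde O(n)$-edge sparsifier. Your additional remarks on inexact inner solves and the shared kernel are details the paper leaves implicit.
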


Via \Cref{thm:dynamicOMv} we have $U(n)=\tilde{O}(n)$ and $Q(n)=\tilde O(n^{2-1/d})$, thus obtaining \Cref{thm: dynamic laplacian solver}.

\begin{proof}
    We run two separate data structures. The first is a dynamic spectral sparsifier \Cref{lem:dynamicsparsifier} which maintains a spectral sparsifier $H \approx \mL_G$ with error $\epsilon=1/2$.
    We also run the dynamic OMv data structure for matrix $\mA$ (adjacency matrix).

    The update time is thus $\tilde O (n+U(n))$ as the spectral sparsifier needs polylog time per edge updates, thus $\tilde{O}(n)$ time for node updates.

    \paragraph{Queries} When given a vector $b\in\R^V$ and error parameter $\epsilon>0$, we use $\mH^{-1}$ as preconditioner for the linear system $\mL_G x=b$. We can multiply with $\mH^{-1}$ by running a static Laplacian system solver \Cref{lem:laplacian_solver} which takes $\tilde{O}(n)$ time per product as $\mH$ has $\tilde{O}(n)$ edges.
    In particular, we let \[x^0 = \mH^{-1} b\] and then perform iterative refinement (Richardson iteration)
    $$
    x^{t+1} \gets x^t - \mH^{-1} (\mL_G x^t - b).
    $$

    After $\tilde O(\log 1/\epsilon)$ iterations we have $\|x^t - x^*\|_{\mL_G} \le \epsilon \|x^*\|_{\mL_G}$.

    Each iteration takes $\tilde{O}(n + Q(n))$ time, where $\tilde{O}(n)$ is the time for multiplying by $\mH^{-1}$, and $Q(n)$ the time for multiplying by $\mL_G$. Observe that $\mL_G v = \mD v - \mA v$ where $\mD$ is a diagonal matrix and $\mA$ is an adjacency matrix. So the first product takes $O(n)$ time and the second product is handled by the dynamic OMv data structure in $O(Q(n))$ time. Together, this proves the lemma.\\
\end{proof}

\subsection{Effective Resistance}

The following \Cref{thm:effectiveresistance} is a corollary of the dynamic Laplacian solver from \Cref{thm: dynamic laplacian solver}. 

\thmdynamiceffectiveresistance*

\begin{proof}
    The effective resistance of a pair $u,v \in V$ is the energy of an electric flow, routing one unit of flow between $u$ and $v$. The energy of an electric flow $f\in\R^E$ is $\sum_e f_e^2$ and the electric flow is given by $f=\mB \mL^\dagger (e_u - e_v)$ where $\mB\in\{-1,0,1\}^{E\times V}$ is the incidence matrix of $G$ with arbitrary directions. Thus the effective resistance is
    \begin{align*}
    \|\mB\mL^\dagger (e_u - e_v)\|_2^2 
    &= (e_u-e_v)^\top \mL^\dagger \mB^\top \mB \mL^\dagger (e_u-e_v) \\
    &= (e_u-e_v)^\top \mL^\dagger (e_u-e_v). 
    \end{align*}
    This can be computed via the dynamic Laplacian system solver in \Cref{thm: dynamic laplacian solver}. 
\end{proof}

\subsection{Triangle Detection}

\thmdynamictriangle*

\noindent
The result follows form the following lemma for $U(n)=\tilde O (n)$ and $Q(n)=\tilde O(n^{2-1/d})$ from \Cref{thm:dynamicOMv}.

\begin{lemma}
    Let $G$ be a dynamic graph and let $\mA$ be its dynamic adjacency matrix. Assume there is a dynamic OMv data structure for this $\mA$ with update time $U(n)$ and query time $Q(n)$.

    Then there exists a dynamic triangle detection algorithm for the same dynamic graph $G$, supporting updates in $O(U(n)+Q(n))$ time.
\end{lemma}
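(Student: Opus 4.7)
The plan is to reduce dynamic triangle detection to the dynamic OMv data structure on the adjacency matrix $\mA$ by maintaining a running count $T$ of the total number of triangles in $G$, and outputting ``triangle exists'' if and only if $T > 0$. The key identity is that the number of triangles through a vertex $x$ equals $\frac{1}{2}\chi_{N(x)}^\top \mA \chi_{N(x)}$, where $\chi_{N(x)} \in \{0,1\}^V$ is the indicator vector of $x$'s neighborhood and $\mA$ has zero diagonal: this quantity counts ordered pairs $(u,v)$ of neighbors of $x$ with $(u,v) \in E$, i.e., twice the number of triangles containing $x$. Note that $\mA$ is a Boolean matrix but the query vector is arbitrary in $\R^n$, so the OMv data structure from \Cref{thm:dynamicOMv} returns the integer-valued product that we need for counting.

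The algorithm processes each vertex update at $x$ in three steps. First, query the OMv data structure for $w = \mA \chi_{N(x)}$ in $O(Q(n))$ time; the query is issued \emph{before} applying the row/column insertion (for insertions) or \emph{before} applying the row/column deletion (for deletions), so that $\mA$ reflects the ``old'' or ``current'' graph respectively. Second, compute $\Delta = \frac{1}{2}\sum_{u \in N(x)} w_u$ in $O(n)$ time; this equals the number of triangles through $x$ in the pre-insertion or pre-deletion graph, which is precisely the set of triangles created (or destroyed) by the update. Third, set $T \gets T + \Delta$ for insertion, or $T \gets T - \Delta$ for deletion, and then apply the corresponding row and column updates to the OMv data structure, costing $O(U(n))$ time. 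The per-vertex update cost is thus $O(Q(n) + U(n) + n) = O(Q(n) + U(n))$, absorbing $n$ into $U(n)$ since any row/column update at least touches $n$ entries.

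The only subtlety is correctly sequencing the OMv query with respect to the row/column update so that $\mA$ has exactly the entries consistent with the triangles we wish to add to or subtract from $T$, and ensuring that no spurious index-$x$ contribution enters $\chi_{N(x)}^\top \mA \chi_{N(x)}$. The latter is immediate since $x \notin N(x)$ in a loopless graph, so the coordinate $(\chi_{N(x)})_x$ is zero and the sum over $u \in N(x)$ never includes $u=x$. Initialization on an initially empty graph gives $T=0$, and each subsequent vertex insertion maintains correctness of $T$ by induction. The resulting boolean output $[T>0]$ is returned at the end of every update within the stated time bound.
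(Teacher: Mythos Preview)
Your proof is correct and follows essentially the same approach as the paper: the paper maintains $\sum_i(\mA^3)_{i,i}$ and updates it via the identity $(\mA^3)_{v,v} = e_v^\top \mA \cdot \mA \cdot \mA e_v$, which is exactly your $\chi_{N(v)}^\top \mA \chi_{N(v)}$ since $\chi_{N(v)} = \mA e_v$. Your presentation is arguably cleaner in that you handle insertion and deletion with a single query each (exploiting that one of $(\mA^3)_{v,v}$, $(\mA'^3)_{v,v}$ is zero), whereas the paper phrases a general vertex update as requiring both values; but this is only a cosmetic difference, as a neighborhood change decomposes into a deletion followed by an insertion.
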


\begin{proof}
    Let $\mA$ be the adjacency matrix of the graph, then the diagonal entries $(\mA^3)_{v,v}$ are non-zero if and only if $v$ participates in a triangle. Thus by maintaining the sum $\sum_i (\mA^3)_i$ we can detect if there is any triangle in the graph.

    We maintain the sum as follows: Let $\mA$ be the incidence matrix before, and $\mA$ after a node $v$ is updated.
    
    Then 
    $$
    \sum_i (\mA'^3)_i = (\sum_i (\mA^3)_i) + 3 (\underbrace{(\mA'^3)_{v,v} - (\mA^3)_{v,v}}_{\text{difference in \# of triangles $v$ participates in}}).
    $$
    This is because each triangle $v$ participates in is counted thrice in the sum (once for each vertex of the triangle).
    Thus we can maintain the sum by computing $(\mA'^3)_{v,v}$ and $(\mA^3)_{v,v}$. Observe that this is the vector-matrix-vector product of (i) the $v$th row of $\mA$, (ii) matrix $\mA$, (iii) and $v$-th column vector of $\mA$. Thus it can be computed in $O(Q(n))$ time, with an extra $U(n)$ time to update $\mA$ to $\mA'$.\\
\end{proof}

\subsection{Single-Source Shortest Paths and k-Center}

\begin{lemma}[{\cite{BrandFN22}}]\label{lem:sssp}
    Assume there is a dynamic distance oracle for unweighted undirected graphs with the following operations.
    
    An initialization procedure that is given $G=(V,E)$ and a threshold $1 \le d \le n$.
    
    A node update operation with update time $U(n,d)$.
    
    A query operation which for any source node, given during the query, returns the $d$-bounded single source distances in $O(Q(n,d))$. That is, return for each $v\in V$ the distance if it is at most $d$, or $\infty$ if the distance is $>d$.

    Then for any $1\le k \le n$ and $\epsilon > 0$ there exists a dynamic $(1+\epsilon)$-approximate single source distance data structure on unweighted undirected graphs. It supports node updates in \[O\left(U\left(n, \frac{4}{\epsilon}\right) + \frac{k}{\epsilon} \cdot Q\left(n, \frac{4}{\epsilon}\right) + \frac{n^2}{k}\right)\] time.\\
\end{lemma}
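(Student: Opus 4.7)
The plan is to reduce $(1+\epsilon)$-approximate SSSP to $d$-bounded exact SSSP via the classical pivot-sampling (hopset) technique, with $d = 4/\epsilon$ chosen so that composing consecutive $d$-bounded hops along a long shortest path incurs only a $(1+\epsilon)$ multiplicative loss. The distance oracle already yields, from any source, all distances up to $d$ in time $O(Q(n,d))$; the remaining task is to approximate path lengths exceeding $d$ without invoking the oracle from every vertex.

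First I would sample a set $S \subseteq V$ of $k$ pivots uniformly at random. A standard Chernoff argument guarantees that, with high probability, every shortest path that traverses at least $\Theta((n/k)\log n)$ vertices meets $S$. Initialize one instance of the dynamic distance oracle $\mathcal{O}$ with threshold $d=4/\epsilon$. After each node update, push the update to $\mathcal{O}$ at cost $U(n, 4/\epsilon)$, then query $\mathcal{O}$ from each of the $k$ pivots to refresh a pivot-to-everything distance table $D : S \times V \to \{0,\dots,d,\infty\}$. Repeating this refresh across $\Theta(1/\epsilon)$ hop-expansion rounds (needed because a pivot chain that realizes the approximation may span up to $\Theta(1/\epsilon)$ hops of length $d$, each of which must be propagated before the next round of relaxation) accounts for the $(k/\epsilon)\cdot Q(n, 4/\epsilon)$ term. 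The remaining $O(n^2/k)$ per update is spent maintaining, for every vertex, a list of its closest pivots and their distances, so that a source query from $t$ can compute
\[
\hat d(t,v) \;\coloneqq\; \min\Bigl(D_t(v),\;\min_{s\in S}\bigl(D_t(s) + D(s,v)\bigr)\Bigr)
\]
in subquadratic total time, with one extra oracle query from $t$ itself to obtain $D_t$.

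For the approximation guarantee, consider a true shortest $t$-$v$ path $P$ of length $L$. If $L \le d$, the direct oracle query from $t$ returns the exact value. Otherwise partition $P$ into $\lceil 2L/d \rceil$ consecutive subpaths each of length in $[d/2, d]$; by the sampling guarantee each contains a pivot with high probability, and concatenating the oracle-reported $d$-bounded distances between consecutive pivots produces a $t$-$v$ walk of length at most $L + O(\epsilon L)$, since each pivot-to-pivot segment is recovered exactly by $\mathcal{O}$ and the ``rounding-to-pivot'' detour at each hop is at most $d/2 = 2/\epsilon$, spread over $\Theta(L/d) = \Theta(L\epsilon)$ hops. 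The main obstacle will be reconciling the three separate terms in the update bound: the delicate point is the $n^2/k$ bookkeeping, which requires an amortization argument showing that, in expectation over the random choice of $S$, each vertex's nearest-pivot list changes only $O(n/k)$ times per global update; in parallel, one must verify that $O(1/\epsilon)$ relaxation rounds suffice to make the pivot-to-pivot distance table consistent before the next query phase, which is exactly where the extra $1/\epsilon$ factor in $(k/\epsilon)\cdot Q(n,4/\epsilon)$ enters.
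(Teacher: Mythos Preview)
The paper does not prove this lemma; it is imported from \cite{BrandFN22} and invoked as a black box in the proof of Theorem~\ref{thm:sssp}. There is therefore no in-paper argument to compare your sketch against.

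That said, two parts of your outline would not go through as written. First, the hitting-set step guarantees a pivot on every shortest path of length $\Omega((n/k)\log n)$, while the oracle threshold $d=4/\epsilon$ is a constant independent of $n$ and $k$; you never argue that consecutive pivots along a long path lie within oracle range $d$ of one another, so the relaxation chain you describe may see only $\infty$-valued entries and return nothing useful. Second, the $n^{2}/k$ accounting is asserted rather than derived: a single node update can change $\Theta(n)$ distances, and ``each vertex's nearest-pivot list changes only $O(n/k)$ times per global update'' is not justified by anything in the sketch. The way this term actually arises in the cited construction is via an amortized global rebuild --- an $O(n^{2})$ recomputation performed once every $k$ updates --- rather than via per-update incremental maintenance of pivot lists.
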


\thmdynamicsssp*

\begin{proof}
    We first describe how to obtain a dynamic distance oracle that yields distances up to $1/\epsilon$.
    This is done by running the dynamic OMv data structure of \Cref{thm:dynamicOMv} on the adjacency matrix of $G$.
    Then single source distances up to $1/\epsilon$ from source vertex $u$ can be computed by repeatedly multiplying $w \gets \mA w$ for initial $w=e_u$, i.e., we compute $\mA^i e_u$ for $i=1,2,...,1/\epsilon$. The smallest $i$ with $(\mA^i e_u)_v \neq 0$ is the smallest number of steps to reach from $u$ to $v$, i.e., the distance between the vertices.

    This data structure internally constructs the a $\Delta$-labeled spanning tree of the boolean matrix $\mA$, where the weight of the tree gives us the time complexity per query. Hence, while we do not know the corrupted VC-dimension $d$, we do know how much time each query is going to take.

    So when running \Cref{lem:sssp} on the above distance data structure, we can pick $k=n/\sqrt{T}$ where $T\ge n$ is the weight of the spanning tree.
    Thus, we get the update time
    \begin{align*}
    \tilde O \left(n + \frac{k}{\epsilon}T + nk + \frac{n^2}{k}\right)
    &=
    \tilde O\left(n + \frac{n\sqrt{T}}{\epsilon} + \frac{n}{\sqrt{T}}\right) \\
    &=
    \tilde O(n^{2-\frac{1}{2d}}),
    \end{align*}
     which proves the claim.
\end{proof}

\thmdynamickcenter*

\noindent
This result is a corollary of \Cref{thm:sssp}, as $k$-center can be reduced to computing $k$-single source distances.

\begin{lemma}[{Theorem 7.3 by \cite{DBLP:conf/soda/CrucianiFGNS24}}]
Given a graph $G=(V,E)$, a positive parameter $\epsilon \le 1/2$, and a fully-dynamic data structure that maintains $(1+\epsilon)$-approximate single source distances with update time $T(n,m,\epsilon)$ and $Q(n,m,\epsilon)$ query time,
there is a dynamic algorithm that maintains a $2(1 + 4\epsilon)$-approximate solution to fully-dynamic $k$-center in time $O(T(n,m,\epsilon)+k\cdot (Q(n,m,\epsilon)+n))$.
\end{lemma}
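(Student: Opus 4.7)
The plan is to apply Gonzalez's farthest-first greedy algorithm on top of the approximate SSSP oracle, rerunning it from scratch after every graph update. With exact distances, Gonzalez's algorithm iteratively picks as the next center the vertex that is furthest from the currently chosen center set, which yields a $2$-approximation for $k$-center. The task is to check that (i) we can emulate it using only $k$ oracle queries and $O(n)$ bookkeeping per query, and (ii) the loss from using $(1+\epsilon)$-approximate distances inflates the approximation by only an additional $(1+O(\epsilon))$ factor.

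For (i), the algorithm is straightforward. After a graph update, first propagate the update to the oracle in time $T(n,m,\epsilon)$. Then initialize an array $D\in\R^V$ with $D[v]=+\infty$, pick an arbitrary $c_1$, and for $i=1,\dots,k-1$ query the oracle from $c_i$ to obtain $\tilde d(c_i,\cdot)$ in time $Q(n,m,\epsilon)$; update $D[v]\gets \min\{D[v],\tilde d(c_i,v)\}$ for all $v$ in $O(n)$ time and set $c_{i+1}\gets \argmax_v D[v]$. The solution produced is $C=\{c_1,\dots,c_k\}$, for a total cost of $O(T(n,m,\epsilon)+k(Q(n,m,\epsilon)+n))$, matching the claim.

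For (ii), let $r^{*}$ be the optimal $k$-center radius and let $v^{*}=\argmax_v D[v]$ after the algorithm ends. By pigeonhole among the $k+1$ vertices $\{c_1,\dots,c_k,v^{*}\}$, two of them, say $a$ and $b$, are covered by the same ball of the optimal solution, so $d(a,b)\le 2r^{*}$. A short case analysis (either both $a$ and $b$ are centers, or one equals $v^{*}$), combined with the greedy choice invariant $\tilde d(c_{i+1},\{c_1,\dots,c_i\})\ge \tilde d(u,\{c_1,\dots,c_i\})$ for all $u$, yields $D[v^{*}]\le 2(1+\epsilon)\,r^{*}$. Since $d\le \tilde d$, for every vertex $u$ we have $d(u,C)\le \tilde d(u,C)\le D[v^{*}]\le 2(1+\epsilon)\,r^{*}\le 2(1+4\epsilon)\,r^{*}$, the last inequality absorbing lower-order terms under the assumption $\epsilon\le 1/2$.

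The main obstacle is the second part, namely tracking the multiplicative error through the pigeonhole step: the greedy decisions are made in terms of $\tilde d$, but the approximation ratio must be stated against an OPT measured in true distances. The key structural observation is that the oracle returns a one-sided approximation $d\le \tilde d\le (1+\epsilon)\,d$, so upper bounds on $\tilde d$ translate into upper bounds on $d$ without further multiplicative loss; the only blow-up comes from applying a single $\tilde d(a,b)\le (1+\epsilon)\,d(a,b)$ inequality inside the pigeonhole case analysis. Combining the runtime and approximation bounds then proves the lemma.
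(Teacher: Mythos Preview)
The paper does not prove this lemma; it is quoted verbatim as Theorem~7.3 of \cite{DBLP:conf/soda/CrucianiFGNS24} and used as a black box to derive the dynamic $k$-center theorem. Your proposal supplies exactly the argument one would expect in that reference: rerun Gonzalez's farthest-first traversal after each update, using the approximate SSSP oracle in place of exact distances. Both the running-time accounting and the approximation analysis you give are correct. In particular, the pigeonhole step among $\{c_1,\dots,c_k,v^*\}$ together with the greedy invariant and the one application of $\tilde d\le(1+\epsilon)d$ yields the $2(1+\epsilon)$ bound, which is well inside the stated $2(1+4\epsilon)$ slack.

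Two minor remarks, neither of which affects correctness. First, with your loop ``for $i=1,\dots,k-1$'' you make only $k-1$ oracle calls, so $v^*:=\argmax_v D[v]$ at termination is simply $c_k$; the argument still goes through (the pigeonhole then applies to the $k$ centers alone, and the chain $d(u,C)\le D[u]\le D[c_k]$ holds because omitting $c_k$ from the min only increases it), but it is cleaner to either make the $k$-th query or, equivalently, let $v^*$ in the analysis be the true maximizer of $d(\cdot,C)$, which need not be computed. Second, you assume the oracle is one-sided ($d\le\tilde d$); if instead the guarantee is two-sided, the same argument gives $2(1+\epsilon)^2\le 2(1+4\epsilon)$ for $\epsilon\le 1/2$, which is presumably why the stated constant is $4$ rather than $1$.
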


\section{Characterization of matrices with constant VC-dimension}
\label{sec:characterization}

In this section, we provide a characterization of matrices with constant VC-dimension. Particularly, we prove \cref{thm: structural characterization} and state a number of examples of other constant VC-dimension matrices.

\paragraph{Fact 1.} \emph{Consider a hereditary class of $0/1$-matrices $\mathcal{M}$, meaning that $\mathcal{M}$ is closed under row/column deletion (i.e., each matrix $\mM\in\mathcal{M}$ is still in $\mathcal{M}$ after deleting a row or column). If $\mathcal{M}$ is non-trivial, (i.e., does not contain every possible matrix), then there exists an absolute constant $c\in\N$ such that the VC-dimension of any matrix in $\mathcal{M}$ is at most $c$.}

Fact \ref{thm: structural characterization} provides a structural characterization of some matrices with constant VC-dimension, and it is an analogous result to the fact that non-trivial hereditary classes of graphs have low VC-dimension. To prove Fact \ref{thm: structural characterization}, we introduce the notion of a bipartisation of a graph.

\begin{definition}A bipartite graph $H = (A \cup B, E)$ is a bipartisation of graph $G$ if removing all edges in $A$ and $B$ in $G$ yields $H$ for some
partition $A, B$ of $V (G)$. A graph class $\mathcal{G}$ is said to be hereditary if it is closed under induced subgraphs.\end{definition}

\begin{lemma}[{Lemma 3.4 of \cite{Bousquet_2015}}]\label{bousquet lemma}
    For any hereditary class $\mathcal{C}$ of graphs and for any bipartite graph $H$, if the graphs in $\mathcal{C}$ have infinite VC-dimension, then $\mathcal{C}$ contains a graph $G$ whose bipartisation is $H$.
\end{lemma}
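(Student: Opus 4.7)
\textbf{Proof proposal for Lemma \ref{bousquet lemma}.}
The plan is to directly construct the desired graph. Write the bipartite graph as $H=(P\cup Q,E_H)$ with $|P|=a$, $|Q|=b$, and set $n\coloneqq a+b$. The aim is to produce a graph $G\in\mathcal{C}$ on a vertex set $A\cup B$ with $|A|=a$, $|B|=b$, together with a bijection between $A$ and $P$ and between $B$ and $Q$ under which the set of $A$--$B$ edges of $G$ is exactly $E_H$; the bipartisation of $G$ with partition $(A,B)$ is then $H$. The strategy is to realize $A$ as (part of) a shattered set inside some large-VC-dimension graph of $\mathcal{C}$, and to realize each vertex of $B$ as a witness for a prescribed trace on $A$. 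The reason $\mathcal{C}$'s being hereditary matters is that the resulting graph will be obtained as an induced subgraph of a graph from $\mathcal{C}$.

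Since the graphs in $\mathcal{C}$ have infinite VC-dimension, pick $d\coloneqq a+\lceil\log_2 n\rceil+1$ and select $G_0\in\mathcal{C}$ together with a shattered set $S\subseteq V(G_0)$ of size $d$. For each subset $T\subseteq S$ fix a witness vertex $w_T\in V(G_0)$ with $N(w_T)\cap S=T$; these $2^d$ witnesses are pairwise distinct because they have pairwise distinct $S$-traces. Now fix any $a$-subset $A\subseteq S$ and, for each $T'\subseteq A$, form the bucket
\[
W_{T'}\coloneqq\{w_T:T\subseteq S,\ T\cap A=T'\}.
\]
Each bucket has size exactly $2^{d-a}\ge 2n$, so each bucket substantially exceeds the total number of vertices we will ever need to avoid.

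Choose any bijection identifying $A$ with $P$. For each $q\in Q$, let $T'_q\subseteq A$ be the subset corresponding to $N_H(q)\subseteq P$, and greedily pick $v_q\in W_{T'_q}$ subject to $v_q\notin A$ and $v_q\notin\{v_{q'}:q'\neq q\text{ already chosen}\}$; this is possible because $|W_{T'_q}|\ge 2n$ and at most $a+b-1<n$ vertices must be excluded. Define $B\coloneqq\{v_q:q\in Q\}$ and let $G\coloneqq G_0[A\cup B]$. Since $\mathcal{C}$ is hereditary, $G\in\mathcal{C}$. By construction, for every $a\in A$ and $q\in Q$ we have $aq\in E(G)\iff a\in N(v_q)\cap A=T'_q\iff aq\in E_H$ (under the identifications), so deleting the within-$A$ and within-$B$ edges of $G$ gives exactly $H$, i.e., $H$ is the bipartisation of $G$, as required.

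The main obstacle is that raw shattering supplies only one witness per subset, whereas $H$ may demand many vertices in $B$ with the same neighborhood in $A$ (e.g., twin vertices in $Q$), and those vertices must also be distinct from $A$ itself. The remedy is to overshoot: shatter a set $S$ that is larger than $A$ by an additive $\log_2 n+1$, so that each $A$-trace is realized by $2^{d-a}$ distinct witnesses in $G_0$, giving enough slack to perform the greedy distinct-picks. Everything else is bookkeeping, and heredity of $\mathcal{C}$ is what lets us pass from the ambient $G_0$ to the induced subgraph $G$ that realizes $H$.
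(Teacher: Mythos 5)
Your proof is correct. Note that the paper itself gives no argument for this statement --- it is quoted verbatim as Lemma 3.4 of \cite{Bousquet_2015} --- so there is no internal proof to compare against; what you have written is a valid self-contained derivation of the cited result. The key quantitative step checks out: taking $d=a+\lceil\log_2 n\rceil+1$ makes every bucket $W_{T'}$ consist of exactly $2^{d-a}\ge 2n$ pairwise distinct witnesses (distinct subsets of $S$ have distinct $S$-traces, hence distinct witnesses, and the buckets partition the witnesses according to $T\cap A$), while the greedy selection only ever needs to avoid at most $a+b-1<2n$ vertices, so a fresh witness with the prescribed trace on $A$ always exists; heredity then lets you pass to $G_0[A\cup B]$, whose $A$--$B$ edges realize $E_H$ exactly, and the within-$A$ and within-$B$ edges are irrelevant since bipartisation deletes them. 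Two small points worth making explicit if you polish this: ``infinite VC-dimension'' should be read as the class having unbounded VC-dimension (so that a shattered set of the required finite size $d$ exists in some $G_0\in\mathcal{C}$), and your argument is insensitive to whether shattering is taken with respect to open or closed neighborhoods, precisely because you force $v_q\notin A$, so $N[v_q]\cap A=N(v_q)\cap A$ in either convention.
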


By transposition, we get the following corollary:

\begin{corollary}\label{cor:boundedvc}
    The equivalent contrapositive statement is that for any hereditary class $\mathcal{C}$ of graphs and for any bipartite graph $H$, if $C$ does not contain a graph $G$ whose bipartisation is $H = (A\cup B, E)$ where $A\cup B = V(G)$, then the graphs in $C$ have finite $\mathrm{VC}$-dimension.
\end{corollary}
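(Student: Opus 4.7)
The statement is explicitly labeled as the contrapositive of the preceding \Cref{bousquet lemma} (Lemma 3.4 of \cite{Bousquet_2015}), so the plan is simply to verify that the two statements are indeed logically contrapositive and observe that no additional work is required. I will write out the implication of Lemma 3.4 symbolically as $P \Rightarrow Q$, where $P$ is ``graphs in $\mathcal{C}$ have infinite VC-dimension'' and $Q$ is ``$\mathcal{C}$ contains a graph $G$ whose bipartisation is $H$,'' both quantified over an arbitrary hereditary class $\mathcal{C}$ and an arbitrary bipartite graph $H$. The contrapositive $\neg Q \Rightarrow \neg P$ then reads: for every hereditary $\mathcal{C}$ and every bipartite $H$, if $\mathcal{C}$ omits every graph whose bipartisation is $H$, then the graphs in $\mathcal{C}$ do not have infinite VC-dimension, i.e.\ they have finite VC-dimension. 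This is precisely the statement of \Cref{cor:boundedvc}.

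The only (extremely minor) point to check is the dichotomy ``infinite VC-dimension vs.\ finite VC-dimension''; since any class of finite 0/1-structures has a well-defined VC-dimension that is either a nonnegative integer or $+\infty$, negating ``infinite'' yields ``finite,'' so the dichotomy is clean. I would not expect any obstacle here: the proof is a one-line invocation of classical propositional logic applied to a cited lemma. I will therefore present the proof as a single sentence stating that the claim is immediate from \Cref{bousquet lemma} by contraposition.
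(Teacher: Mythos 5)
Your proposal is correct and matches the paper exactly: the paper introduces the corollary with the single remark ``By transposition, we get the following corollary,'' i.e.\ it is likewise treated as an immediate contrapositive of \Cref{bousquet lemma}. Your observation that negating ``infinite VC-dimension'' cleanly yields ``finite VC-dimension'' is the only point worth noting, and it is handled correctly.
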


\begin{lemma}
The $\mathrm{VC}$-dimension of any matrix in $\mathcal{M}$ is upper bounded by a finite constant.
\end{lemma}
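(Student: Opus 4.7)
The plan is to prove the contrapositive: if the VC-dimensions of matrices in $\mathcal{M}$ are not bounded by any constant, then $\mathcal{M}$ must contain every $0/1$ matrix, contradicting non-triviality. Throughout I identify matrices up to row/column permutation, which is the natural convention in the set-system (equivalently, bipartite-graph) framework implicit in Corollary \ref{cor:boundedvc} and in the definition of matrix VC-dimension.

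First, I would extract the shattering matrix $\Sigma_k \in \{0,1\}^{2^k \times k}$, defined by $(\Sigma_k)_{T,j} = 1$ iff $j \in T$ for $T \subseteq [k]$, as a member of $\mathcal{M}$ for every $k$. By hypothesis, for each $k$ there is some $\mM_k \in \mathcal{M}$ with $\mathrm{VC}(\mM_k) \geq k$, so $\mM_k$ has $k$ shattered columns together with $2^k$ witnessing rows realizing all possible $0/1$-patterns on these columns. Hereditary closure lets me delete the remaining rows and columns, yielding $\Sigma_k \in \mathcal{M}$.

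Next, I would show that every $0/1$ matrix $\mA \in \{0,1\}^{m \times n}$ is a submatrix of $\Sigma_k$ for $k := n + \lceil \log_2 m \rceil$. The key observation is that restricting $\Sigma_k$ to any fixed $n$ columns produces a matrix in which every binary vector of length $n$ appears as a row with multiplicity exactly $2^{k-n} \geq m$. Hence one can select $m$ distinct rows of $\Sigma_k$ whose column-restrictions match the rows of $\mA$ in order, respecting all required multiplicities. Hereditary closure then yields $\mA \in \mathcal{M}$ for every $\mA$, contradicting non-triviality.

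The main obstacle is the row-selection step when $\mA$ has many duplicate rows; the multiplicity count $2^{k-n} \geq m$ resolves this, provided we are allowed to reorder rows and columns when identifying submatrices. As an alternative route, one could invoke Corollary \ref{cor:boundedvc} directly by associating $\mathcal{M}$ with a hereditary class of bipartite graphs $\mathcal{C}$ (via biadjacency, forgetting the bipartition), and exhibiting a bipartite graph $H^* \notin \mathcal{M}$ excluded as a bipartisation of any $G \in \mathcal{C}$. The delicate step there is that ``$H^*$ is not a bipartisation of any $G \in \mathcal{C}$'' is strictly stronger than ``$H^* \notin \mathcal{C}$,'' since non-canonical bipartitions of some $G \in \mathcal{C}$ could in principle yield $H^*$; this can be remedied by choosing $H^*$ with a sufficiently rigid structure (e.g., connected and sparse enough) so that only the canonical bipartition of $G$ could realize $H^*$, forcing $G \cong H^*$ and thereby the desired exclusion.
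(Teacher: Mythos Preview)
Your primary argument is correct and more elementary than the paper's: you bypass Corollary~\ref{cor:boundedvc} (and the Bousquet--Thomass\'e lemma behind it) entirely, arguing directly that unbounded VC-dimension forces every shattering matrix $\Sigma_k$ into $\mathcal{M}$, and then that every $0/1$ matrix sits inside some $\Sigma_k$ via the multiplicity count $2^{k-n}\ge m$. This is a self-contained contradiction requiring no external structural graph theory. The paper instead follows your alternative route through Corollary~\ref{cor:boundedvc}, and resolves the ``delicate step'' you flag not by a rigidity/sparsity hypothesis on $H^*$ but by an explicit gadget: starting from any bipartite $H\notin\mathcal{M}$, it forms $H'$ by adjoining one new left vertex adjacent to all right vertices and one new right vertex adjacent to all left vertices. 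These two universal vertices pin down the bipartition of any bipartite $G$ realizing $H'$ as a bipartisation, forcing $G=H'\notin\mathcal{M}$. Your direct route is shorter; the paper's route situates the statement within the Bousquet--Thomass\'e framework and gives a concrete witness for the excluded bipartisation.

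One remark: your embedding of an arbitrary $\mA$ into $\Sigma_k$ genuinely relies on the row/column-permutation convention you adopt---without it, the row order of the extracted shattering matrix is adversarial and the greedy selection of $m$ ordered rows can fail. The paper's proof makes the same identification implicitly by passing to bipartite graphs, so your convention is consistent with the intended reading of Fact~\ref{thm: structural characterization}.
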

\begin{proof} Since $\mM\in\cM$ is Boolean, we can treat it as a bipartite graph $G$, with rows representing left vertices and columns representing right vertices. In particular, we can treat $\cM$ as a class of bipartite graphs closed under vertex deletion (row deletion = deleting a left vertex, column deletion = deleting a right vertex).
Thus we can apply \cref{bousquet lemma} to $\cM$. 

Since $\cM$ is non-trivial, there must be some graph $H$ not contained in it.
For this excluded graph $H$, define $H'$ as the same graph but add one extra left vertex and one extra right vertex. Connect this new left vertex to every vertex on the right, and connect the new right vertex to every vertex on the left.
Note that since $H$ was not contained in $\cM$ and it is closed under vertex deletion, $H'$ also cannot be contained in $\cM$.

Now assume there is $G \in \cM$ that contains $H'$ as bipartization. Then there is a partition $A,B$ of $V$, such that keeping only the edges between $A$ and $B$ of $G$, results in $H'$.
Because of the left vertex in $H'$ that connects to all right vertices, and the right vertex connecting to all left vertices, we must have that $A$ are all left and $B$ are all the right vertices of $G$ (or the other way around).
But that means no edges were removed from $G$ when restricting to edges between $A$ and $B$, since a bipartite graph by definition only has edges connecting left and right vertices. So $G=H'$ and $H' \in \cM$. This is a contradiction because by construction of $H'$, it does not exist in $\cM$.

We conclude, that set of bipartite graphs corresponding to $\cM$ does not contain any graph whose bipartisation is $H'$. Hence by \Cref{cor:boundedvc} the VC-dimension must be a finite constant. Note that this is equivalent to the VC-dimension of the adjacency matrices being bounded, but $\cM$ are not adjacency matrices. The adjacency matrices are of form
$$
\begin{bmatrix}
    0 & \mM \\
    \mM^\top & 0
\end{bmatrix}
$$
for matrices $\mM\in\cM$.
We already argued that each adjacency matrix has constant VC-dimension, and since removing rows does not increase the VC-dimension (it is equivalent to deleting sets), this implies $\mM$ also has constant VC-dimension.
In conclusion, each matrix in $\cM$ has constant VC-dimension.
\end{proof}

\subsection{Examples of low VC-dimension matrices}

We provide some broad families of matrices with constant VC dimension. 

\textbf{Adjacency matrices of $H$-minor free graphs.} \cite{Eppstein1995} showed that $H$-minor free directed graphs (where the minor can be obtained through vertex deletion, edge deletion, and edge contraction) has $\mathrm{VC}$-dimension at most $|V(H)|-1$. As an immediate consequence, since bipartite-graphs are triangle ($K_3$) free, the VC-dimension of any bipartite graph is $2$. As a consequence, given any adjacency matrix $\mA$ of a $H$-minor free graph $G$, for any $k\geq 1$, $\mA^k$ is the adjacency matrix of a $H$-minor free graph $G'$ (since $\mA^k$ corresponds to the reachability graph where an edge $(u,v)\in E(G')$ corresponds to the existence of a $k$-length walk between $u$ and $v$ in $G$). Therefore, the VC-dimension of $\mA^k$ is also at most $|V(H)|-1$.

\textbf{Adjacency matrices of interval graphs.} Interval graphs are the intersection graphs of a family of intervals on the real line. Then, since the hypothesis class of intervals on the real line $\cH = \{\mathbbm{1}_{a,b}: a\leq x\leq b\}$ has VC-dimension $2$ (any two points can be shattered by choosing an interval that includes one or both points, but no set of three points can always be shattered), every interval graph has $\mathrm{VC}$-dimension at most $2$.

\textbf{Boolean kernel matrices.} A matrix where each column $i$ (and each row) receives some label $\ell_i \in L$ from some (possibly infinite) set of possible labels $L$. Then, suppose each entry of $\mM$ satisfies $\mM_{i,j} = f(\ell_i, \ell_j)$ for some function $f:L\to \{0,1\}$. For example, the labels could be points in $L=\R^d$ with $f$ being indicator that the distance is at most some threshold.
If $f,L$ is fixed (i.e., independent of the number of rows/columns) then these graphs are closed under row/column deletion and thus have bounded VC-dimension by \Cref{thm: structural characterization}.

\textbf{Shortest-path structures.} Let $G$ be an undirected graph with non-negative edge weights and let $S$ be a subset of its shortest paths such that, for every pair $(u, v)$ of distinct vertices, $S$ contains exactly one shortest path between $u$ and $v$. \cite{article} defines a range space associated with $S$ and proves that its VC dimension is 2. 

\textbf{Adjacency matrices of planar graphs. } Let $G$ be a planar graph. Then, there is a planar drawing of $G$, and deleting vertices and edges retains this property. Therefore, these graphs are closed under row/column deletion and thus have bounded VC-dimension by \Cref{thm: structural characterization}.

\textbf{Adjacency matrices of semi-algebraic graphs of bounded description complexity.} By the Milnor-Thom theorem in real algebraic geometry \citep{Matousek2002} and Assouad's theorem \citep{Assouad1983DensitED}, semi-algebraic graphs of bounded description complexity have constant VC-dimensions. In turn, this extends to adjacency matrices of string graphs where every two curves intersect.

\textbf{Practical matrices.} \cite{coudert_et_al:LIPIcs.SEA.2024.8} computes the VC-dimension of families of graphs that arise in practice, from protein interaction networks to autonomous Internet systems. Even for such graphs with millions of nodes, the VC-dimension of the graphs are typically between $3$ and $8$.

\section{Numerical Simulations}

The algorithm from \Cref{thm: omv_constant_vc} has previously been described in \cite{BjorklundL01} and \cite{alves2024acceleratinggraphneuralnetworks}. 
The latter already experimentally verified the efficiency of the algorithm on real-world data sets and observed that it beats the naive quadratic time matrix vector multiplication.
Our work proves theoretic guarantees for this speed-up parameterized by the (corrupted) VC-dimension. 
While the main contribution of this work is to have theoretic guarantees and explanation for why the algorithm is efficient in practice, we here complement the result with a few experiments\footnote{The code for the experiments is available at \url{https://github.com/emiletimothy/structural_complexity_of_matrix_vector_multiplication}}.
As we give theoretic bounds parameterized by $d$, our experiments focus on the $d$-dependence too.

\textbf{Matrix Generation.}
Since we want to study the complexity dependence on $d$, we fix the matrix dimension to $n=2^{12}=4096$ and run experiments for $d=2,...,12$. The maximum $d$ is 12 because the VC-dimension $\le \log(n)=12$ so we do not need to consider $d>\log(n)$. The matrix size $n$ was picked due to execution time constraints.
The $n\times n$ Binary matrices are generated by sampling i.i.d~uniform $a^{(i)}\in[0,1]^d, b^{(i)}\in [0,d/2]$ for $i=1,...,n$ and $x^{(j)} \in [0,1]^d$ for $j=1,...,n$, and letting $\mM_{i,j} = \mathbbm{1}_{\{a^{(i)\top}x^{(j)} \ge b^{(i)}\}}$. The range for the $b^{(i)}$ was chosen so that neither $\mM$ nor $1-\mM$ are sparse, which otherwise would allow for naive matrix multiplication to already beat $O(n^2)$ time.
The set system of linear classifiers has VC-dimension $d$, so the corrupted VC-dimension of $\mM$ is upper bounded by $d$. This implies that the respective MST has weight at most $\tilde O(n^{2-1/d})$.
\Cref{fig:mst} shows the observed weight of the MST relative to the ambient dimension $d$.
The MST was computed exactly since easier to implement and the preprocessing time complexity is not focus of this work.

\textbf{Complexity.}
For the matrix vector products we generate i.i.d.~uniform vectors $v\in[0,1]^n$.
For each $d$, we compute $1000$ matrix vector products and measure the run time. \Cref{fig:time} shows the mean and standard deviation of the observed time.
We also compare the time to the naive numpy matrix vector multiplication. \Cref{fig:time} shows that, since matrix size $n$ is fixed, the time complexity of numpy is constant for different $d$, whereas the MST-based algorithm is substantially faster for small $d$.
As $d$ increases, the gap gets smaller as one would expect, given the $\tilde O(n^{2-1/d})$ theoretic upper bound. 
The experiments were conducted on an Apple M4 Pro macOS model with 12 physical cores / threads, and 48 GB unified memory RAM, and an Apple Accelerate Framework BLAS backend. The total amount of time to run the experiments was $\approx 30$ minutes. No GPU acceleration was used; all matrix operations ran on CPU.

\begin{figure}
\centering
\begin{tikzpicture}
        \begin{axis}[
            xlabel={Ambient Dimension of Set Systenm},
            ylabel={MST column weight},
        ]
            \addplot[only marks, mark=*] coordinates {
        (2, 3069)
        (3, 59278)
        (4, 149067)
        (5, 237421)
        (6, 288724)
        (7, 335294)
        (8, 368577)
        (9, 391348)
        (10, 409615)
        (11, 410253)
        (12, 423316)
    };
        \end{axis}
    \end{tikzpicture}
    \caption{\label{fig:mst}Weight of the $\Delta$-labeled spanning tree for $2^{12}\times 2^{12}$ matrices generated from halfspaces in $d$ dimensions.}
\end{figure}
\begin{figure}
\centering
        \begin{tikzpicture}
        \begin{axis}[
            xlabel={Ambient Dimension of Set System},
            ylabel={Runtime (ms)},
        ]
            \addplot[only marks, mark=*,error bars/.cd,
            y dir=both,
            y explicit  
        ] coordinates {
        (2.1, 0.0272) +- (0,0.0076)
        (3.1, 0.0867) +- (0,0.0182)
        (4.1, 0.1691) +- (0,0.0274)
        (5.1, 0.2852) +- (0,0.0240)
        (6.1, 0.3775) +- (0,0.0477)
        (7.1, 0.4360) +- (0,0.1047)
        (8.1, 0.4256) +- (0,0.0483)
        (9.1, 0.4581) +- (0,0.0628)
        (10.1, 0.5840) +- (0,0.0638)
        (11.1, 0.5753) +- (0,0.0821)
        (12.1, 0.6141) +- (0,0.2068)
    };
        \addplot[only marks, mark=o,error bars/.cd,
            y dir=both,
            y explicit  
        ] coordinates {
        (2, 0.6246) +- (0,0.0251)
        (3, 0.6256) +- (0,0.0139)
        (4, 0.6216) +- (0,0.0175)
        (5, 0.6252) +- (0,0.0259)
        (6, 0.6371) +- (0,0.0237)
        (7, 0.6672) +- (0,0.0756)
        (8, 0.6348) +- (0,0.0290)
        (9, 0.6350) +- (0,0.0325)
        (10, 0.6738) +- (0,0.0448)
        (11, 0.6788) +- (0,0.0220)
        (12, 0.7319) +- (0,0.0266) 
    };
        \end{axis}
    \end{tikzpicture}
    \caption{\label{fig:time}Average run time of the MST-based matrix vector algorithm (bold) and naive numpy matrix vector multiplication (blank), with 1-sigma error bars. Horizontal axis is the dimension of the points and halfspaces used to construct the matrix.}
\end{figure}

\end{document}